\title{Eliciting Informed Preferences\footnote{
		We thank Heetak Shah for excellent research assistance. For helpful comments and discussions, we are grateful to Mohammad Akbarpour, Ian Ball, Arjada Bardhi, Alex Bloedel, Lanier Benkard, Ben Brooks, Andrew Caplin, Juan Carrillo, Shuchi Chawla, Mark Dean, Eddie Dekel, Wioletta Dziuda, Jon Eguia, Drew Fudenberg, Samuel Goldberg, Duarte Gonçalves, Jason Hartline, Tetsuya Hoshino, En Hua Hu, Ravi Jagadeesan, Navin Kartik, Andreas Kleiner, Arvind Krishnamurthy, Rohit Lamba, Nicolas Lambert, Evi Micha, Axel Niemeyer, Michael Ostrovsky, Fernando Payr\'{o}, Antonio Penta, Jacobo Perego, Agathe Pernoud, Harry Pei, Phil Reny, Doron Ravid, Joseph Root, Marzena Rostek, Yuliy Sannikov, Grant Schoenebeck, Ilya Segal, Marciano Siniscalchi, Phillip Strack, Bruno Strulovici, Andrzej Skrzypacz, Takuo Sugaya, Jo\~{a}o Thereze, Quitz\'{e} Valenzuela-Stookey, Dong Wei, Weijie Zhong, and three anonymous referees at EC. We also thank audiences at Stanford, the Conference in Honor of Eddie Dekel, the ASU Annual Economic Theory Conference, the Cowles Conference on Economic Theory, DETC, MSU, NYU, USC, VSET, CETC, BRIC, D-TEA, EC, and the Stony Brook International Conference on Game Theory.
}}
\author{Modibo Camara\footnote{Graduate School of Business, Stanford University. Email: mcamara@stanford.edu.}\quad\quad\quad Nicole Immorlica\footnote{Yale University and Microsoft Research. Email: nicimm@gmail.com}\quad\quad\quad Brendan Lucier\footnote{Microsoft Research. Email: brlucier@microsoft.com}}
\begin{document}
	
	\begin{titlepage}
		\maketitle
		
		\begin{abstract}

            In many settings -- like market research and social choice -- people may be presented with unfamiliar options.
            Classical mechanisms may perform poorly because they fail to incentivize people to learn about these options, or worse, encourage counterproductive information acquisition.
            We formalize this problem in a model of robust mechanism design where agents find it costly to learn about their values for a product or policy.
            We identify sharp limits on the designer's ability to elicit, or learn about, these values.
            Where these limits do not bind, we propose two-stage mechanisms that are detail-free and robust: the second stage is a classical mechanism and the first stage asks participants to predict the results of the second stage.
            
			\bigbreak\noindent\textit{Keywords}: Preference elicitation; robust mechanism design; costly information.

		\end{abstract}
		
		\thispagestyle{empty}
		\newpage
		\setcounter{tocdepth}{2}
		\tableofcontents
		\addtocontents{toc}{\protect\thispagestyle{empty}}
		\thispagestyle{empty}
		%\pagenumbering{gobble}
		
	\end{titlepage}
	
	\section{Introduction}

People often find it costly to process information about, for example, the products they consume and the policies that affect them.
As a result, their choices may not directly reveal the preferences they would have had if they had processed all the information available to them.
To what extent can we still learn about a population's preferences through careful incentive design?

Learning about a population's preferences (\emph{after} they have acquired information) is important for many applications.
For example, take market research. 
Suppose that OpenAI is about to launch a product called GPT-5.
In the present, consumers do not know how much value they would get from GPT-5.
They find it costly to learn about and test the product.
In the future, consumers are likely to be better informed.
Can we design an experiment that helps OpenAI forecast future demand?

To address this question, we develop a model of robust mechanism design with information acquisition.
The planner is powerful: he can commit to any mechanism, spend as much money as he likes, and recruit as many agents as he likes.
Agents are sampled from a large population, and do not necessarily know their values for an alternative.
The planner does not know the value distributions 
%of these values, 
or how they are correlated across agents.
Agents can learn about their own (and potentially others') values by acquiring costly signals.
We impose little structure on the information acquisition technology, except that agents can learn their own values at finite cost.

We showcase the power of our model with two related applications. First, we study market research.
It can be difficult to elicit willingness to pay for products that consumers are not already familiar with (e.g., \cite{YTJ18}; \cite{CZ21}; \cite{MS23}), which makes it difficult to forecast demand for new products.
We propose a mechanism that presents agents with an incentivized survey and asks them, before completing the survey, to predict the survey results.
If the bonus for accurate predictions is large, this mechanism identifies the population's average value in favorable equilibria, and offers meaningful guarantees even in unfavorable equilibria (Theorem \ref{T2}).
But there are limits to what the planner can do; it is impossible to match these guarantees when estimating statistics like quantity demanded or revenue-maximizing price (Theorem \ref{T1}).

Second, we study social choice.
Empirical evidence suggests many voters are poorly informed (e.g., \cite{DK96}; \cite{AP24}) and would vote differently if they were better informed (e.g., \cite{Bartels96}; \cite{FM14}).
We propose a mechanism that provides better incentives for voters (e.g., in a committee or citizen's assembly) to become informed.
It selects nearly efficient outcomes in favorable equilibria, and provides meaningful guarantees even in unfavorable equilibria (Theorem \ref{T3}).
In a special case, we can motivate a simple mechanism: majority rule paired with subsidized election betting.

%We now discuss the model and results in more detail.

\paragraph{Model.}

A planner asks a random sample of $n$ agents to participate in a mechanism.
There is a product and each agent $i$ derives value $v_i\in[v_L,v_H]$ from that product.
Agents may be heterogeneous and their values may be correlated in complicated ways.
The planner does not know the distribution of values in the population.

Initially, agents do not know their own values.
Instead, each agent $i$ has some private information and may learn more -- both about her own value $v_i$ and perhaps about others' values $v_{-i}$ -- by acquiring costly signals.
We assume that each agent can learn her value $v_i$ at some finite cost.
Agents may also have access to additional, more complex signals that are unknown to the planner (as in e.g., \cite{Carroll19}).

The planner designs a mechanism that assigns an allocation $x_i$ and transfers $t_i$ to each agent.
He seeks mechanisms that perform well (according to objectives that we describe later), for large sample sizes $n\to\infty$, and for all distributions of values and information acquisition technologies that satisfy our assumptions.
We consider both best-case equilibria, as well as worst-case equilibria (as in e.g., \cite{Maskin99}).

The planner is empowered: he can recruit as many agents and spend as much money as he likes.
In particular, the planner can spend as much money per agent as he likes.
Although this is a strong assumption, it strengthens our negative result (Theorem \ref{T1}), and makes it possible to obtain clean positive results (Theorem \ref{T2} and \ref{T3}).

\paragraph{Market Research.}

As a first application, suppose the planner wants to forecast long-term demand for a new product.
For many products (like the hypothetical GPT-5), consumers may be initially unfamiliar with the product but then become familiar with it over time.
In that case, long-term demand is governed by the values $v_i$.

We find positive and negative results.
Our positive result concerns a planner who wants to estimate the value $v_i$ of an out-of-sample consumer $i$, subject to square loss.
The planner's estimate is \emph{ex-post optimal} if it matches his estimate in a first-best scenario, where the participants all knew their values $v_i$ and reported them truthfully.
His estimate is at least \emph{ex-ante optimal} if it is not worse than his estimate in a scenario where he knows the agents' common prior, but cannot run a mechanism.

Theorem \ref{T2} presents a sequence of mechanisms (indexed by the sample size $n$) that guarantees ex-post optimal estimates in favorable equilibria, and at least ex-ante optimal estimates in all equilibria.
By comparison, the Becker–DeGroot–Marschak (BDM) mechanism (a classical method for eliciting willingness to pay) can yield estimates that are not even ex-ante optimal in favorable equilibria.

These mechanisms are a relatively-straightforward modification of BDM, which we call \emph{BDM-with-betting}.
There are two stages.
In the second stage, agents report their willingness to pay in the BDM mechanism.
In the first stage, agents predict the average reported willingness to pay.
Accurate predictions are rewarded according to a proper scoring rule, which can be scaled up arbitrarily to strengthen incentives for information acquisition.
The planner's estimate is the average reported value.

The BDM-with-betting mechanism is naive in two senses.
First, unlike existing mechanisms with similar features (e.g., \cite{CM88}; \cite{MRZ05}), the planner does not use agent $i$'s prediction to infer her value $v_i$, or to incentivize truthful reporting.
Second, the mechanism incentivizes each agent $i$ to learn about \emph{others'} reported values, when ideally she would learn \emph{her own} value $v_i$.

Nonetheless, BDM-with-betting works because it incentivizes agents to acquire information up to the point where their reporting errors are uncorrelated.
If these errors -- the difference between an agent's value and her reported value -- are uncorrelated, the average reported value converges to the population's average value (i.e., the ex-post optimal estimate).
Essentially, betting restores the ``wisdom of the crowd'', which usually requires strong distributional assumptions.

Unfortunately, there are also limits to what the planner can accomplish.
For loss functions that are meaningfully different from square loss, Theorem \ref{T1} says that the planner cannot guarantee estimates that are at least ex-ante optimal in all equilibria, regardless of which mechanism he uses.
Together, Theorems \ref{T2}-\ref{T1} suggest that it may be easier to forecast the population's average value than other natural statistics, like the quantity demanded at a fixed price, or the revenue-maximizing price.

\paragraph{Social Choice.}

Next, suppose the planner chooses one of two options on behalf of a population.
He forms a committee (e.g., a citizen's assembly) of $n$ agents sampled from the population.
As before, he commits to a mechanism with transfers.

We are motivated by the problem of uninformed voting.
Existing electoral systems do not give voters much of a reason to become informed about the policies and candidates that appear on their ballot.
This is not just a theoretical concern.
Empirically, many voters are poorly informed (e.g., \cite{DK96}, \cite{AP24}), and making them better-informed could plausibly affect vote margins (e.g., \cite{LR97}, \cite{FM14}).

It turns out that the methods we developed to elicit preferences can be used to design social choice mechanisms that are robust to costly information processing.
Theorem \ref{T3} finds mechanisms that (i) choose the efficient outcome in favorable equilibria and, for any equilibrium, (ii) do no worse than simply choosing the ex-ante optimal alternative.
By comparison, standard voting procedures do not even guarantee (ii) in favorable equilibria (see e.g., \cite{NMS25}).

Under strong simplifying assumptions, we motivate the \emph{majority-rule-with-betting} mechanism, which does not involve vote buying.
There are two stages.
In the second stage, agents report their preferred alternative.
In the first stage, they predict the vote margin.
The planner chooses the alternative with majority support.\footnote{More precisely, the planner chooses this alternative with high probability. See Section \ref{S5-3} for details.}
Generous transfers are used to reward accurate predictions, but not to buy or sell votes.

There are two key reasons why this mechanism works.
The first, like before, is that betting restores the ``wisdom of the crowd''.
The second reason is that betting can promote truthful voting.
Recall that uninformed voters may not vote sincerely if they believe that how others vote may be informative about their own values (e.g., \cite{AB96}).
In our mechanism, an uninformed agent $i$ that believes her value $v_i$ is correlated with the vote margin is leaving money on the table.
She prefers to learn her value $v_i$ in order to better predict the vote margin.

\subsection{Related Literature}\label{S1-1}

%We compare and constrast our work with several existing literatures.

\paragraph{Preference Elicitation.}

We contribute to research on preference elicitation (e.g., \cite{BDM64}) by relaxing the assumption that individuals know their own preferences.
We are motivated by prior work in marketing (e.g., \cite{YTJ18}; \cite{CZ21}). In particular, \textcite{MS23} finds that the BDM mechanism may not be effective when information is costly.
We propose a modification to BDM that may strengthen incentives for information acquisition.

Our results also complement prior work that studies the implications of limited or costly attention for revealed preferences (e.g., \cite{MNO12}; \cite{MM14}; \cite{CMMS20}) and demand estimation (e.g., \cite{BCMT21}).
We explore the extent to which costly attention is a fundamental barrier to learning about preferences.

\paragraph{Social Choice.}

The literature on voter information establishes several stylized facts that motivate our work.
First, many voters are poorly informed, and better-informed voters may not be representative of the electorate (e.g., \cite{PP87}; \cite{DK96}; \cite{AP24}).
Second, giving publicly-available information to people can affect how they vote (e.g., \cite{LR97}; \cite{FM14}).
Third, well-informed voters vote differently than poorly-informed ones, after controlling for observables (e.g., \cite{Bartels96}; \cite{Gilens01}).
Fourth, voters respond to incentives to acquire information (e.g., \cite{Larcinese09}; \cite{Shineman18}; \cite{PL08}).
Fifth, more informed electorates are better at removing low-quality officials (e.g., \cite{Pande11}).

We build on a line of work on combating uninformed voting (e.g., \cite{Persico04}; \cite{GY08}; \cite{GS09}; \cite{Cai09}; \cite{Tyson16}).
Relative to this work, we show that a more powerful planner (e.g., with access to transfers) can achieve nearly first-best outcomes with limited prior knowledge.

We also build on studies of voting behavior when information is costly (e.g., \cite{Gersbach95}; \cite{Martinelli06}; \cite{FS06}; \cite{Vaeth25}; \cite{HG25}) or imperfect (e.g., \cite{FP96}; \cite{KF07}; \cite{Bhattacharya13}; \cite{NMS25}).
\textcite{BMP10} provide empirical evidence that poorly-informed voters may abstain or appear to vote against their own interests.
In theory, the mechanisms we propose alleviate these issues. % issues that arise when information is costly.

\paragraph{Peer Prediction.}

Peer prediction mechanisms elicit private information, typically by rewarding agents for making reports that are correlated with the reports of other agents (e.g., \cite{MRZ05}; \cite{DG13}; \cite{SAFP16}).

This literature is concerned with a problem that is harder than ours: how to design a transfer rule in order to elicit information.
In contrast, we design a transfer and allocation rule in order to elicit information.
This is a crucial difference, because it allows us to build on mechanisms like BDM that elicit willingness to pay by offering a product for sale.
If we restricted ourselves to mechanisms that only rely on transfers, we would run into existing impossibility results (e.g., \cite{ZC14}).\footnote{Similarly, the Bayesian truth serum (e.g., \cite{Prelec04}) tries to elicit information using only transfers. Although it incentivizes truthful reporting, it does not necessarily incentivize information acquisition.}

That raises the question: why can we not build on existing peer prediction prediction mechanisms to obtain our results?
First, we allow for rather general forms of information acquisition, which can undermine peer prediction mechanisms \parencite{GWL19}.
Second, many of our results refer to guarantees that hold in all equilibria.
To our knowledge, all peer prediction mechanisms have non-truthful equilibria, even after imposing natural refinements (e.g., \cite{DG13}).

Finally, two innovative papers rely on techniques used in peer prediction to design mechanisms that are robust to the possibility that agents do not know their own preferences (\cite{ST21}; \cite{Pakzad-Hurson22}).
Interestingly, these papers do not need to assume that agents can learn their payoff types in order to achieve efficiency.
We cannot use mechanisms from these papers to obtain our results -- primarily because they rely on particular information structures and secondarily because they feature multiple equilibria -- but very much share their goals.

\paragraph{Mechanism Design with Information Acquisition.}
Much of this literature focuses on different applications, like auctions (e.g., \cite{Persico00}; \cite{BSV09}), matching (e.g., \cite{ILLL20}), and settings with verifiable information (e.g., \cite{BDL24}).
We are especially motivated by two negative results.
First, \textcite{BV02} find that efficient mechanisms may not exist when information is costly and preferences are correlated.
Second, \textcite{PS24} find that planners with sufficiently-imprecise knowledge of preferences and the costs of information can only implement trivial social choice rules.

In particular, we draw from prior work that insists on robustness to unknown production or information acquisition technologies (e.g., \cite{Carroll15}; \cite{Carroll19}; \cite{DR24}).
We focus on a particular known technology: the revealing signal that allows an agent to learn her own value, but not necessarily anything else.

Finally, some prior work highlights how incentives to learn about others can undermine efficiency (e.g., \cite{GP24}).
In contrast, we obtain efficient social choice only after explicitly incentivizing agents to learn about each other.

\paragraph{Full Surplus Extraction.}

This literature on full surplus extraction asks when it is possible to incentivize agents to reveal their private types with no expected cost to the designer (e.g., \cite{CM88}).
Aside from the fact that our planner disregards costs, our work differs in three ways.
First, we do not assume that agents know their own payoff types.\footnote{\textcite{Bikhchandani10} considers full surplus extraction with information acquisition, but still assumes that agents know their own payoff types.}
Second, our mechanisms are robust (we do not need strong distributional assumptions) and detail-free (the planner does not need to know the distribution).\footnote{\textcite{FHHK21} allow for the possibility that the distribution is unknown, but still assume that there is a known finite set of possible distributions.}
Third, our mechanisms are simpler, in that we do not use agents' reported beliefs to make any inferences about their payoff types.

\paragraph{Proper Scoring Rules.}

This literature studies belief elicitation when the planner can make transfers that depend on the realized state.
In that case, there are transfers schemes that can incentivize agents to truthfully report their beliefs (e.g., \cite{McCarthy56}) and acquire information (e.g., \cite{LHSW22}).
We do not assume that the planner can make state-dependent transfers.
Nonetheless, proper scoring rules play an important role in our positive results; we use them to incentivize agents to acquire information that predicts summary statistics of other agents' reports.

We draw inspiration from \textcite{LPS08}.
Assuming that the planner can make transfers that depend on the state, they ask what kinds of statistics can be elicited by direct mechanisms.
We also find that certain statistics are easier to estimate than others, albeit in a different setting and allowing for indirect mechanisms.

	\section{Model}\label{S2}

A planner recruits $n$ agents, and can allocate a product to each agent.
Agent $i$'s allocation is $x_i\in\mathcal{X}=\{0,1\}$ and her value from the product is $v_i\in\mathcal{V}=[v_L,v_H]$.

As a running example, suppose the firm OpenAI develops a new product called GPT-5.
The outside option is an existing product called GPT-4, and each consumer $i$'s value $v_i$ captures how much she would benefit from upgrading GPT-4 to GPT-5.
Before GPT-5 is launched, the consumer does not know her value $v_i$, because she does not know enough about the new product's functionalities, use cases, flaws, etc.

	\subsection{Information Acquisition}\label{S2-1}

Agent $i$ has a \emph{type }$\theta_i\in\Theta$ that determines her value via the \emph{value function} $v:\Theta\to\mathcal{V}$.
She is initially uncertain about her type, but can learn by acquiring a \emph{signal }$s_i\in\mathcal{S}$.
Her cost of acquiring the signal given by the \emph{cost function}
$
c:\mathcal{S}\times\Theta\to\reals_+\cup\{\infty\}
$.

Types are drawn from a \emph{type distribution }$F_{\Theta\mid\omega}$ that depends on a hidden state $\omega\in\Omega$.
For example, the state might indicate how useful GPT-5 is for a given task.
The state is drawn from a \emph{state distribution} $F_{\Omega}$.
Let $F\in\Delta\left(\Omega\times\Theta\right)$ be the joint distribution of types and states.
Types $\theta_1,\ldots,\theta_n$ are i.i.d. conditional on the state $\omega$.
Essentially, the state acts as a correlating device.

We impose the following finiteness assumptions.

\begin{assume}\label{A7}\label{A2}
	The signal space $\mathcal{S}$ and the support of joint distribution $F$ are finite.
\end{assume}

A signal $s_i$ for agent $i$ maps the state $\omega$ and her type $\theta_i$ to a \emph{signal realization}.
It is convenient to let signal realizations be finite sequences of real numbers, so that
%\[
$s_i:\Omega\times\Theta\to\reals^*.$
%\]
One interpretation of the signal $s_i$ is as a deterministic function of random variables $(\omega,\theta_i)$.
Another interpretation is that the signal $s_i$ is itself a random variable, which maps the sample space $\Omega\times\Theta$ to the set $\reals^*$.

After observing 
%the 
signal realization $s_i(\omega,\theta_i)$, agent $i$ may learn something about the state $\omega$ and her type $\theta_i$.
For example, consider 
%the 
signal $s_i(\omega,\theta_i):=v(\theta_i)$.
After acquiring this signal, agent $i$ learns 
%something about her type: specifically, she learns 
her value $v_i=v(\theta_i)$.
If types are correlated with the state 
%$\omega$, 
then she may also update her beliefs about the state 
%$\omega$ 
and the other agents' types.
%$\theta_{-i}$ 
%of other agents.

In the running example, a consumer could learn about GPT-5 by using it.
She might think of different tasks that matter to her (e.g., writing, coding, translation, etc.) and evaluate how well it performs in each case.
Each task corresponds to a different signal, and GPT-5's performance at the task corresponds to a signal realization.
Learning about GPT-5's performance could cause her to learn directly about her own value $v_i$, and indirectly about other consumers' values $v_{-i}$.

This model of information acquisition may appear restrictive.
For example, each agent $i$ acquires exactly one signal, and initially knows nothing about her type $\theta_i$.
However, we can accommodate richer forms of information acquisition as follows.
\begin{enumerate}
	\item \emph{Acquiring multiple signals.}
	For any two signals $s_i,s'_i\in\mathcal{S}$, we can enrich the signal space by adding a combined signal, i.e.,
	\begin{equation}\label{E4}
		s''_i(\omega,\theta_i)=\left(s_i(\omega,\theta_i),s'_i(\omega,\theta_i)\right)
	\end{equation}
	
	\item \emph{Dynamic information acquisition.}
	Suppose that agent $i$ wants to acquire signal $s'_i$ if and only if the realization of signal $s_i$ is in a set $U\subseteq s_i(\Omega,\Theta)$.
	We can enrich the signal space by adding a dynamic signal, i.e.,
	\begin{equation}\label{E7}
		s''_i(\omega,\theta_i)=\begin{cases}
			\left(s_i(\omega,\theta_i),s'_i(\omega,\theta_i)\right)&s_i(\omega,\theta_i)\in U\\
			s_i(\omega,\theta_i)&\text{otherwise}
		\end{cases}
	\end{equation}
	Similarly, we can create dynamic signals that combine more than two signals.\footnote{When adding dynamic signals, we need to be careful to avoid violating Assumption \ref{A2}. This is not a problem when the state space $\Omega$ and type space $\Theta$ are finite. In that case, the number of subsets $U\subseteq s_i(\Omega,\Theta)$ will be finite. Therefore, the number of dynamic signals that combine two signals -- which correspond to different combinations of $(s_i,s'_i,U)$ -- will also be finite. We can extend this same argument when combining three or more signals. Even if the state and type spaces are infinite, we can use the finite support condition in Assumption \ref{A7} to restrict attention to finite subsets.}
	\item \emph{Initial information.}
	To represent agents' initial information, define a signal $s_i$ with cost $c(s_i,\theta_i)=0$ for all types $\theta_i$.
	Expand the signal space $\mathcal{S}$ by allowing all dynamic signals that combine $s_i$ with another signal $s'_i\in\mathcal{S}$. %, following \eqref{E7}.
\end{enumerate}

Finally, the \emph{model instance} $I=\left(\Omega,\mathcal{V},\Theta,\mathcal{S},F,v,c\right)\in\mathcal{I}$ collects all model primitives.
We assume that the agents know the instance $I$, but the planner does not.

	\subsection{Assumptions}\label{S2-4}

We maintain two key assumptions.
First, we assume that acquiring one signal does not prevent an agent from acquiring another one.
For example, evaluating GPT-5 on a coding task need not prevent a consumer from also evaluating GPT-5 on a translation task.
This may be violated if the consumer only has time to prepare one task.

\begin{assume}\label{A5}
	%Restrict attention to instances $I$ with the following structure.
	The set of signals $\mathcal{S}$ consists of base signals and combined signals.
	For every set of base signals $\left\{s^1_i,\ldots,s^\ell_i\right\}\subseteq\mathcal{S}$, there is a combined signal $s_i\in\mathcal{S}$ where
	\begin{equation}\label{E2}
		\forall(\omega,\theta_i)\in\Omega\times\Theta:\quad s_i\left(\omega,\theta_i\right)=\left(s^1_i(\omega,\theta_i),\ldots,s^\ell_i(\omega,\theta_i)\right)
	\end{equation}
	If these base signals have a finite cost, then the combined signal also has a finite cost, i.e.,
	\begin{equation}\label{E3}
		\forall \theta_i\in\Theta:\quad
		\sum_{j=1}^{\ell}c\left(s^j_i,\theta_i\right)<\infty\implies c\left(s_i,\theta_i\right)<\infty
	\end{equation}
\end{assume}

Second, we assume that agent $i$ can learn her own value $v_i$ at a finite cost.
Although this is a strong assumption, it is much weaker than the standard assumption that agents know their own payoff types.
It can also be weakened somewhat.\footnote{What we need is for each agent to be the ``authority'' on her own value. That is, there should \emph{not }be information that is relevant to agent $i$'s value $v_i$, not available to agent $i$, and yet available to agent $j\neq i$.}

\begin{defn}\label{D7}
	A signal $s_i$ is \emph{revealing} if agent $i$ learns her value $v_i$ after acquiring it, i.e.,
	\begin{equation}\label{E5}
		\var{v_i\mid s_i(\omega,\theta_i)}=0
	\end{equation}
\end{defn}

A revealing signal for agent $i$ does not necessarily reveal everything.
For example, it does not necessarily reveal the state $\omega$, or the values $v_{-i}$ of other agents.
For example, a consumer that dedicates an entire week to testing GPT-5 may get a pretty good sense of her value from that product, even if she does not necessarily learn how much value other consumers will derive from that product.

\begin{assume}\label{A6}
	The set of signals $\mathcal{S}$ includes a revealing signal $s_i$ that has finite cost.
	That is, for all types $\theta_i$, $c\left(s_i,\theta_i\right)<\infty$.
    %Restrict attention to instances $I$ that feature 
\end{assume}

That is, the planner knows that the agent can learn her value $v_i$ by acquiring the revealing signal $s_i$.
Of course, he might be concerned that the agent ends up choosing another signal $s'_i\in\mathcal{S}$, where she does not learn her value $v_i$.
Here, we follow prior work that assumes there are known actions available to agents but seeks robustness to other unknown actions (e.g., \cite{Carroll15}; \cite{Carroll19}).

	\subsection{Mechanism Design}\label{S2-2}

The planner commits to a mechanism that determines allocations $x_i$, transfers $t_i$, and an estimate $e$.
Although the estimate $e$ is not payoff-relevant for the agents, it is convenient to include it as part of the mechanism.

\begin{defn}\label{D2}
	Fix a set of message profiles $M=\left(M_i\right)_{i=1}^n$ and a set of possible estimates $\mathcal{E}$.
	A \emph{mechanism }$(\textbf{x},\textbf{t},\textbf{e})$ consists of allocation rules $\textbf{x}_i:M\to\Delta\left(\mathcal{X}\right)$, transfer rules $\textbf{t}_i:M\to\Delta\left(\reals\right)$, and an estimator $\textbf{e}:M\to\Delta(\mathcal{E})$.
\end{defn}

After the mechanism is announced, the game proceeds in three steps.
First, each agent $i$ chooses a signal $s_i$ to acquire.
Second, agent $i$ observes realization $s_i(\omega,\theta_i)$.
Third, agent $i$ sends a message $m_i$.
Let $m=(m_1,\ldots,m_n)$ be the \emph{message profile}.

\begin{defn}
	Agent $i$'s \emph{strategy} $(\textbf{s}_i,\textbf{m}_i)$ pairs a \emph{signal rule} $\textbf{s}_i\in\Delta\left(\mathcal{S}\right)$ with a \emph{message rule} $\textbf{m}_i:\mathcal{S}\times\reals^*\to\Delta\left(M_i\right)$ that maps a signal and realization  to a message distribution.

\end{defn}

%We use $\textbf{s}_i$ to refer to the realized signal $s_i$, as well as the distribution over signals.
%We do the same for messages $\textbf{m}_i\left(s_i,s_i(\omega,\theta_i)\right)$, allocations $\textbf{x}_i\left(m\right)$, and transfers $\textbf{t}_i\left(m\right)$.

Agent $i$'s utility depends on her values $v_i$, her allocation $x_i$, the transfers $t_i$ she receives, and her cost $c(s_i,\theta_i)$ of information acquisition.
That is,
\[
u_i\left(v_i,x_i,t_i,s_i,\theta_i\right)=v_ix_i+t_i-c\left(s_i,\theta_i\right)
\]
In turn, agent $i$'s expected utility from a strategy profile $(\textbf{s},\textbf{m})$ is
\[
U_i\left(\textbf{s},\textbf{m},\textbf{x},\textbf{t}\right)
	=\ex{
		u_i(v_i,x_i(m),t_i(m),s_i,\theta_i)
}
\]
\[\text{where}\:
s_i\sim\textbf{s}_i,
\: m_i\sim\textbf{m}_i(s_i,s_i(\omega,\theta_i)),
\: x_i\sim\textbf{x}_i(m),
\: t_i\sim\textbf{t}_i(m)
\]
The expectation is taken with respect to the joint distribution $F$, any randomness in the strategy profile $(\textbf{s},\textbf{m})$, and any randomness in the mechanism $(\textbf{x},\textbf{t})$.

\begin{defn}
	A strategy profile $(\textbf{s},\textbf{m})$ is a \emph{Bayes-Nash equilibrium} of mechanism $(\textbf{x},\textbf{t})$ if every agent $i$ prefers her strategy $(\textbf{s}_i,\textbf{m}_i)$ to every alternative strategy $\left(\textbf{s}'_i,\textbf{m}'_i\right)$.
	That is,
	\[
	U_i\left(\textbf{s},\textbf{m},\textbf{x},\textbf{t}\right)\geq U_i\left(\left(\textbf{s}'_i,\textbf{s}_{-i}\right),\left(\textbf{m}'_i,\textbf{m}_{-i}\right),\textbf{x},\textbf{t}\right)
	\]
\end{defn}

	\subsection{Planner's Problem}\label{S2-3}

The planner wants to minimize the expected loss of his estimate $e$.
The planner's \emph{loss function} $L:\mathcal{E}\times\Delta(\mathcal{V})\to\reals$ compares his estimate $e$ to the distribution of the values $v_i$.

The values $v_i$ are drawn i.i.d. from the conditional marginal distribution $F_{\mathcal{V}\mid\omega}$, which we obtain by pushing forward the type distribution $F_{\Theta\mid\omega}$ through the value function $v$.
We refer to $F_{\mathcal{V}\mid\omega}$ as the \emph{realized demand curve}.
The realized demand curve depends on the realized state $\omega$, which captures any relevant demand shifters.

The planner's expected loss from mechanism $(\textbf{x},\textbf{t},\textbf{e})$ in equilibrium $(\textbf{s},\textbf{m})$ is
\[
\ex{L\left(e,F_{\mathcal{V}\mid\omega}\right)}\quad\text{where}\:
s_i\sim\textbf{s}_i,
\: m_i\sim\textbf{m}_i(s_i,s_i(\omega,\theta_i)),\:
e\sim\textbf{e}(m)
\]

Our definition of loss functions is slightly more general than usual.
It captures a number of problems that require the planner to learn about the distribution of values in a population (Section \ref{S4}).
We can also adapt it to study social choice (Section \ref{S5}).

We compare the planner's expected loss to two performance benchmarks.
The \emph{ex-post benchmark} asks what the planner's expected loss would be if he knew both the instance $I$ and the state $\omega$.
This is the first-best scenario; the planner cannot do better.\footnote{To be clear, the ``ex post'' stage occurs after the realization of the state $\omega$. If the planner knows the instance $I$ and the realized state $\omega$, then he knows the realized demand curve $F_{\mathcal{V}\mid\omega}$.}

\begin{defn}
	The \emph{ex-post benchmark} $\overline{B}:\mathcal{I}\to\reals$ is
	\[
	\overline{B}(I)=\ex[I]{\min_{e\in\mathcal{E}}\ex[I]{L\left(e,F_{\mathcal{V}\mid\omega}\right)\mid\omega}}
	\]
\end{defn}

The \emph{ex-ante benchmark} asks what the planner's expected loss would be if he knew the instance $I$, but nothing else.
This is the best that the planner could hope to achieve in a scenario where agents do not acquire any signals.
We treat the ex-ante benchmark as a low bar that any compelling mechanism should be able to clear.

\begin{defn}
	The \emph{ex-ante benchmark} $\underline{B}:\mathcal{I}\to\reals$ is
	\[
	\underline{B}(I)=\min_{e\in\mathcal{E}}\ex[I]{L\left(e,F_{\mathcal{V}\mid\omega}\right)}
	\]
\end{defn}

A sequence of mechanisms $(\textbf{x}^n,\textbf{t}^n,\textbf{e}^n)$, indexed by the sample size $n$, \emph{guarantees }a benchmark if the planner's expected loss is less than the benchmark in the limit.

\begin{defn}\label{D15}
	A sequence of mechanisms $(\textbf{x}^n,\textbf{t}^n,\textbf{e}^n)$ \emph{guarantees }benchmark $B:\mathcal{I}\to\reals$ in \emph{favorable equilibria} if, for all $I\in\mathcal{I}$, there is a sequence of equilibria $(\textbf{s}^n,\textbf{m}^n)$ such that
	\begin{equation}\label{E70}
	\limsup_{n\to\infty}\ex[I]{L\left(e,F_{\mathcal{V}\mid\omega}\right)}\leq B(I)
	\:\text{where}\:
s_i\sim\textbf{s}_i,
\: m_i\sim\textbf{m}_i(s_i,s_i(\omega,\theta_i)),
\:e\sim\textbf{e}(m)
	\end{equation}
	It guarantees benchmark $B$ in \emph{all equilibria} if inequality \eqref{E70} holds for all instances $I\in\mathcal{I}$ and all sequences of equilibria $(\textbf{s}^n,\textbf{m}^n)$.
\end{defn}

This guarantee is demanding in the sense that it must hold over all instances $I$ that satisfy our assumptions.
In particular, we insist on robustness to a wide range of joint distributions $F$ and information acquisition technologies $\left(\mathcal{S},c\right)$.
Our main results hold even if we restrict attention to a much smaller set of instances.
Nonetheless, we insist on robustness to a large set of instances because that is what distinguishes the mechanisms that we propose.
It is often easy to find mechanisms that work well for a given instance $I$, but they tend to be fragile and hard to generalize.

Finally, we emphasize two key limitations of this model.
First, we do not impose a budget constraint; we also do not impose upper bounds on transfers to individual agents.
Second, we assume that the planner can recruit an arbitrarily-large sample of agents, and ignore finite-sample performance.
These modeling choices strengthen our negative results, and allow us to obtain clean positive results.	
However, taking budget and sampling constraints seriously is a natural direction for future work.

	\section{Market Research}\label{S4}

Consider a planner who wants to learn about long-term demand for a new product.
For many products (e.g., experience goods, new technologies, subscriptions), consumers that are unfamiliar with the product before launch may become familiar with it over time.
In that case, long-term demand is governed by the values $v_i$.

More concretely, take our running example.
Before product launch, consumers may not know their values $v_i$ for GPT-5.
However, they are likely to learn more in the years after the product launch.
Subscribers gain first-hand experience with GPT-5, and will not resubscribe if their value $v_i$ is too low.
Even non-subscribers are likely to become better-informed, as early adopters and critics share their experiences.

We ask whether the planner can conduct an experiment---before the product launch---that provides information about long-term demand.%
\footnote{This information is useful because firms must make a number of strategic decisions before product launch (related to e.g., marketing and inventory management) that can have long-term implications. Similarly, a fund deciding whether to invest in a firm cares about long-term demand for its products.}
To begin, consider a planner seeking a point estimate $e\in\mathcal{V}$ of the value $v_i$ of a consumer $i\notin\{1,\ldots,n\}$ that did not participate in the experiment.
He wants to minimize \emph{square loss}, i.e.,
\begin{equation*}\label{E71}
	L\left(e,F_{\mathcal{V}\mid\omega}\right)=\ex[F_{\mathcal{V}\mid\omega}]{(e-v_i)^2}
\end{equation*}

\begin{theorem}\label{T2}
	For square loss, there exists a sequence of mechanisms that:
	\begin{enumerate}
		\item Guarantees the ex-ante benchmark in all equilibria.
		\item Guarantees the ex-post benchmark in favorable equilibria.
	\end{enumerate}
\end{theorem}

The mechanism in Theorem \ref{T2} identifies the population's average value $\ex[F_{\mathcal{V}\mid\omega}]{v_i}$ in at least one equilibrium, and that statistic minimizes expected square loss.
In some instances $I$, it identifies this statistic in every equilibrium (see Remark \ref{R1}).
In others, there are ``bad equilibria'' where the mechanism fails to identify this statistic.
Indeed, no mechanism can rule out bad equilibria in all instances.

Theorem \ref{T2} also ensures that this mechanism is at least somewhat robust to the possibility that agents coordinate on unfavorable equilibria.
In every equilibrium, the planner's estimate will be at least as good as the best-possible estimate in a scenario where agents did not acquire any signals.
This may seem like a low bar, but we will see in Section \ref{S3-2} that classical mechanisms fail to meet this standard.

What if the planner is interested in loss functions other than square loss, or seeks estimates other than point estimates?
Unfortunately, this is more difficult.
It is not possible to extend Theorem \ref{T2} to loss functions that are \emph{not square-like}.

\begin{defn}\label{D8}
	A loss function $L$ is \emph{not square-like} if there exists a pair of constants $a,b\in\mathcal{V}$ where, for distributions $G=\textsc{Uniform}\{a,b\}$ and $G'=\textsc{Dirac}(\nicefrac{a+b}{2})$,
	\[
	\arg\min_{e\in\mathcal{E}}\ex{L\left(e,G\right)}\cap\arg\min_{e\in\mathcal{E}}\ex{L\left(e,G'\right)}=\emptyset
	\]
\end{defn}

Intuitively, a loss function is not square-like if the statistic that minimizes square loss is sensitive to the dispersion of the distribution $F_{\mathcal{V}\mid\omega}$.
Here are two examples.
\begin{enumerate}
	\item \emph{Quantity demanded.} Given a fixed price $p\in\reals_+$, the planner wants to estimate the probability that a consumer's value exceeds $p$, where
	\[
	L\left(e,F_{\mathcal{V}\mid\omega}\right)=\ex[F_{\mathcal{V}\mid\omega}]{(e-\textbf{1}(v_i\geq p))^2}
	\]
	\item \emph{Revenue-maximizing price.} The planner interprets estimate $e\in\reals_+$ as the price. He wants to estimate the price $e$ that maximizes revenue, where
	\[
	L\left(e,F_{\mathcal{V}\mid\omega}\right)=-\ex[F_{\mathcal{V}\mid\omega}]{e\cdot\textbf{1}(v_i\geq e)}
	\]
\end{enumerate}

\begin{theorem}\label{T1}
	Suppose the loss function $L$ is not square-like.
	Then there does not exist a sequence of mechanisms that guarantees the ex-ante benchmark in all equilibria.
\end{theorem}

Essentially, for many loss functions, any given mechanism either fails to minimize expected loss, or runs into severe equilibrium multiplicity (see Section \ref{S3-1}).
Together, Theorems \ref{T2}-\ref{T1} suggest that estimating statistics like quantity demanded and revenue-maximizing price is harder than estimating average willingness to pay.

The rest of this section provides intuition.
First, we describe the problems that existing methods for willingness-to-pay elicitation run into.
Next, we propose a new mechanism.
Finally, we provide intuition for Theorems \ref{T2}-\ref{T1} through examples.

	\subsection{BDM Mechanism}\label{S3-2}

We begin by analyzing the \emph{Becker–DeGroot–Marschak (BDM) mechanism}, which is widely-used in behavioral experiments to elicit willingness to pay.

\begin{defn}
	The \emph{BDM mechanism} $\left(\textbf{x},\textbf{t},\textbf{e}\right)$ has each agent $i$ report her value $\hat{v}_i\in\mathcal{V}$.
	She receives the product if her reported value exceeds a randomly-drawn price $p$, i.e.,
	\[
	\textbf{x}_i(\hat{v})=\textbf{\emph{1}}(\hat{v}_i\geq p)\quad\text{where}\quad p\sim\textsc{Uniform}[v_L,v_H]
	\]
	She is charged the price $p$ if she receives the product, i.e.,
	$
	\textbf{t}_i(\hat{v})=-p\cdot\textbf{x}_i(\hat{v})
	$.
	Finally, the estimator is the average reported value, i.e., $\textbf{e}(\hat{v})=\nicefrac{1}{n}\sum_{i=1}^n\hat{v}_i$.
\end{defn}

%It is not hard to see how t
The BDM mechanism might not meet the ex-post benchmark if 
%the cost of 
%acquiring information is 
signals are 
too costly.
%high. 
%relative to agents' uncertainty about their values.
Less obviously, BDM can also fail to meet the ex-ante benchmark.

\begin{prop}\label{P1}
	For square loss, the BDM mechanism does not guarantee the ex-ante benchmark, even in favorable equilibria.
\end{prop}
\begin{proof}
	We prove Proposition \ref{P1} using an example.
	%This is an example where there is aggregate uncertainty in reported values if and only if agents are uninformed.
	There are binary states, types, and values, where
	\[
	\omega=\textsc{Bernoulli}(0.5)\quad\text{and}\quad \theta_i=v_i\sim\textsc{Bernoulli}(0.5)\:\text{ i.i.d.}
	\]
	The estimate $e=0.5$ minimizes expected square loss, regardless of the state $\omega$.
	To meet the ex-ante benchmark, the planner must estimate $\textbf{e}(\hat{v})=0.5$.
	
	The signal space consists of a revealing signal $s_i^R$ and a free signal $s_i^F$.
	(Similar arguments apply if we include combined and dynamic signals.)
	The revealing signal $s^R_i$ has realization $v_i$ and costs $\bar{c}>1$.
	The free signal $s^F_i$ costs nothing and has 
    %the following 
    realization
	%\[
	$s^F_i(\omega,\theta_i)=\textbf{1}\left(v_i\geq\omega\right).$
	%\]
	That is, the free signal is uninformative if the state is $\omega=0$ and reveals $v_i$ if the state is $\omega=1$.
	%Remember that the a
    Recall agents do not initially know the state.
	
	The agents always acquire the free signal.
	However, they do not acquire the revealing signal, since the cost $\bar{c}$ exceeds their maximum values $v_i$.
	After observing the signal realization, agent $i$'s reported value is either
	\[
	\hat{v}_i=\ex{v_i\mid s^F_i(\omega,\theta_i)=0}=0\quad\text{or}\quad\hat{v}_i=\ex{v_i\mid s^F_i(\omega,\theta_i)=1}=\frac{2}{3}
	\]
	If the state is $\omega=0$, then all of the agents report $\hat{v}_i=2/3$.
	If the state is $\omega=1$, then roughly half of the agents report $\hat{v}_i=0$ and the other half report $\hat{v}_i=2/3$.
	
	Observe that the planner's estimate is suboptimal.
	The optimal estimate is $e=0.5$ regardless of the state $\omega$, whereas the actual estimate $\textbf{e}(\hat{v})$ is roughly $2/3$ when $\omega=0$ and $1/3$ when $\omega=1$.
	Therefore, the planner fails to meet the ex-ante benchmark.
\end{proof}

	\subsection{BDM-with-Betting Mechanism}\label{S4-1}

Although the BDM mechanism can perform quite poorly when information is costly, we propose a modification that obtains much better guarantees.

The \emph{BDM-with-betting mechanism} presents agents with an incentivized survey and asks them, before completing the survey, to predict the survey results.
The BDM mechanism is the incentivized survey, and the survey results are the average reported values.
Of course, it is not essential that we build on the BDM mechanism; we could also use alternatives like multiple price lists or randomized posted prices.

BDM-with-betting relies on \emph{proper scoring rules}.
These are rules that incentivize an agent $i$ to report her beliefs about some random variable.\footnote{Proper scoring rules tend to give higher scores to beliefs that assign higher probability to the observed value.
	There are many known proper scoring rules and they are easy to construct (e.g., \cite{McCarthy56}).
	We mostly rely on the \emph{continuous ranked probability score} and the \emph{quadratic scoring rule}.}
Here, the random variable is the average reported value, among agents other than $i$, i.e.,
%\[
$\tilde{v}_i=\tfrac{1}{n-1}\sum_{j\neq i}\hat{v}_j.$
%\]
We denote agent $i$'s beliefs over the average reported value $\tilde{v}_i$ by $b_i\in\mathcal{B}=\Delta\left(\reals\right)$.

\begin{defn}\label{D11}
	A \emph{scoring rule} $\textnormal{SR}:\mathcal{B}\times\reals\to\reals$ for agent $i$ maps her reported belief $\hat{b}_i\in\mathcal{B}$ and the average reported value $\tilde{v}_i\in\reals$ to a numerical score.
	It is \emph{proper} if she maximizes the expected score by reporting her beliefs truthfully, i.e.,
	\[
	\forall b_i\in\mathcal{B},\quad b_i\in\arg\max_{\hat{b}_i}\ex[b]{\textnormal{SR}\left(\hat{b}_i,\tilde{v}_i\right)}
	\]
\end{defn}

BDM-with-betting involves two stages.
The second stage is the BDM mechanism.
The first stage asks agents to predict the average reported value in the second stage.
Agents are paid more if their predictions turn out to be more accurate. %, based on a proper scoring rule.

\begin{defn}\label{D12}
	The \emph{BDM-with-betting mechanism} $\left(\textbf{x},\textbf{t},\textbf{e}\right)$ is parameterized by a proper scoring rule $\textnormal{SR}$ and scaling parameter $\lambda_n$.
	Each agent $i$ sends a message
	%\[
	$m_i=\left(\hat{v}_i,\hat{b}_i\right)\in\reals\times\mathcal{B}$
	%\]
	that consists of a reported value $\hat{v}_i$ and a reported belief $\hat{b}_i$.
	She receives the product if her reported value exceeds the random price $p$, i.e.,
	\[
	\textbf{x}_i\left(\hat{v},\hat{b}\right)=\textbf{\emph{1}}(\hat{v}_i\geq p)\quad\text{where}\quad p\sim\textsc{Uniform}[v_L,v_H]
	\]
	She is charged if she receives the product and earns a bonus from the scoring rule, i.e.,
	\[
	\textbf{t}_i\left(\hat{v},\hat{b}\right)=\lambda_n\cdot\textnormal{SR}\left(\hat{b}_i,\tilde{v}_i\right)-p\cdot\textbf{\emph{1}}(\hat{v}_i\geq p)
	\]
	Finally, the estimator is the average reported value, i.e., $\textbf{e}(\hat{v},\hat{b})=\nicefrac{1}{n}\sum_{i=1}^n\hat{v}_i$.
\end{defn}

To prove Theorem \ref{T2}, we rely on a sequence of BDM-with-betting mechanisms.
Specifically, we require a sequence of scaling parameters $\lambda_n\to\infty$ that grow with the sample size $n$.
This means that transfers become arbitrarily large.

\begin{remark}
    If the planner knew the instance $I$, then it would not be necessary to let transfers grow arbitrarily large.
    Instead, the planner could fix the scaling parameter $\lambda_n=\lambda(I)$ to some level $\lambda(I)$ that is appropriate for that instance $I$.
    The reason why we need transfers to grow large when the planner does not know the instance $I$ is that, for any constant $\alpha>0$, there exists an instance $I$ such that $\lambda(I)>\alpha$.

    It is natural to ask what additional assumptions on the instances $I$, or relaxations of our benchmarks, are needed in order to obtain positive results with bounded transfers.
    We leave this question to future work.
    %With that said, whether the BDM-with-betting mechanism is effective for a reasonable scaling parameter $\lambda$ is ultimately an empirical question.
\end{remark}

BDM-with-betting shares some features with full surplus extraction mechanisms (e.g., \cite{CM88}), peer prediction mechanisms (e.g., \cite{MRZ05}), and Bayesian truth serums (e.g., \cite{Prelec04}).
Namely, all of these mechanisms ask agents to explicitly or implicitly bet on each others' reports, albeit in models that are qualitatively different from ours (see the related literature in Section \ref{S1-1}).

In a sense, BDM-with-betting is much more naive than any of these mechanisms.
It does not try to infer anything about values $v_i$ or the state $\omega$ from reported beliefs $\hat{b}$.
The incentives to truthfully report $\hat{v}_i=\ex{v_i\mid s_i(\omega,\theta_i)}$ are the same as in BDM.
The estimator is the same as in BDM.
The novelty of BDM-with-betting is the use of off-the-shelf scoring rules to incentivize agents to acquire signals that help them predict others' reported values.
But even that is naive, given that what the planner really wants is for agents to learn about \emph{their own} values.

Despite its naivety, we find that BDM-with-betting can incentivize agents to acquire precisely the kind of information needed to identify the population's average value (and not much else).
Next, we explain why.

	\subsection{Intuition for Theorem \ref{T2}}\label{S4-2}

We use a simple example to provide intuition for why BDM-with-betting guarantees the ex-post benchmark in favorable equilibria (i.e., the first part of Theorem \ref{T2}).
Consider an instance with binary states, types, and values, where
\[
\omega=\textsc{Bernoulli}(0.5),\quad \theta_i=v_i=\beta\omega+\epsilon_i,\quad\epsilon_i\sim\textsc{Uniform}\{-1,1\}\:\text{ i.i.d.}
\]
There are two signals: a revealing signal that costs $\bar{c}\gg 1$ and reveals the value $v_i$; and an uninformative signal that costs nothing.
There are two cases to consider.

\paragraph{Uncorrelated Values.}

Let values $v_i$ be uncorrelated across agents ($\beta=0$).
The planner's estimate when the agents acquire information is
%\[
$\tfrac{1}{n}\sum_{i=1}^nv_i=\tfrac{1}{n}\sum_{i=1}^n\epsilon_i\to_p0$,
%\]
His estimate when the agents do not acquire information is
%\[
$\tfrac{1}{n}\sum_{i=1}^n\ex{v_i}=0.$
%\]
Either way, the planner's estimate converges to the population's average value.

More generally, when values are uncorrelated, the planner can take advantage of the \emph{wisdom of the crowd}.
It is not necessary that agents learn their values perfectly.
What is important is that the errors $\hat{v}_i-v_i$ they make are uncorrelated across agents $i$.

\paragraph{Correlated Values.}

Let values $v_i$ be correlated across agents, by setting $\beta=1$.
Let the parameter $\lambda$ be large enough relative to the cost $\bar{c}$ of information.
Then agents acquire the revealing signal if it helps them predict the average reported value.

Suppose that agent $i$ expects all other agents $j\neq i$ to acquire their revealing signals and learn their values $v_j$.
Then she expects the average reported value $\tilde{v}_i$ to be
\[
\tilde{v}_i=\frac{1}{n-1}\sum_{j\neq i}v_j\to_p\omega
\]
Ideally, agent $i$ would learn directly about the state $\omega$, but that is not feasible in this example.
Instead, she learns about her value $v_i$, which is correlated with $\omega$.

This represents one of two equilibria.
In the informed equilibrium, all agents acquire the revealing signal and report $\hat{v}_i=v_i$.
The planner's estimate is optimal, i.e.,
\[
\textbf{e}(\hat{v},\hat{b})=\frac{1}{n}\sum_{i=1}^n\hat{v}_i=\frac{1}{n}\sum_{i=1}^nv_i=\omega
\]
In the uninformed equilibrium, all agents acquire the uninformative signal and report their expected values $\hat{v}_i=\ex{\omega}$.
Now, the planner's estimate is not optimal, i.e.,
\[
\textbf{e}(\hat{v},\hat{b})=\frac{1}{n}\sum_{i=1}^n\hat{v}_i=\frac{1}{n}\sum_{i=1}^n\ex{v_i}=\ex{\omega}
\]
There is no wisdom of the crowd in the uninformed equilibrium, since errors $\hat{v}_i-v_i$ are correlated across agents $i$.
Of course, even in the uninformed equilibrium, the planner's estimate at least meets the ex-ante benchmark.

\begin{remark}\label{R1}
	This equilibrium multiplicity does not necessarily arise if we provide agents with initial information about their values $v_i$.
	Suppose the uninformative signal $s_i$ is slightly informative, where $s_i(\omega,\theta_i)=\textbf{1}(v_i\geq 0)$, and allow agents to acquire dynamic signals.
	Now, agents always know whether they like or dislike the product, and can incur cost $\bar{c}$ to learn how intensely they like or dislike it.
	
	In this modified example, only the informed equilibrium survives.
	To see this, suppose that no agent acquires the revealing signal.
	If the state is $\omega=1$, then all agents $i$ report $\hat{v}_i=2/3$.
	If the state is $\omega=0$, then roughly half of the agents report $\hat{v}_i=2/3$ and the others report $\hat{v}_i=-2$.
	It follows that the average reported value is correlated with the state $\omega$, which in turn is correlated with agents' values $v_i$.
	These correlations make it profitable for agents to acquire the revealing signal.
\end{remark}

\paragraph{Generalizing the Example.}

While this example is useful for intuition, it may be misleadingly simple.
There is not always a clean separation between ``uncorrelated'' and ``correlated'' values.
Whether values are correlated depends on what information agents begin with and the properties of the signals they acquire.
This is especially true for non-binary signal spaces, where agents may deviate from the revealing signal in complicated ways.
For example, they may prefer to learn directly about the state $\omega$, or only acquire partial information about their values $v_i$.

Indeed, this example obscures some counter-intuitive phenomena that can arise in other cases.
For example, it is possible for values to be uncorrelated if agents are well-informed, but correlated if agents are not.
Pure-strategy equilibria may not exist, since agents want to acquire information if and only if others do not.
This can arise even in simple instances, like the one we used to prove Proposition \ref{P1} in Section \ref{S3-2}.
Rather than attempting to construct mixed-strategy equilibria in strategy spaces of arbitrary size/dimension, our non-constructive proof circumvents these issues.

In general, we find that BDM-with-betting sustains a sequence of equilibria where a wisdom-of-the-crowd effect holds in the limit.
That is, betting incentivizes agents to acquire information until the point where their reporting errors $\hat{v}_i-v_i$ are nearly-uncorrelated.
We outline the proof in Appendix \ref{Ap2-T2}.

	\subsection{Intuition for Theorem \ref{T1}}\label{S3-1}

The BDM-with-betting mechanism is intended to elicit the population's average value, which minimizes expected square loss.
Next, we provide intuition for why it is harder to elicit other statistics, like quantity demanded or revenue-maximizing price.

Consider an instance with a single state and binary values/types, where
%\[
$\theta_i=v_i\sim\textsc{Uniform}\{-\alpha,\alpha\}\:\text{ i.i.d.}$
%\]
Since there is only one state, the ex-post and ex-ante benchmarks coincide.
Let $e^*(\alpha)$ be the estimate that minimizes expected loss.
For loss functions that are not square-like, $e^*(\alpha)$ is not necessarily identified for two reasons:
%There are two reasons for this.
\begin{enumerate}
	\item \emph{The optimal estimate is not identified if agents are not informed.}
	To see this, consider a loss function that is not a square-like.
	Then the optimal estimate $e^*(\alpha)$ will depend on the parameter $\alpha$.
	If agents do not acquire information, their willingness to pay does not identify the optimal estimate.
	
	\item \emph{It is difficult to incentive agents to acquire information.}
	Agent $i$ has two reasons to acquire information.
	First, to decide whether she is willing to ``buy'' the product at a given price.
	This is not sufficient because the information cost $\bar{c}$ far exceeds the agent's uncertainty over her value $v_i$.
	Second, to better predict the messages $m_{-i}$ sent by other agents.
	This is not sufficient because agent $i$'s value $v_i$ is independent of any private information that the other agents might have.
\end{enumerate}

One way to circumvent these difficulties is by having agents directly report the parameter $\alpha$---or, more generally, the instance $I$---and penalize them if any of their reports disagree.
This does not require agents to acquire information because the instance $I$ (and therefore, $\alpha$) is common knowledge.
Among other issues that make this mechanism impractical, it features severe equilibrium multiplicity.
In equilibria where agents collectively report $\alpha'\neq\alpha$, the planner's estimate $e^*(\alpha')\neq e^*(\alpha)$ may be suboptimal.
In that case, this mechanism fails to meet the ex-ante benchmark.

It is possible to imagine many other mechanisms.
The challenge at this point is to rule out all possible mechanisms, for any loss function that is not square-like.
We leave this to the proof outline in Appendix \ref{Ap2-T1}.

\paragraph{Sensitivity of Theorem \ref{T1}.}

We complete this section by discussing how sensitive this negative result is to natural restrictions on the space of instances.
It follows immediately from our proof that Theorem \ref{T1} holds even if we restrict attention to singleton state spaces, ternary type/value spaces, value functions $v(\theta_i)=\theta_i$, cost functions that do not depend on types, and binary signal spaces.

Depending on the application, it may be reasonable to impose other restrictions.
Here are three examples.
First, we might assume that a representative fraction of the population can learn their values at zero cost.
Second, we might assume that values are at least slightly correlated across agents (as in \cite{CM88}).
Third, we could impose a tight upper bound $\bar{c}$ on the cost of the revealing signal.
None of these restrictions are enough to meaningfully change Theorem \ref{T1}.\footnote{It is easy to extend the proof to handle the first and second restrictions. In particular, we can handle the second restriction by introducing a free state-revealing signal $s_i(\omega,\theta_i)=\omega$. For the third restriction, we would need to strengthen Definition \ref{D8} by requiring constants $|a-b|\ll\bar{c}$.}

It would also be natural to weaken Theorem \ref{T1} by considering mechanisms that only approximately meet our benchmarks.
This, paired with natural restrictions on the space of instances, would be a reasonable way to obtain more positive results.

	\section{Social Choice}\label{S5}

We now turn to the problem of social choice, where a planner must choose a single alternative on behalf of a population.
Specifically, we seek social choice mechanisms that work well even when processing information is costly (Theorem \ref{T3}).

We reinterpret the model in Section \ref{S2} as follows.
There are $n$ members of a committee (or citizen's assembly, mini-public, jury, etc.) that are randomly chosen to represent a large population.%
\footnote{This practice, known as sortition, has its origins in ancient Athenian democracy. It is the subject of renewed interest in political science (e.g., \cite{Fishkin11}), computer science (e.g., \cite{FGGHP21}), and economics (e.g., \cite{BB23}).}
The planner chooses a population-wide outcome $e\in\{0,1\}$ that indicates, say, whether or not to pass a ballot measure.
The allocations $x_i$ to each agent $i$ must match the population-wide outcome $e$, and the planner's objective is to maximize the population's welfare.

\begin{assume}\label{A3}
	Let estimates $\mathcal{E}=\{0,1\}$ be binary and the loss function be utilitarian; i.e.,
	%\[
	$L\left(e,F_{\mathcal{V}\mid\omega}\right)=-e\cdot\ex[F_{\mathcal{V}\mid\omega}]{v_i}$.
	%\]
	Restrict attention to mechanisms $(\textbf{x},\textbf{t},\textbf{e})$ where, for all message profiles $m$,
	\begin{equation}\label{E72}
	\textbf{e}(m)=\textbf{x}_1(m)=\textbf{x}_2(m)=\ldots=\textbf{x}_n(m)
	\end{equation}
	Going forward, we refer to \emph{social choice mechanisms} $(\textbf{x},\textbf{t})$, where $\textbf{x}(m)=\eqref{E72}$.
\end{assume}

When evaluating welfare, we ignore the committee members' cost of information acquisition.
This strikes us as reasonable when the size $n$ of the committee is much smaller than the size of the population.

	%\subsection{Model}\label{S5-1}
	
	%\input{section5a}
	
	\subsection{Challenges}\label{S5-2}

When trying to solve the planner's problem, we encounter challenges that would not arise if agents knew their own values.
To explain these challenges, we use the \emph{majority-rule mechanism} as a foil.

\begin{defn}
	The \emph{majority-rule mechanism} $(\textbf{x},\textbf{t})$ asks each agent $i$ to report her preferred alternative $\hat{x}_i\in\{0,1\}$.
	Then it sets
	\[
	\textbf{t}\left(\hat{x}_1,\ldots,\hat{x}_n\right)=0\quad\text{and}\quad\textbf{x}\left(\hat{x}_1,\ldots,\hat{x}_n\right)=\textnormal{\textbf{1}}\left(\frac{1}{n}\sum_{i=1}^n\hat{x}_i\geq\frac{1}{2}\right)
	\]
\end{defn}

In addition to the standard problems with majority rule, there are three additional problems that arise when information acquisition is costly.
\begin{enumerate}
	\item \emph{Majority rule only rewards information acquisition in the event that an agent is \emph{pivotal}, which is is unlikely (e.g., \cite{MH03}).}
	More precisely, an agent $i$ is pivotal if changing her report $\hat{x}_i$ would change the allocation.
	The probability that an agent is pivotal tends to shrink extremely quickly as the sample size $n$ grows, so there is little expected gain from acquiring information.
	
	\item \emph{Majority rule does not account for the positive externalities of information acquisition, where one informed vote can benefit everyone (e.g., \cite{Gersbach95})}
	For example, suppose that agents have common values and that agent $i$ knows she will be pivotal.
	Acquiring information that identifies the optimal alternative would not only improve her well-being, but also the well-being of everyone else. 
	
	\item \emph{When agents are imperfectly informed, majority rule does not incentivize them to truthfully report the alternative they prefer given the information they acquired.
	}
	Since agent $i$'s report only matters if she is pivotal, it is optimal for her to condition her beliefs on being pivotal before making a report.
	This can distort the outcome of the vote (e.g., \cite{AB96}).
	
\end{enumerate}
We stress that these problems are not unique to majority rule.
They apply to many other mechanisms -- like the Vickrey-Clarke-Groves (VCG) mechanism, or quadratic voting -- that do not account for voters' costs of processing information.

The third challenge is the most daunting.
The first two deal with incentives to acquire information, which we already studied in Section \ref{S4}.
The third challenge deals with incentives to be truthful when voters are not fully informed, and we know that it is not always possible to incentivize voters to become fully informed (Theorem \ref{T1}).

Fortunately, it turns out that we can address all three challenges at once, by adapting the BDM-with-betting mechanism to the social choice setting.

	\subsection{Statement of Theorem \ref{T3}}\label{S5-3}

We can now state the main result of our application to social choice.

\begin{theorem}\label{T3}
	There exists a sequence of social choice mechanisms that:
	\begin{enumerate}
		\item Guarantees the ex-ante benchmark in all equilibria.
		\item Guarantees the ex-post benchmark in favorable equilibria.
	\end{enumerate}
\end{theorem}

As in Section \ref{S4}, even the ex-ante guarantee is non-trivial.
\citeauthor{NMS25} (\citeyear{NMS25}, p.2-3) offer an excellent example where majority rule fails to meet the ex-ante benchmark (which they call the no-information benchmark).
Majority rule performs poorly in this example because voters are not truthful, in contrast to Proposition \ref{P1}.

The mechanism we use to prove Theorem \ref{T3} builds on the Vickrey-Clarke-Groves (VCG) mechanism.
Unfortunately, the VCG mechanism involves vote buying, which is a nonstarter in many applications.
For these reasons, we now restrict attention to a special case of the model where we can motivate a more practical mechanism.

\begin{assume}\label{A4}
	Restrict attention to instances $I$ with a binary value space $\mathcal{V}=\{-1,1\}$ and a binary signal space $\mathcal{S}=\{s_i^R,s_i^U\}$, where $s_i^R$ is a revealing signal and $s_i^U$ is an uninformative signal that sets $s^U_i(\cdot)=0$.
\end{assume}

Assumption \ref{A4} allows us to build on the majority-rule mechanism.
Majority rule is widely-used in practice, but fails to account for the intensity of voters' preferences.
This means that majority rule can be inefficient even if voters know their values $v_i$.
Assumption \ref{A4} ensures that voters' preferences are equally intense, since their values are $v_i\in\{-1,1\}$.
By restricting attention to all-or-nothing information acquisition, it also avoids complications that arise when voters are partially-informed.%
\footnote{For example, when voters are partially-informed, a minority of voters who are confident that the measure is bad for them may be overruled by a majority of voters who believe it is slightly more likely that the measure is good for them.}

We maintain Assumption \ref{A4} for the rest of this section.
In the Supplemental Appendix, we show that Theorem \ref{T3} holds even without this assumption.

\subsection{Majority-Rule-with-Betting Mechanism}\label{S5-5}

Assumption \ref{A4} allows us to motivate the \emph{majority-rule-with-betting mechanism}.
Majority-rule-with-betting has voters cast their votes according to the majority rule and asks them, before they cast their votes, to predict the results.

To be more precise, majority-rule-with-betting has voters cast their votes according to the majority rule with high probability.
For technical reasons, there is also a small probability $\delta_n>0$ that a randomly-chosen agent dictates the outcome.%
\footnote{This ensures that voters are pivotal with positive probability and therefore avoids situations where voters must condition on zero probability events when casting their votes.
}

To define this mechanism, we need to update our notation from Section \ref{S4-1}.
Let the \emph{vote share} from agent $i$'s perspective be
%\[
$\tilde{n}_i=\tfrac{1}{n}\sum_{j\neq i}\hat{x}_j$.
%\]
Let agent $i$'s belief over the vote share be
$
b_i\in\mathcal{B}=\Delta\left([0,1]\right)
$.
A scoring rule $\textnormal{SR}$ now maps a reported belief $\hat{b}_i\in\mathcal{B}$ and a vote share $\tilde{n}_i\in[0,1]$ to a numerical score.

\begin{defn}
	The \emph{majority-rule-with-betting mechanism} $\left(\textbf{x},\textbf{t}\right)$ is parameterized by a proper scoring rule $\textnormal{SR}$, scaling parameter $\lambda_n$, and probability $\delta_n$.
	Agent $i$ sends a message
	%\[
	$m_i=\left(\hat{x}_i,\hat{b}_i\right)\in\{0,1\}\times\mathcal{B}$
	%\]
	that consists of a reported alternative $\hat{x}_i$ and reported belief $\hat{b}_i$.
	With probability $1-\delta_n$, the planner selects the majority's preferred alternative, i.e.,
	\[
	\textbf{x}(\hat{x},\hat{b})
	=\textbf{\emph{1}}\left(\frac{1}{n}\sum_{i=1}^n\hat{x}_i\geq\frac{1}{2}\right)
	\]
	With probability $\delta_n$, the planner randomly chooses a voter $i\sim\textsc{Uniform}(1,\ldots,n)$ and selects their reported alternative, i.e.,
	$
	\textbf{x}(\hat{x},\hat{b})=\hat{x}_i
	$.
	Finally, each agent $i$ is paid for her prediction according to the scoring rule, i.e.,
	\[
	\textbf{t}_i\left(\hat{v},\hat{b}\right)=\lambda_n\cdot\textnormal{SR}\left(\hat{b}_i,\tilde{n}_i\right)
	\]
\end{defn}

This mechanism has two desirable features.
	First, it does not allow anyone to buy votes.
	Transfers are used only to reward accurate predictions.
	Second, the mechanism counts all votes equally.
	In particular, voter $i$'s ability to predict the vote share does not affect how much weight is attached to her vote.

It is also worth noting that, in practice, this mechanism would not necessarily require the planner to spend more on elections.
This is because existing elections engage the entire electorate, whereas majority-rule-with-betting only engages a small fraction of the electorate (i.e., the committee).
The planner could use his savings from scaling down the election to fund the rewards for accurate predictions.

Finally, majority-rule-with-betting may not appear that different from the status quo in many democracies, where voters are free to participate in political betting markets.
	However, we do not ask voters to bet against each other.%
    \footnote{Asking voters to bet against each other would be cheaper. However, doing so would likely undermine incentives to acquire information. It is likely that a no-trade theorem would apply in this setting, which means that voters would choose not to place bets in equilibrium. If voters do not place bets, then they do not have strong incentives to acquire information.}
	
	\subsection{Intuition for Theorem \ref{T3}}\label{S5-4}

Recall the three challenges from Section \ref{S5-2}.
The first two challenges concern incentives to acquire information.
The reasons why betting incentivizes information acquisition here are essentially the same as the reasons why betting incentivizes information acquisition in Theorem \ref{T2} (see the discussion in Section \ref{S4-2}).

The third challenge, which concerns incentives to be truthful, is new.
Fortunately, it turns out that the kind of information that voters need to acquire in order to be truthful is closely-related to the kind of information that is relevant to predicting the vote share.
This means that  a convenient side effect of betting on the vote share is that it restores truth-telling incentives.

To see this, recall why majority rule may not incentivize truth-telling on its own.
Agent $i$ knows that her vote $\hat{x}_i$ is pivotal only if the vote share $\tilde{v}_i$ is exactly 50\%.
Therefore, it is in her best interest to vote in favor of the measure whenever her expected value $v_i$ conditional on being pivotal is positive, i.e.,
\begin{equation}\label{S5-E1}
\ex{v_i\mid s_i,s_i(\omega,\theta_i),\tilde{v}_i=50\%}\geq 0
\end{equation}
A truthful agent would not condition on pivotality and vote in favor whenever
\begin{equation}\label{S5-E2}
\ex{v_i\mid s_i,s_i(\omega,\theta_i)}\geq 0
\end{equation}

What determines whether agent $i$ will be truthful?
It is whether agent $i$ perceives the vote share $\tilde{v}_i$ to be correlated with her value $v_i$ even after conditioning on her information.
If not, then conditioning on pivotality does not affect agent $i$'s beliefs about her value.
As a result, conditions \eqref{S5-E1} and \eqref{S5-E2} will be the same.

The reason why high-stakes betting incentivizes agent $i$ to acquire information until she is truthful is because it incentivizes her to acquire any information that is correlated with the vote share $\tilde{v}_i$.
On the one hand, if $v_i$ is correlated with the vote share $\tilde{v}_i$, then agent $i$ benefits from acquiring her revealing signal.
Once she acquires the revealing signal, she prefers to vote truthfully.
On the other hand, if $v_i$ is not correlated with the vote share $\tilde{v}_i$, then agent $i$ also prefers to vote truthfully.

We leave further discussion to the proof outline in Appendix \ref{Ap2-T3}.
%After that, we drop Assumption \ref{A4} and turn to the general case in the Supplemental Appendix.
	
	%\section{Related Literature}\label{S6}
	
	%\input{section6}
	
	\section{Conclusion}\label{S7}

Revealed preference plays a foundational role in economics, by linking observed choices with unobserved preferences.
This link is disrupted when people find it costly to process information about the goods they consume, the services they receive, and the policies that affect them.
We ask to what extent it is possible to repair that link.
We obtain both positive results (Theorems \ref{T2} and \ref{T3}) and negative results (Theorem \ref{T1}), depending on the application.
In the process, we derive relatively-simple mechanisms (e.g., BDM-with-betting and majority-rule-with-betting) that appear to be more robust to costly information processing than their classical counterparts.

There are four natural directions for future work.
The first direction would empirically evaluate BDM-with-betting and majority-rule-with-betting.
The second would identify either theoretical conditions under which these mechanisms have unique equilibria, or empirical conditions under which people coordinate on favorable equilibria.
The third direction would ask when these mechanisms perform well for small sample sizes and tight budgets.
The fourth would explore whether it is possible to relax the assumption that the agents share a common prior.

More broadly, there are many institutions where costly information processing can lead to market failures (e.g., elections), and many institutions that exist in part to correct such failures (e.g., recommender systems).
Addressing these market failures may ultimately require better methods for eliciting informed preferences.

	\appendix
	
	\section{Proof Outlines}\label{Ap2}
	
	\subsection{Notation}

In strategy profile $(\textbf{s}^n,\textbf{m}^n)$ where agent $i$ acquires signal $s_i$,
$
\pr[i,n]{\cdot}=\pr{\cdot\mid s_i,s_i(\omega,\theta_i)}
$
is the probability conditioned on agent $i$'s information.
Define expected value $\ex[i,n]{\cdot}$, variance $\var[i,n]{\cdot}$, and covariance $\cov[i,n]{\cdot}$ similarly.
	
	\subsection{Proof of Theorem \ref{T2}}\label{Ap2-T2}

We outline the proof of Theorem \ref{T2}.
Let $(\textbf{x}^n,\textbf{t}^n,\textbf{e}^n)$ be a BDM-with-betting mechanism where the proper scoring rule is the continuous ranked probability score, i.e., % (see e.g., \cite{GR07}).
\[
\textnormal{SR}\left(b_i,\tilde{v}_i\right)=-\int_{\left[0,\bar{v}\right]}\left(\pr[b_i]{\tilde{v}_i\leq z}-\textbf{1}(\tilde{v}_i\leq z)\right)^2dz
\]
To avoid negative transfers, we add a constant to this scoring rule.

Next, we introduce key notation and sufficient conditions for eliciting the population's average value (which minimizes expected square loss).
Lemmas \ref{L2}-\ref{L4} verify those sufficient conditions.
Corollary \ref{L8} handles the ex-ante benchmark.

\subsubsection{Notation and Sufficient Conditions}

We begin by introducing key notation and finding sufficient conditions for weakly identifying the population's average value.
Let agent $i$'s \emph{error} be the difference between her true and expected values, i.e.,
%\[
$\epsilon_i=v_i-\ex{v_i\mid s_i,s_i(\omega,\theta_i)}.$
%\]
Since the BDM mechanism incentivizes truth-telling, her reported value is
%\[
$\hat{v}_i=\ex{v_i\mid s_i,s_i(\omega,\theta_i)}=v_i-\epsilon_i.$
%\]
The \emph{average error} is the difference between the \emph{average value} and the average reported value (across all agents).
That is,
\[
\underbrace{\frac{1}{n}\sum_{i=1}^{n}\epsilon_i}_\text{average error}
=\underbrace{\frac{1}{n}\sum_{i=1}^{n}v_i}_\text{average value}
-\underbrace{\frac{1}{n}\sum_{i=1}^{n}\hat{v}_i}_\text{average reported value}
\]
If the average error is small, the estimate will be nearly optimal.
In the limit as the sample size $n$ grows, these averages converge in probability to expectations $\ex{\cdot\mid\omega}$ conditional on the realized state $\omega$:
%the average error, average value, and average reported value converge in probability:
\begin{equation}\label{E23}
	\frac{1}{n}\sum_{i=1}^{n}\epsilon_i\to_p\ex{\epsilon_i\mid\omega}
	\quad\quad
	\frac{1}{n}\sum_{i=1}^{n}v_i\to_p\ex{v_i\mid\omega}
	\quad\quad
	\frac{1}{n}\sum_{i=1}^{n}\hat{v}_i\to_p\ex{\hat{v}_i\mid\omega}
\end{equation}
%These averages converge in probability to expectations $\ex{\cdot\mid\omega}$ conditional on the realized state $\omega$.
This follows from the fact that the variables $\epsilon_i,v_i,\hat{v}_i$ are conditionally i.i.d., which allows us to invoke the law of large numbers after conditioning on $\omega$.

For convenience, we often refer to the probability limits \eqref{E23} as the average error, average value, and average reported value, respectively.
We want to show that the average error vanishes, i.e.,
%\begin{equation}
%\label{E24}
	$\ex{\epsilon_i\mid\omega}\to_p0$,
%\end{equation}
which ensures that the average reported value converges to the population's average value.
But it is not critical that the individual errors vanish.
It is enough to show that each agent $i$'s error is uncorrelated with the average error, i.e., $
\cov{\epsilon_i,\ex{\epsilon_j\mid\omega}}\to 0
$.
Intuitively, this lack of correlation ensures that individual errors to not translate into aggregate errors.
This is why BDM-with-betting works even though it does not directly incentivize participants to learn about their own values:
it really only needs to incentivize participants $i,j$ to acquire information up to the point where their errors $\epsilon_i,\epsilon_j$ are uncorrelated.

Finally, observe that agent $i$'s error is uncorrelated with the average error when
\begin{equation}\label{E25}
\cov{\epsilon_i,\ex{\hat{v}_i\mid\omega}}\to0\quad\text{and}\quad\cov{\epsilon_i,\ex{v_j\mid\omega}}\to0
\end{equation}
This follows from the fact that $
\ex{\epsilon_j\mid\omega}=\ex{v_j\mid\omega}-\ex{\hat{v}_i\mid\omega}
$.

%Intuitively, if the error $\epsilon_i$ is neither correlated with the average reported value $\ex{\hat{v}_i\mid\omega}$, nor with the average value $\ex{v_j\mid\omega}$, then it cannot be correlated with the average error.

%If $\epsilon_i$ is not correlated with the average error, then it must not be correlated with the errors $\epsilon_j$ of other participants $j$.
%As we just discussed, this ensures that the average error vanishes.

\subsubsection{Verifying Sufficient Conditions}

We verify the sufficient conditions for eliciting the average, in three lemmas.
First, we show that errors $\epsilon_i$ cannot be too correlated with the average reported value $\ex{\tilde{v}_i\mid\omega}$.
%This holds quite generally.

\begin{lemma}\label{L2}
	In any equilibrium of mechanism $(\textbf{x}_n,\textbf{t}_n)$, the covariance between the error $\epsilon_{i,n}$ and conditional expectation $\ex{\tilde{v}_{i,n}\mid\omega}$ is bounded: 
    %, i.e.,
	%\[
	$\left|\cov{\epsilon_{i,n},\ex{\tilde{v}_{i,n}\mid\omega}}\right|=O\left(\nicefrac{1}{\sqrt{\lambda_n}}\right)%\leq\sqrt{\frac{\bar{v}^3}{\lambda_n}}
    $
	%\]
\end{lemma}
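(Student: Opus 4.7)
The plan is to derive the bound from a single incentive constraint: agent $i$ can deviate by combining her equilibrium signal $s_i$ with the revealing signal $s_i^R$. By Assumption \ref{A5} this combined signal lies in $\mathcal{S}$, and by Assumption \ref{A6} together with the fact that $s_i$ must have finite cost in equilibrium, the combined signal has finite cost as well; call its extra cost $C < \infty$ (which depends on the instance and the equilibrium but not on $n$ or $\lambda_n$). After deviating, agent $i$ knows $v_i$ exactly, so her BDM payoff can only improve, and the deviation can be unprofitable only if the expected increase in her scoring-rule payoff $\lambda_n \cdot \textnormal{SR}$ is at most $C$; equivalently, the expected CRPS improvement is at most $C / \lambda_n$.

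Next I would lower-bound this CRPS improvement by the variance of the posterior-mean shift. Let $X_i$ be the sigma-algebra generated by agent $i$'s equilibrium information, set $Y_n = \ex{\tilde{v}_{i,n} \mid \omega}$, and write $F_{\mathcal{F}}(z) = \pr{\tilde{v}_{i,n} \leq z \mid \mathcal{F}}$. For a Bayes-optimal report, the expected CRPS equals a constant minus $\int \ex{F_{\mathcal{F}}(z)(1-F_{\mathcal{F}}(z))}\, dz$, and enriching $\mathcal{F}$ from $X_i$ to $(X_i, v_i)$ changes this by exactly $\int\ex{(F_{(X_i,v_i)}(z) - F_{X_i}(z))^2}\, dz$ by the orthogonality of conditional expectations pointwise in $z$. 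Writing $\ex{\tilde{v}_{i,n} \mid \mathcal{F}}$ as an integral of $1 - F_{\mathcal{F}}$ over the bounded support and applying Cauchy-Schwarz in $z$ gives
\[
\int \ex{(F_{(X_i,v_i)} - F_{X_i})^2}\, dz \;\geq\; \frac{1}{2\bar{v}}\,\var{\ex{\tilde{v}_{i,n} \mid X_i, v_i} - \ex{\tilde{v}_{i,n} \mid X_i}}.
\]

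To close the loop, I would connect this variance to the target covariance via Cauchy-Schwarz. Because $\ex{\epsilon_{i,n} \mid X_i} = 0$, we have $\cov{\epsilon_{i,n}, \ex{Y_n \mid X_i}} = 0$, so $\cov{\epsilon_{i,n}, Y_n} = \cov{\epsilon_{i,n}, \ex{Y_n \mid X_i, v_i} - \ex{Y_n \mid X_i}}$. Cauchy-Schwarz together with the uniform bound $\var{\epsilon_{i,n}} \leq (v_L+v_H)^2$ then gives $\var{\ex{Y_n \mid X_i, v_i} - \ex{Y_n \mid X_i}} \geq \cov{\epsilon_{i,n}, Y_n}^2 / (v_L+v_H)^2$. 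Since the other agents' reports are conditionally i.i.d.\ given $\omega$, $\tilde{v}_{i,n} - Y_n = O(1/\sqrt{n})$ in $L^2$, so replacing $\tilde{v}_{i,n}$ by $Y_n$ in the previous displays introduces only an $o(1)$ error. Chaining the inequalities yields $\lambda_n \cdot \kappa \cdot \cov{\epsilon_{i,n}, Y_n}^2 \leq C + o(1)$ for some $\kappa > 0$, and the $O(1/\sqrt{\lambda_n})$ bound follows.

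I expect the main obstacle to be the asymptotic replacement of $\tilde{v}_{i,n}$ by its limit $Y_n$ inside the scoring rule. The CRPS acts on the random CDF of $\tilde{v}_{i,n}$, whose $O(1/\sqrt{n})$ fluctuations around $Y_n$ must be prevented from dominating the leading covariance term. A safe route is to decompose $F_{(X_i,v_i)} - F_{X_i}$ into a signal component driven by learning about $Y_n$ (of order $|\cov{\epsilon_{i,n}, Y_n}|$) and a finite-sample noise component of order $1/\sqrt{n}$, then argue that for $\lambda_n$ and $n$ jointly growing the signal component dominates. An alternative is to run the entire argument with a quadratic scoring rule applied to the sample mean $\tilde{v}_{i,n}$ directly, which bypasses the CDF-to-mean translation.
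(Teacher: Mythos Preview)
Your argument is essentially the paper's: the same deviation (combine the equilibrium signal with a revealing signal), the same CRPS computation via the law of total variance, and the same Cauchy--Schwarz/H\"older step from $\int \ex{(F_{(X_i,v_i)}-F_{X_i})^2}\,dz$ down to $\cov{\epsilon_{i,n},\cdot}^2$.

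The one place you overcomplicate is the ``main obstacle'' paragraph. No asymptotic replacement of $\tilde v_{i,n}$ by $Y_n=\ex{\tilde v_{i,n}\mid\omega}$ is needed, because the relevant identities hold \emph{exactly}. Conditional on $\omega$, agent $i$'s information $(X_i,v_i)$ is independent of $\tilde v_{i,n}$ (the latter is a function of $(\omega,(\theta_j,v_j)_{j\neq i})$ and the other agents' randomization). Hence
\[
\ex{\tilde v_{i,n}\mid X_i,v_i}=\ex{\,\ex{\tilde v_{i,n}\mid\omega}\mid X_i,v_i\,}=\ex{Y_n\mid X_i,v_i},
\]
and likewise with $X_i$ alone; similarly $\cov{\epsilon_{i,n},\tilde v_{i,n}}=\cov{\epsilon_{i,n},Y_n}$ exactly. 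The paper makes precisely this observation at the end of its Claim~\ref{Cx4}, working with $\tilde v_{i,n}$ throughout and only invoking conditional independence to rewrite the covariance in terms of $\ex{\tilde v_{i,n}\mid\omega}$ at the last step. Once you use this, your chain closes with only instance-dependent constants and no $o(1)$ slack; the proposed ``signal versus finite-sample noise'' decomposition is unnecessary.

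A minor bookkeeping point: your deviation cost $C$ depends on the realized $s_i$, which may be random under a mixed signal rule. Since $\mathcal{S}$ is finite you can take $C$ to be the maximum over all finite-cost $s_i$ of the cost of the combined signal; the paper achieves the same uniformity by deviating to the single signal $\bar s_i$ that combines \emph{all} finite-cost base signals, with fixed cost $\bar c$.
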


Lemma \ref{L2} follows from the fact that participant $i$ wants to acquire any signal that helps her better predict the average reported value $\ex{\hat{v}_j\mid\omega}$.
Suppose that, for the sake of contradiction, her error $\epsilon_i$ is correlated with the average reported value.
Then, if there were a signal that revealed her error $\epsilon_i$, she would want to acquire it.
However, this signal exists.
By Assumptions \ref{A5} and \ref{A6}, $i$ can combine her revealing signal (which reveals $v_i$) with the signals that she has already acquired (which determine $\hat{v}_i$).
This reveals her error $\epsilon_i=v_i-\hat{v}_i$ and costs at most $\bar{c}$ more than the signals she already acquired.
This combination is a profitable deviation whenever the scaling parameter $\lambda_n$ is large relative to the cost $\bar{c}$.
That, in turn, contradicts the premise that $\epsilon_i$ is correlated with the average reported value in equilibrium.

As a corollary of Lemma \ref{L2}, we find that BDM-with-betting mechanism guarantees the ex-ante benchmark in all equilibria.

\begin{cor}\label{L8}
	 Mechanisms $(\textbf{x}^n,\textbf{t}^n)$ guarantee the ex-ante benchmark in all equilibria.
\end{cor}

Next, we must show that $\epsilon_i$ is not too correlated with the average value $\ex{v_j\mid\omega}$.
However, this is not true in every equilibrium.
We must show that there exists an equilibrium in which $\epsilon_i$ is not too correlated with the average value.

We take an indirect approach.
For every instance, we define an \emph{auxiliary instance} where, by construction, $\epsilon_i$ is not correlated with the average value.
Using Lemma \ref{L2}, we show that the average error vanishes in every equilibrium of the auxiliary instance (Lemma \ref{L3}).
Then we show, when $n$ is large, every equilibrium of the auxiliary instance is also an equilibrium of the original instance (Lemma \ref{L4}).

Essentially, the auxiliary instance forces agents to only acquire signals that are maximally predictive of the conditional expectation $\ex{v_j\mid\omega}$.

\begin{defn}\label{D100}
	For any given instance $I=\left(\Omega,\mathcal{V},\Theta,\mathcal{S},F,v,c\right)$, we define an \emph{auxiliary instance} $\tilde{I}=\left(\Omega,\mathcal{V},\Theta,\mathcal{S},F,v,\tilde{c}\right)$.
	There are three steps to this construction.
	\begin{enumerate}
		\item Let $\bar{s}_i\in\mathcal{S}$ be the signal that combines all other signals $s_i\in\mathcal{S}$ that have finite cost $c\left(s_i,\theta_i\right)<\infty$ for all types $\theta_i$ in the support of $F$.\footnote{If signal $s_i$ is itself a combined signal (Assumption \ref{A5}), let $\bar{s}_i$ include all base signals that $s_i$ combines.}
		
		\item We can evaluate a given signal $s_i$ by how well it predicts average value $\ex{v_j\mid\omega}$.
		Let \emph{predictiveness} be the maximum expected score when agent $i$ acquires $s_i$, i.e.,
		\begin{equation}\label{E12}
			\rho(s_i)=\ex{\max_{b_i}\ex{\textnormal{SR}^{\textnormal{CRPS}}\left(b_i,\ex{v_i\mid\omega}\right)\mid s_i(\omega,\theta_i)}}
		\end{equation}
		
		\item For any given type $\theta_i\in\Theta$, let every signal $s_i$ that is less predictive than the combined signal $\bar{s}_i$ have infinite cost according to the auxiliary cost function $\tilde{c}$.
		Let other signals cost the same as in the original cost function $c$.
		More precisely,
		\begin{equation}\label{T2-E1}
			\tilde{c}(\theta_i,s_i)=\begin{cases}
				\infty&\rho\left(\theta_i,s_i\right)<\rho\left(\theta_i,\bar{s}_i\right)\\
				c(\theta_i,s_i)&\textnormal{otherwise}
			\end{cases}
		\end{equation}
	\end{enumerate}
\end{defn}

Next, we show average error vanishes in all equilibria of the auxiliary instance.

\begin{lemma}\label{L3}
	Fix the auxiliary instance.
	For any sequence of equilibria, the average reported value converges in probability to the population's average value.
\end{lemma}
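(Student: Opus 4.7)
The plan is to verify the two sufficient conditions identified in equation (E25) in the preceding discussion, namely
\[
\cov{\epsilon_i, \ex{\tilde{v}_i \mid \omega}} \to 0 \quad\text{and}\quad \cov{\epsilon_i, \mu(\omega)} \to 0,
\]
where I write $\mu(\omega) := \ex{v_j \mid \omega}$. Combining these with the conditionally i.i.d.\ convergence (E23) then gives $\tfrac{1}{n}\sum_i \hat{v}_i \to_p \mu(\omega)$, which is exactly what needs to be shown. The first covariance will be delivered by Lemma \ref{L2}; the second is where the construction of the auxiliary cost function carries the weight.

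To apply Lemma \ref{L2} in the auxiliary instance I first check that Assumption \ref{A6} still holds there. Trivially $\rho(\bar{s}_i) \geq \rho(\bar{s}_i)$, so by the construction (T2-E1) the auxiliary cost equals the original one, $\tilde{c}(\theta_i, \bar{s}_i) = c(\theta_i, \bar{s}_i)$, which is finite by Assumption \ref{A5}. Since $\bar{s}_i$ bundles the revealing signal guaranteed by Assumption \ref{A6}, it is itself a finite-cost revealing signal in the auxiliary instance, so Lemma \ref{L2} applies verbatim and, together with $\lambda_n \to \infty$, yields $\cov{\epsilon_{i,n}, \ex{\tilde{v}_{i,n} \mid \omega}} \to 0$.

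For the second covariance, pick any signal $s_i$ acquired with positive probability in equilibrium. Finite auxiliary cost forces $\rho(s_i) \geq \rho(\bar{s}_i)$ by (T2-E1), while the inclusion $\sigma(s_i(\omega,\theta_i,v_i)) \subseteq \sigma(\bar{s}_i(\omega,\theta_i,v_i))$ (since $\bar{s}_i$ bundles $s_i$'s base signals) combined with the tower property and the definition (E12) of $\rho$ yields $\rho(\bar{s}_i) \geq \rho(s_i)$. Hence $\rho(s_i) = \rho(\bar{s}_i)$. Because the CRPS is strictly proper and $\mu(\omega)$ is bounded and takes only finitely many values, equality of maximum expected score across these weakly nested information sets forces the conditional law of $\mu(\omega)$ given $\bar{s}_i$ to coincide almost surely with the conditional law given $s_i$. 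Thus $\mu(\omega) \perp \bar{s}_i \mid s_i$, and since $v_i$ is $\bar{s}_i$-measurable (the revealing signal is one of the ingredients of $\bar{s}_i$), $v_i \perp \mu(\omega) \mid s_i$. This gives $\hat{v}_i = \ex{v_i \mid s_i, s_i(\omega,\theta_i,v_i)} = \ex{v_i \mid s_i, s_i(\omega,\theta_i,v_i), \mu(\omega)}$, so by iterated expectations $\ex{\epsilon_i \mid \mu(\omega)} = 0$, and $\cov{\epsilon_i, \mu(\omega)} = 0$ exactly (not merely asymptotically).

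The step I expect to be the main obstacle is the strict-propriety argument embedded above: equality of maximum expected CRPS across weakly nested $\sigma$-algebras must force almost-sure equality of the two posterior laws of $\mu(\omega)$. The intuition is that, if the finer posterior disagreed with the coarser one on an event of positive probability, then reporting the finer posterior would strictly improve the expected CRPS, contradicting $\rho(s_i) = \rho(\bar{s}_i)$; but this must be written carefully and uniformly across $n$, $i$, and (possibly mixed) equilibrium signal choices. Once both covariance bounds are in hand, (E25) yields $\ex{\epsilon_i \mid \omega} \to_p 0$, and (E23) delivers the claimed convergence of the average reported value to the population's average value.
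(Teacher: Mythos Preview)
Your proposal is correct and follows the same high-level route as the paper: verify both covariance conditions in \eqref{E25} and then invoke \eqref{E23}--\eqref{E24}. The first covariance is handled identically via Lemma~\ref{L2}, and your check that $\bar s_i$ remains a finite-cost revealing signal under $\tilde c$ is exactly what is needed to invoke it in the auxiliary instance. Where you diverge is the argument for $\cov{\epsilon_i,\ex{v_j\mid\omega}}=0$. The paper (Claim~\ref{Cx5}) simply reruns the quantitative inequality behind Lemma~\ref{L2} (Claim~\ref{Cx9}) with $\rho$ in place of $\tilde\rho_n$ and $\ex{v_j\mid\omega}$ in place of $\tilde v_{i,n}$, obtaining $\rho(\bar s_i)-\rho(s_i)\ge (v_H-v_L)^{-3}\,\lvert\cov{\epsilon_i,\ex{v_j\mid\omega}}\rvert^2$, which together with $\rho(\bar s_i)=\rho(s_i)$ forces the covariance to zero. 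Your strict-propriety route is a legitimate alternative and even delivers the stronger conclusion $\ex{\epsilon_i\mid\mu(\omega)}=0$; the step you flag as delicate is sound because the CRPS gap between nested information sets equals an integral of the nonnegative term $\ex{\var{\pr{\mu(\omega)\le z\mid\bar s_i}\mid s_i}}$, which vanishes only if the finer posterior CDF is already $s_i$-measurable, and the finite support of $\mu(\omega)$ then pins down the full posterior law. The paper's route buys economy (no new argument, just reuse of Claim~\ref{Cx9}); yours buys conceptual transparency. Both then appeal to the sufficiency of \eqref{E25}, which the paper makes rigorous in Claims~\ref{Cx6}--\ref{Cx10} via a variance decomposition and Chebyshev rather than leaving it at the heuristic level of the outline.
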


The proof of Lemma \ref{L3} follows the logic of condition \eqref{E25}.
Lemma \ref{L2} already shows that the error $\epsilon_i$ cannot be too correlated with the average reported value $\ex{\hat{v}_j\mid\omega}$.
All that remains, by condition \eqref{E25}, is to show that the error $\epsilon_i$ cannot be correlated with the average value $\ex{v_j\mid\omega}$.
This is because the only signals $s_i$ with finite cost in the auxiliary instance are those that are maximally predictive of the average value.
Suppose, for the sake of contradiction, that $\epsilon_i$ is correlated with the average value.
Then there exists another signal -- which combines $s_i$ with the revealing signal, and reveals $\epsilon_i$ -- that predicts the average value better than $s_i$.
This contradicts the premise that $s_i$ is maximally predictive.

At this point, we have shown that the average error vanishes for all auxiliary instances.
Lemma \ref{L4} extends this result to general instances.

\begin{lemma}\label{L4}
	There exists a constant $N$ such that, for any sample size $n\geq N$, every equilibrium of the auxiliary instance is also an equilibrium of the original instance.
\end{lemma}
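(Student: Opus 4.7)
The plan is to argue by contradiction. Suppose the conclusion fails; then for arbitrarily large $n$ there is an auxiliary equilibrium $(\textbf{s}^n,\textbf{m}^n)$ and some agent $i$ with a strictly profitable deviation in the original instance. Because the auxiliary and original instances differ only in that $\tilde{c}(s_i,\theta_i)=\infty$ when $\rho(s_i)<\rho(\bar{s}_i)$, any deviation available in the original but not the auxiliary must use some signal $s'$ with $\rho(s')<\rho(\bar{s}_i)$ and $c(s',\theta_i)<\infty$. Using finiteness of $\mathcal{S}$ and $\Theta$, I pass to a subsequence along which $s'$, the deviating agent's technology $\theta$, and (without loss of generality) a representative signal $s^\star$ in the support of $\textbf{s}^n_i$ are constant. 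By construction of $\tilde{c}$, the signal $s^\star$ has $\rho(s^\star)=\rho(\bar{s}_i)$, so the predictiveness gap $\Delta:=\rho(\bar{s}_i)-\rho(s')$ is strictly positive and uniform in $n$.

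I first bound the non-scoring portion of the utility change. The second-stage BDM component makes truthful reporting of the agent's posterior mean a best response regardless of which signal was acquired, so the allocation-plus-price utility stays bounded by $\bar{v}$ in absolute value. The information cost saved by the deviation is at most $c(s^\star,\theta)<\infty$. Hence the non-scoring component of the deviation's payoff advantage is bounded by a constant $C$ independent of $n$.

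The heart of the argument is to bound the scoring component. Lemma \ref{L3} applied to $(\textbf{s}^n,\textbf{m}^n)$ gives $\tilde{v}_i\to_p\ex{v_j\mid\omega}$. Since the CRPS scoring rule is uniformly bounded on $[-\bar{v},\bar{v}]$ and continuous in its target in the $L^2$ sense, this lifts to convergence of optimal expected scores: for any fixed signal $s$, the optimal expected score against $\tilde{v}_i$ conditional on the realization of $s$ converges to the corresponding score against the deterministic target $\ex{v_i\mid\omega}$, whose expectation is $\rho(s)$ by definition. Consequently the deviating signal $s'$ yields expected score approaching $\rho(s')$ while the equilibrium signal $s^\star$ yields expected score approaching $\rho(\bar{s}_i)$, so the scoring advantage from deviating is at most $-\lambda_n\Delta+o(\lambda_n)$. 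Since the mechanisms are constructed with $\lambda_n\to\infty$, for large $n$ this dominates $C$, contradicting profitability.

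The principal obstacle is this lifting step: converting the probabilistic convergence of $\tilde{v}_i$ supplied by Lemma \ref{L3} into a uniform rate of convergence of the expected scores, while simultaneously controlling the agent's posterior belief and optimal reported distribution $\hat{b}_i$, which move with $n$ through the equilibrium strategies of the other agents. Boundedness of the value space, the quadratic $L^2$ structure of the CRPS, and the finiteness of $\mathcal{S}$ are what should deliver a convergence rate uniform across the subsequence of candidate equilibria and deviations.
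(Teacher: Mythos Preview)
Your proposal follows essentially the same route as the paper's proof: exploit finiteness of $\mathcal{S}$ to obtain a uniform predictiveness gap, bound the non-scoring part of the payoff change by a constant independent of $n$, and use Lemma~\ref{L3} to show that the expected CRPS score under any fixed signal, divided by $\lambda_n$, converges to the predictiveness $\rho(\cdot)$, so that a deviation to any signal with $\rho(s')<\rho(\bar{s}_i)$ eventually loses on the order of $\lambda_n\Delta$. The paper fills in your acknowledged ``lifting step'' by combining the Portmanteau theorem and continuous mapping (to pass from $\tilde{v}_{i,n}\to_p\ex{v_j\mid\omega}$ to pointwise-a.e.\ convergence of the integrand) with bounded convergence (to integrate over $z$); observe that no uniform \emph{rate} is needed, since finiteness of $\mathcal{S}$ already lets you take a single threshold $N=\max_{s_i}N(s_i)$ that works for every signal simultaneously.
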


To prove Lemma \ref{L4}, it is enough to show that, in the candidate equilibrium of the original instance, no participant $i$ will deviate to a signal that is unavailable in the auxiliary instance.
That is, we want to show that participant $i$ wants to acquire only signals that are maximally predictive of the average value $\ex{v_j\mid\omega}$.
We know, in the limit, that she will acquire only signals that are maximally predictive of the average reported value $\ex{\hat{v}_j\mid\omega}$.
Moreover, by Lemma \ref{L3}, in any equilibrium of the auxiliary instance, the average reported value converges to the average value. % (as the average error vanishes).
Therefore, she will only acquire signals that are maximally predictive of the average value.%

	\subsection{Proof of Theorem \ref{T1}}\label{Ap2-T1}

Fix a loss function $L$ that is not square-like.
Let $a,b\in\mathcal{V}$ be the constants referred to in Definition \ref{D8}.
Fix a sequence of mechanisms $(\textbf{x}^n,\textbf{t}^n,\textbf{e}^n)$.

We construct two instances $I$ and $I'$ that must lead to different estimates $\textbf{e}(m)$.
In both instances, let $\Omega$ be a singleton.
This ensures that the ex-ante and ex-post benchmarks coincide.
Let the type and value spaces be $\Theta=\mathcal{V}=\{a,b,(a+b)/2\}$, where $v(\theta_i)=\theta_i$.
Let signal space $\mathcal{S}$ consist of a revealing signal $s^R_i(\omega,\theta_i)=v_i$ that costs $\bar{c}>2|b-a|$ and an uninformative signal $s^U_i(\omega,\theta_i)=0$ that costs nothing.

All that remains is to specify the distribution of values.
%\begin{enumerate}
	%\item 
    For instance $I$, the distribution $F$ sets $v_i\sim\textsc{Uniform}\{a,b\}$.
	%
	%\item 
    For instance $I'$, the distribution $F'$ sets $v_i=(a+b)/2$.
%\end{enumerate}
It follows from Definition \ref{D8} that the optimal estimate differs between instances $I,I'$.
To guarantee the ex-ante benchmark in all equilibria, the planner's estimate must differ across these two instances, for any equilibrium selection rule.

\begin{lemma}\label{L1}
	For any mechanism $(\textbf{x},\textbf{t},\textbf{e})$, there are strategy profiles $(\textbf{s},\textbf{m}),(\textbf{s}',\textbf{m}')$ where:
	\begin{enumerate}
		\item Given instance $I$, $(\textbf{s},\textbf{m})$ is an equilibrium.
		
		\item Given instance $I'$, $(\textbf{s}',\textbf{m}')$ is an equilibrium.

		\item Let $s_i\sim \textbf{s}_i$, $m_i\sim \textbf{m}_i\left(s_i,s_i(\omega,\theta_i)\right)$, $s'_i\sim \textbf{s}'_i$, and $m'_i\sim\textbf{m}'_i\left(s'_i,s'_i(\omega,\theta_i)\right)$.
		The distribution of $m$ in instance $I$ is identical to the distribution of $m'$ in instance $I'$.
	\end{enumerate}
\end{lemma}

Lemma \ref{L1} ensures that, with suitable equilibrium selection, the planner's data has the same distribution regardless of whether the instance is $I$ or $I'$.
It follows that the planner's estimate is identically-distributed in both cases.
Therefore, the planner fails to meet the ex-ante benchmark in at least one of these instances.

	\subsection{Proof of Theorem \ref{T3} (Special Case)}\label{Ap2-T3}

We outline the proof of Theorem \ref{T3} in the special case where Assumption \ref{A4} holds.
We construct a sequence of majority-rule-with-betting mechanisms $(\textbf{x}^n,\textbf{t}^n)$ with proper scoring rule $\textnormal{SR}$, scaling parameter $\lambda_n$ and probability parameter $\delta_n$.

First, we define the proper scoring rule.
Without loss of generality, let the sample size $n$ be odd.
Let $Q_{i,n}$ be the event that agent $i$ is pivotal; that is, either $i$ is dictator or
%\[
$(n-1)\cdot\tilde{n}_{i,n}=\sum_{j\neq i}\hat{x}_{i,n}=\frac{n-1}{2}$.
%\]
Let $q_{i,n}$ be the probability of this conditional on $i$'s information.
Let $\textnormal{SR}$ be the sum of a quadratic scoring rule and a continuous ranked probability score.
That is,
\[
\textnormal{SR}\left(\hat{b}_i,\tilde{v}_{i,n}\right)
=\textnormal{SR}_n^{\textnormal{Q}}\left(\hat{b}_{i,n},\tilde{v}_{i,n}\right)
+\textnormal{SR}^{\textnormal{CRPS}}\left(\hat{b}_{i,n},\tilde{v}_{i,n}\right)
\]
where $\textnormal{SR}^{\textnormal{CRPS}}$ is defined as in Appendix \ref{Ap2-T2} (replacing $\tilde{v}_{i,n}$ with $\tilde{n}_{i,n}$).
To define $\textnormal{SR}_n^{\textnormal{Q}}$, let
\[
\textnormal{SR}_n^{\textnormal{Q}}\left(\hat{b}_i,\tilde{v}_{i,n}\right)=2\cdot\textbf{1}(Q_{i,n})\cdot\hat{q}_{i,n}+2\cdot\textbf{1}(\lnot Q_{i,n})\cdot(1-\hat{q}_{i,n})-\hat{q}_{i,n}^2-(1-\hat{q}_{i,n})^2
\]
Each of these scoring rules is convenient in different parts of the proof.
Since each one is proper, the sum $\textnormal{SR}$ is also proper.

It is important that $\delta_n\to0$ and $\lambda_n\to\infty$.
When $\delta_n$ vanishes quickly, agents are unlikely to be pivotal, and therefore $\lambda_n$ must grow even more quickly in order to ensure near-truthfulness.

%The fact that the parameter $\delta_n$ cannot vanish too quickly means that achieving good finite sample performance likely requires larger transfers in the social choice application than it does in the market research application.

%In order to achieve reasonable performance in finite samples, 

%The proof of Lemma \ref{L5} describes trade-offs between these parameters in more detail.

%Although our focus is not on rates of convergence, and we do not attempt to optimize tuning parameters.

%For concreteness, set $\delta_n=n^{-1/2}$ and $\lambda_n=n^7$.

Next, we show that these majority-rule-with-betting mechanisms are \emph{nearly truthful}.
This means that each agent's reported alternative $\hat{x}_i$ probably approximately maximizes her expected value conditional on the information she acquires.

\begin{defn}
	The majority-rule-with-betting mechanism $\left(\textbf{x}^n,\textbf{t}^n\right)$ is \emph{$(\phi_1,\phi_2)$-truthful} if the following holds for every instance $I\in\mathcal{I}$ and equilibrium $(\textbf{s},\hat{\textbf{x}},\hat{\textbf{b}})$.
	Let agent $i$ report alternative $\hat{x}_i=\hat{\textbf{x}}_i\left(\textbf{s}_i,\textbf{s}_i(\omega,\theta_i)\right)$.
	Then	
	\[
	\pr{\max_{i=1,\ldots,n}\left|\max_{x\in\mathcal{X}}\ex{v_ix\mid \textbf{s}_i,\textbf{s}_i(\omega,\theta_i)}-\ex{v_i\hat{x}_i\mid \textbf{s}_i,\textbf{s}_i(\omega,\theta_i)}\right|>\phi_1}\leq\phi_2
	\]
\end{defn}

\begin{lemma}\label{L5}
	The mechanism $\left(\textbf{x}^n,\textbf{t}^n\right)$ is $(\phi_{1n},\phi_{2n})$-truthful, where $\phi_{1n}\to 0$ and $\phi_{2n}\to 0$.
\end{lemma}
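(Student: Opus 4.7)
The plan is to exploit two structural features of majority-rule-with-betting. First, since $\textbf{t}_i$ depends on $\hat{b}_i$ and $\tilde{n}_{i,n}$ but not on $\hat{x}_i$, the equilibrium choice of $\hat{x}_i$ maximizes only expected allocation utility. Second, the quadratic pivotality term $\textnormal{SR}^Q_n$ directly ties the value of information about the pivotality event $Q_{i,n}$ to the ``pivotality bias'' that could induce untruthful voting. Writing $y_i = \ex{v_i \mid \textbf{s}_i,\textbf{s}_i(\omega,\theta_i,v_i)}$ and $y_i^* = \ex{v_i \mid \textbf{s}_i,\textbf{s}_i(\omega,\theta_i,v_i),\, Q_{i,n}}$, the optimal vote is $\hat{x}_i = \textbf{1}(y_i^* \ge 0)$ while the truthful comparison is $\textbf{1}(y_i \ge 0)$. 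A short case check shows that the truthfulness loss is zero unless $y_i$ and $y_i^*$ disagree in sign, and in that event is bounded by $|y_i - y_i^*|$, so it suffices to bound $|y_i - y_i^*|$ uniformly.

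Under Assumption \ref{A4}, agent $i$ acquires either $s_i^R$ or $s_i^U$. If she acquires $s_i^R$ then $y_i = y_i^* = v_i$ and the loss vanishes; the substantive case is $s_i^U$ (possibly mixed). Here the plan is to invoke the no-deviation condition against $s_i^R$. Under the quadratic scoring rule, truthful reporting under information $\mathcal{F}$ earns expected score that is affine in $-\var{\textbf{1}(Q_{i,n})\mid\mathcal{F}}$, so the law of total variance identifies the scoring-rule gain from learning $v_i$ as $\var{\ex{\textbf{1}(Q_{i,n})\mid v_i}}$. Combined with the fact that the allocation gain from deviation is at most $\pr{Q_{i,n}}\le 1$ (since flipping $\hat{x}_i$ affects the outcome only under $Q_{i,n}$), the no-deviation inequality becomes
\[
\lambda_n \cdot \var{\ex{\textbf{1}(Q_{i,n})\mid v_i}} \;\le\; \bar{c} + 1.
\]

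The final step is a direct computation with the binary $v_i\in\{-1,1\}$: parameterizing by $\pi=\pr{v_i=1}$ and $p_\pm=\pr{Q_{i,n}\mid v_i=\pm 1}$ one obtains $|y_i^*-y_i| = 2\pi(1-\pi)|p_+-p_-|/\pr{Q_{i,n}}$ together with $\var{\ex{\textbf{1}(Q_{i,n})\mid v_i}} = \pi(1-\pi)(p_+-p_-)^2$. The dictator event yields the deterministic floor $\pr{Q_{i,n}}\ge\delta_n/n$, so combining with the variance bound gives $|y_i^*-y_i| = O\!\left(n/(\delta_n\sqrt{\lambda_n})\right)$. With $\delta_n = n^{-1/2}$ and $\lambda_n = n^7$ this collapses to $O(n^{-2})$, which (being a deterministic bound) delivers $\phi_{1n} = O(n^{-2})$ and $\phi_{2n}=0$.

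The main delicacy is parameter calibration: shrinking $\delta_n$ weakens the $\pr{Q_{i,n}}$ floor in the denominator, while growing $\lambda_n$ must dominate both $\bar{c}$ and the $1/\delta_n^2$ blow-up in the pivotality-bias estimate; the chosen exponents thread this needle comfortably. A secondary subtlety is that the variance bound uses only the component of $s_i^R$'s predictive value coming from $v_i$, ignoring the fact that $s_i^R$ may also reveal $(\omega,\theta_i)$; this is fine because by the tower property $\var{\ex{\textbf{1}(Q_{i,n})\mid v_i}}\le\var{\ex{\textbf{1}(Q_{i,n})\mid v_i,\omega,\theta_i}}$, and only the former quantity controls the sign-flip in $y_i$ versus $y_i^*$.
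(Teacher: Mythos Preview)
Your argument is correct and shares the paper's core idea: the quadratic pivotality score $\textnormal{SR}^Q_n$, together with the no-deviation condition against the revealing signal, forces the pivotality bias $|y_i-y_i^*|$ to be small whenever $\lambda_n$ is large relative to $(n/\delta_n)^2$. Where you diverge is in how you extract the bound. The paper proves a general inequality (Claim~\ref{T3-Cl1}) that controls $\ex{|y_i-y_i^*|}$ via a chain of Bayes-rule and norm manipulations valid for arbitrary finite $\mathcal{V}$ and $\mathcal{S}$, then applies Markov's inequality and a union bound; this yields a probabilistic guarantee with $\phi_{2n}>0$ and is later reused verbatim for the VCG-with-betting analysis. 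You instead exploit Assumption~\ref{A4} directly: with $v_i\in\{-1,1\}$ and the trivial signal $s_i^U$ giving no information, the Bayes computation closes in one line, producing a \emph{deterministic} pointwise bound and hence the sharper conclusion $\phi_{2n}=0$. Your route is more elementary and tighter in the special case, at the cost of not extending to richer signal or value spaces.

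Two minor cleanups. First, the ``$+1$'' in your no-deviation inequality is unnecessary: since the allocation gain from deviating to $s_i^R$ is nonnegative (the informed agent can always mimic the uninformed vote), it should be subtracted from the right side, not added, giving simply $\lambda_n\cdot\var{\ex{\textbf{1}(Q_{i,n})\mid v_i}}\le\bar{c}$. Second, with the paper's normalization of $\textnormal{SR}^Q_n$ the quadratic-score gain from learning $v_i$ is $2\var{\ex{\textbf{1}(Q_{i,n})\mid v_i}}$, not $\var{\cdot}$. Neither affects your asymptotics.
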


We also prove a result analogous to Lemma \ref{L2}.
This says that agent $i$'s error $\epsilon_i$ is uncorrelated with the vote share $\tilde{n}_{i,n}$.

\begin{lemma}\label{L6}
	In any strategy profile of the mechanism $(\textbf{x}_n,\textbf{t}_n)$ where agent $i$ does not have a profitable deviation,
	\[
	\left|\cov{\epsilon_{i,n},\ex{\tilde{n}_{i,n}\mid\omega}}\right|=O\left(\lambda_n^{-1/2}\right)
	\]
\end{lemma}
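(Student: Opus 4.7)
The plan is to mirror the proof of Lemma \ref{L2} almost verbatim, with the vote share $\tilde{n}_{i,n}$ playing the role previously played by the average reported value $\tilde{v}_{i,n}$, and with the CRPS component of the scoring rule doing the work (the quadratic-pivot-probability component $\textnormal{SR}^Q_n$ will not be used here, since its contribution is only via $\mathbf{1}(Q_{i,n})$ and we can always bound the total gain from below by the CRPS gain alone, both being non-negative for a proper scoring rule). The core idea is that if $\cov{\epsilon_{i,n},\ex{\tilde{n}_{i,n}\mid\omega}}$ were not small, then learning $\epsilon_{i,n}$ would substantially improve agent $i$'s ability to predict $\tilde{n}_{i,n}$ via the CRPS, and combining her current signal with the revealing signal would be a profitable deviation.

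First I would set up the candidate deviation. Let $s_i$ be agent $i$'s signal in the given strategy profile, with reported value $\hat{v}_i$ and error $\epsilon_{i,n}=v_i-\hat{v}_i$. By Assumption \ref{A6} there is a revealing signal of finite cost $\bar c$, and by Assumption \ref{A5} there is a combined signal that includes both $s_i$ and the revealing signal, with cost at most $c(s_i,\theta_i)+\bar c$. After this deviation, agent $i$ knows $v_i$ (hence $\epsilon_{i,n}$); I let her leave her reported alternative unchanged and reoptimize only her reported belief $\hat{b}_{i,n}$ about $\tilde{n}_{i,n}$. The extra cost incurred is at most $\bar c$, independent of $n$.

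Next I would lower-bound the expected gain in the CRPS component from this deviation. By the Bregman-divergence representation of proper scoring rules, the expected CRPS improvement from additionally conditioning on $\epsilon_{i,n}$ equals
\[
\int_0^1 \var{\,\pr{\tilde{n}_{i,n}\le z\mid s_i,\epsilon_{i,n}}\,\big|\,s_i\,}\,dz.
\]
Because types are conditionally i.i.d.\ given $\omega$, the law of large numbers applied to $\tilde{n}_{i,n}=\tfrac{1}{n}\sum_{j\neq i}\hat{x}_j$ gives $\tilde{n}_{i,n}\to_p \ex{\hat{x}_j\mid\omega}$, so for large $n$ the indicator $\mathbf{1}(\tilde{n}_{i,n}\le z)$ is well-approximated by a function of $\omega$ for every $z$ outside a negligible neighborhood of the limiting value. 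A standard Cauchy--Schwarz estimate then shows that the displayed integral is bounded below by a positive constant times $\left(\cov{\epsilon_{i,n},\ex{\tilde{n}_{i,n}\mid\omega}}\right)^2$, up to an error that vanishes as $n\to\infty$. Multiplying by $\lambda_n$ and invoking the no-profitable-deviation hypothesis $\lambda_n\cdot(\text{CRPS gain})\le \bar c$ yields $\cov{\epsilon_{i,n},\ex{\tilde{n}_{i,n}\mid\omega}}^2=O(1/\lambda_n)$, which is the desired bound.

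The main obstacle is the lower bound in the second step. One must show, uniformly in the strategy profile, that the CRPS integral over $z\in[0,1]$ picks up a contribution of order $\cov{\epsilon_{i,n},\ex{\tilde{n}_{i,n}\mid\omega}}^2$ rather than getting lost in variance across $\omega$-realizations. Handling this cleanly requires exploiting the discreteness of the realized distribution $F(\omega)$ (so that $\ex{\hat{x}_j\mid\omega}$ takes finitely many values and the CDF of $\tilde{n}_{i,n}$ has well-separated jumps in the limit), and controlling the $O(n^{-1/2})$ LLN approximation error so that it is dominated by $\lambda_n^{-1/2}$ for $\lambda_n$ as chosen. Beyond this, the argument parallels the proof of Lemma \ref{L2} and I would refer to that proof for the remaining book-keeping.
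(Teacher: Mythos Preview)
Your high-level plan---mirror the proof of Lemma \ref{L2}, deviate to the combined revealing signal, and bound the CRPS gain from below---is exactly what the paper does (Lemmas \ref{L2}, \ref{L6}, and \ref{L10} are proved together in the Supplemental Appendix, with $Z_{i,n}=\tilde{n}_{i,n}$ for Lemma \ref{L6}). The setup of the deviation and the observation that the $\textnormal{SR}^Q$ contribution can be dropped (it is nonnegative) are both correct.

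The gap is your law-of-large-numbers step. You propose to approximate $\tilde{n}_{i,n}$ by $\ex{\hat{x}_j\mid\omega}$ and then bound the CRPS integral by $c\cdot\cov{\epsilon_{i,n},\ex{\tilde{n}_{i,n}\mid\omega}}^2$ \emph{up to an additive error that vanishes as $n\to\infty$}. But once you multiply by $\lambda_n$ and impose $\lambda_n\cdot(\text{CRPS gain})\leq\bar{c}$, that additive $o(1)$ error becomes an $o(1)$ term in the bound on $\cov^2$, not an $O(\lambda_n^{-1})$ term. With the paper's choice $\lambda_n=n^7$, the LLN error is of order $n^{-1/2}$, which is far larger than $\lambda_n^{-1/2}=n^{-7/2}$; the obstacle you flag in your final paragraph cannot be overcome this way.

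The paper avoids the LLN entirely. It first bounds the CRPS gain directly in terms of $\cov{\epsilon_{i,n},\tilde{n}_{i,n}}$ (not the conditional expectation) via the chain: law of total variance gives the gain as $\int\ex{\var[i,n]{\ex[i,n]{\mathbf{1}(\tilde{n}_{i,n}\leq z)\mid\epsilon_{i,n}}}}\,dz$; Cauchy--Schwarz with $|\epsilon_{i,n}|\leq v_H-v_L$ bounds each integrand below by $(v_H-v_L)^{-2}\cov[i,n]{\epsilon_{i,n},\mathbf{1}(\tilde{n}_{i,n}\leq z)}^2$; H\"older converts the $L^2$ integral to an $L^1$ integral; and the tail formula $\int\cov[i,n]{\epsilon_{i,n},\mathbf{1}(\tilde{n}_{i,n}\leq z)}\,dz=-\cov[i,n]{\epsilon_{i,n},\tilde{n}_{i,n}}$ finishes the bound. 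All of this is exact in $n$. Then---and this is the step you are missing---conditional independence of $\epsilon_{i,n}$ and $\tilde{n}_{i,n}$ given $\omega$ (agent $i$'s error depends only on her own type, the vote share only on others') yields the exact identity $\cov{\epsilon_{i,n},\tilde{n}_{i,n}}=\cov{\epsilon_{i,n},\ex{\tilde{n}_{i,n}\mid\omega}}$, with no asymptotic error at all. Replace your LLN argument with this identity and the rest of your proof goes through.
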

\begin{proof}
	Same as proof of Lemma \ref{L2}, except that $Z_{i,n}=\tilde{n}_{i,n}$.
\end{proof}

So far, we have not relied on Assumption \ref{A4}.
Now, we use it to show that efficient equilibria exist.
Let $\mu=\ex{v_i}$.
Without loss, let $\mu\neq 0$.
There are two cases to consider.
%\begin{enumerate}
%	\item 

\emph{Case 1: }
Suppose that
	$
	\ex{(v_i-\mu)\cdot\pr{v_j=1\mid\omega}}>0
	$.
	Then there exists some number $N$ such that, for any sample size $n\geq N$, there is an equilibrium where every agent $i$ acquires the revealing signal and reports $\hat{x}_i=\textbf{1}(v_i=1)$.
	To see this, note that the only deviation for agent $i$ that might be profitable involves acquiring the uninformative signal.
	Then $\epsilon_i=v_i-\mu$, and
	\[
	\ex{\epsilon_i\cdot\ex{\tilde{n}_{i,n}\mid\omega}}=\ex{\epsilon_i\cdot\pr{v_j=1\mid\omega}}=\ex{(v_i-\mu)\cdot\pr{v_j=1\mid\omega}}>0
	\]
	Since $\lambda_n\to\infty$, this eventually contradicts Lemma \ref{L6}.
	
	Finally, observe that the vote share converges to $\pr{v_i=1\mid\omega}$, which exceeds 50\% if and only if $\ex{v_i\mid\omega}\geq 0$.
	In the limit, since $\delta_n\to0$, alternative $x=1$ is chosen if and only if it maximizes welfare.
	
	%\item 
    \emph{Case 2: } Suppose that $\ex{(v_i-\mu)\cdot\pr{v_j=0\mid\omega}}=0$.
	Observe that
	\begin{align*}
		\ex{v_i\mid\omega}
		&=\pr{v_i=1\mid\omega}-\pr{v_i=-1\mid\omega}\\
		&=\pr{v_i=1\mid\omega}-1+\pr{v_i=1\mid\omega}\\
		&=2\pr{v_i=1\mid\omega}-1
	\end{align*}
	Combining this with our supposition implies that
	%\[
	$\ex{(v_i-\mu)\cdot\frac{\ex{v_i\mid\omega}+1}{2}}=0.$
	%\]
	After simplifying this expression, we find that $\var{\ex{v_i\mid\omega}}=0$.
	Therefore, $\ex{v_i\mid\omega}=\mu$.
	The outcome $x=1$ is efficient if $\mu>0$ and inefficient if $\mu<0$.
	
	Now, consider a strategy profile where every agent $i$ acquires the uninformative signal and optimally reports $\hat{x}_i=\textbf{1}(\mu=1)$.
	Since the vote share $\tilde{n}_{i,n}$ does not depend on the state $\omega$, acquiring the revealing signal is not a profitable deviation.
	Therefore, this is an equilibrium.
	In this equilibrium, the chosen alternative is $x=\textbf{1}(\mu>0)$.
	As we just argued, this is efficient.
%\end{enumerate}

In both cases, an efficient equilibrium exists.
All that remains is to verify that, in all equilibria, welfare is at least as high as if the planner had chosen the ex-ante optimal alternative.
%Let $n$ be large.
%Continue to assume, without loss of generality, that $\mu\neq 0$.
In a given equilibrium, let 
%the random set $N$ include 
$N$ be the set of agents that acquire the revealing signal.
All agents $i\in N$ report $\hat{x}_i=\textbf{1}(v_i=1)$.
If this group of agents prevails in the vote, then the ex-post optimal alternative is chosen with high probability.
By Lemma \ref{L5}, with high probability, every agent $i\notin N$ reports $\hat{x}_i=\textbf{1}(\mu>0)$.
If this group of agents prevails in the vote, then the ex-ante optimal alternative is chosen.
Either way, the chosen alternative is at least as good as the ex-ante optimal alternative.

	\subsection{Proof of Lemma \ref{L2}}\label{Ap3-L2}

We prove a slightly generalized result.
Let $Z_{i,n}\in[z_L,z_H]$ be a random variable, where $\epsilon_{i,n}$ and $Z_{i,n}$ are independent conditional on the state $\omega$.
For Lemma \ref{L2}, let $Z_{i,n}=\tilde{v}_{i,n}$.

We begin by introducing notation.
First, let $\bar{c}<\infty$ be an upper bound on the cost of combined signal $\bar{s}_i$ (Definition \ref{D100}).
Second, define an analog $\tilde{\rho}_n$ to predictiveness $\rho$ that asks how well a given signal $s_i$ predicts $\ex{Z_{i,n}\mid\omega}$.
Formally
\[
\tilde{\rho}_n\left(s_i\right)=\lambda_n\ex{\max_{b_i}\ex[i,n]{\textnormal{SR}^{\textnormal{CRPS}}\left(b_i,Z_{i,n}\right)}}
\]

\begin{claim}\label{Cx9}
	Fix a strategy profile where agent $i$ acquires a signal $s_i$ where $c(\theta_i,s_i)<\infty$ for all types $\theta_i$ in the support of the joint distribution $F$.
	If the covariance between her error $\epsilon_{i,n}$ and the random variable $Z_{i,n}$ is positive, she can increase her expected score by acquiring the combined signal $\bar{s}$.
	That is,
	\[
	\tilde{\rho}_n\left(\bar{s}_i\right)\geq\tilde{\rho}_n\left(s_i\right)+\frac{\lambda_n}{(v_H-v_L)^3}\left|\cov{\epsilon_{i,n},Z_{i,n}}\right|^2
	\]
\end{claim}

Having established conditions under which the agent could increase her expected score by acquiring the combined signal $\bar{s}$, we can limit the extent to which those conditions hold in an equilibrium.
More precisely, we argue that the covariance between the error $\epsilon_{i,n}$ and the random variable $Z_{i,n}$ cannot be too large in equilibrium.

\begin{claim}\label{Cx4}
	In any equilibrium, the covariance between the error $\epsilon_{i,n}$ and the conditional expectation $\ex{Z_{i,n}\mid\omega}$ is bounded.
	That is,
	\[
	\left|\cov{\epsilon_{i,n},Z_{i,n}}\right|=\left|\cov{\epsilon_{i,n},\ex{Z_{i,n}\mid\omega}}\right|\leq\sqrt{\frac{(v_H-v_L)^3\bar{c}}{\lambda_n}}
	\]
\end{claim}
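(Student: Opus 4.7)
The plan is to combine the score-gain lower bound in Claim~\ref{Cx9} with an equilibrium no-deviation argument. If $|\cov{\epsilon_{i,n},Z_{i,n}}|^2$ were large, agent $i$ could strictly improve her expected utility by deviating from her equilibrium signal $s_i$ to the combined signal $\bar{s}_i$: she would pay at most an extra $\bar{c}$ in information cost in exchange for a score gain bounded below by Claim~\ref{Cx9}. Setting these two quantities equal yields the claimed bound.

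The substantive step is showing that the deviation to $\bar{s}_i$ affects agent $i$'s expected utility through only two channels: a score gain of at least $\tilde{\rho}_n(\bar{s}_i)-\tilde{\rho}_n(s_i)$ and a cost increase of at most $\bar{c}$. First, in equilibrium the chosen $s_i$ must satisfy $c(\theta_i,s_i)<\infty$ for every $\theta_i$ in the support of $F$, for otherwise her expected utility would be $-\infty$; so Claim~\ref{Cx9} applies. Second, because $\bar{s}_i$ refines $s_i$, agent $i$ can always reproduce her equilibrium reports by simply ignoring the extra information, so optimizing her message under $\bar{s}_i$ only helps. Properness of the allocation-side mechanism (BDM in Lemma~\ref{L2}, majority-rule with dictator in Lemma~\ref{L6}, VCG with dictator in Lemma~\ref{L10}) then implies that her expected allocation-plus-transfer utility from the value-report channel weakly improves; properness of $\textnormal{SR}^{\textnormal{CRPS}}$ implies that her expected score equals $\tilde{\rho}_n(\bar{s}_i)$. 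Equilibrium no-deviation therefore gives $\tilde{\rho}_n(\bar{s}_i)-\tilde{\rho}_n(s_i)\leq\bar{c}$, and combining with Claim~\ref{Cx9} yields
\[
\frac{\lambda_n}{(v_H-v_L)^3}\,\bigl|\cov{\epsilon_{i,n},Z_{i,n}}\bigr|^2\leq\bar{c},
\]
which rearranges to the first inequality in the claim.

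For the equality $\cov{\epsilon_{i,n},Z_{i,n}}=\cov{\epsilon_{i,n},\ex{Z_{i,n}\mid\omega}}$, I would apply the law of total covariance to both sides. Conditional independence of $\epsilon_{i,n}$ and $Z_{i,n}$ given $\omega$ (a standing assumption in this appendix) collapses the left side to $\cov{\ex{\epsilon_{i,n}\mid\omega},\ex{Z_{i,n}\mid\omega}}$, and conditioning the right side on $\omega$ produces the same expression. The main obstacle is the bookkeeping in the deviation argument: one must simultaneously track the allocation, transfer, score, and cost channels and verify that the replication-plus-best-response construction isolates the score channel as the only way the deviation can matter (besides cost). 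No new conceptual ingredient beyond Claim~\ref{Cx9} and properness of the two elicitation devices is needed.
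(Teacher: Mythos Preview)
Your proposal is correct and follows essentially the same route as the paper: combine Claim~\ref{Cx9} with the equilibrium no-deviation inequality $\tilde{\rho}_n(\bar{s}_i)-\tilde{\rho}_n(s_i)\leq\bar{c}$, then use conditional independence of $\epsilon_{i,n}$ and $Z_{i,n}$ given $\omega$ to pass from $\cov{\epsilon_{i,n},Z_{i,n}}$ to $\cov{\epsilon_{i,n},\ex{Z_{i,n}\mid\omega}}$. Your treatment is in fact slightly more explicit than the paper's in isolating the allocation/transfer channel (via replication and properness of BDM/VCG) and in noting that the equilibrium signal must have finite cost on the support of $F$; the paper also separately checks the trivial case $s_i=\bar{s}_i$, which in your framing is absorbed because the no-deviation inequality then holds with equality and Claim~\ref{Cx9} forces the covariance to vanish.
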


    \subsection{Proof of Corollary \ref{L8}}

Fix an instance $I\in\mathcal{I}$ and an equilibrium $(\textbf{s}^n,\textbf{m}^n)$.
Let agent $i$ send message $m_i=\textbf{m}^n_i(\textbf{s}^n_i,\textbf{s}^n_i(\omega,\theta_i))$.
The planner's expected square loss is
\begin{align}
	\ex{\left(\frac{1}{n}\sum_{i=1}^n\hat{v}_i-v_j\right)^2}
	&=\ex{v_j^2}
	+\ex{\left(\frac{1}{n}\sum_{i=1}^n\hat{v}_i\right)^2}
	-2\ex{v_j\cdot\frac{1}{n}\sum_{i=1}^n\hat{v}_i}
	\label{T2-Cl1-E1}
\end{align}
Focus on the (simplified) third term, i.e.,
\begin{align}
	&-2\ex{v_j\cdot\hat{v}_i}\\
	&=-2\ex{\hat{v}_j\cdot\hat{v}_i}
	-2\ex{\epsilon_j\cdot\hat{v}_i}
	\tag{defn. of $\epsilon_j$}\\
	&=-2\ex{\ex{\hat{v}_j\mid\omega}^2}
	-2\ex{\ex{\epsilon_j\mid\omega}\cdot\ex{\hat{v}_i\mid\omega}}
	\tag{LIE and conditional i.i.d.}\\
	&=-2\ex{\ex{\hat{v}_j\mid\omega}^2}
	-2\ex{\epsilon_j\cdot\ex{\hat{v}_i\mid\omega}}
	\tag{LIE}\\
	&=-2\ex{\ex{\hat{v}_j\mid\omega}^2}
	+o(1)
	\tag{by Lemma \ref{L2}}
\end{align}
Returning to the earlier expression \eqref{T2-Cl1-E1}, we find that
\begin{align}
	\eqref{T2-Cl1-E1}
	&=\ex{v_j^2}
	+\ex{\left(\frac{1}{n}\sum_{i=1}^n\hat{v}_i\right)^2}
	-2\ex{\ex{\hat{v}_i\mid\omega}^2}
	+o(1)
\end{align}
Since $\hat{v}_i$ depends on the equilibrium $(\textbf{s}^n,\textbf{m}^n)$, its distribution varies with $n$.
The weak law of large numbers for triangular arrays ensures that $\frac{1}{n}\sum_{i=1}^n\hat{v}_i\to_p\ex{\hat{v}_i\mid\omega}$.
Combining this with the continuous mapping theorem, we find that
\begin{align}
	\eqref{T2-Cl1-E1}
	&=\ex{v_j^2}
	-\ex{\ex{\hat{v}_i\mid\omega}^2}
	+o(1)\notag\\
	&=\ex{v_j^2}
	-\ex{v_i}^2
	+\ex{\hat{v}_i}^2
	-\ex{\ex{\hat{v}_i\mid\omega}^2}
	+o(1)
	\tag{since $\ex{\epsilon_i}=0$}\\
	&=\ex{v_j^2}
	-\ex{v_i}^2
	+\ex{\ex{\hat{v}_i\mid\omega}}^2
	-\ex{\ex{\hat{v}_i\mid\omega}^2}
	+o(1)
	\tag{LIE}\\
	&=\var{v_j}
	-\var{\ex{\hat{v}_i\mid\omega}}
	+o(1)
	\tag{defn. of $\var{\cdot}$}\\
	&\leq\var{v_i}+o(1)
	\tag{since $\var{\cdot}\geq 0$}
\end{align}
The last term is precisely the ex-ante benchmark.
	
	\subsection{Proof of Lemma \ref{L3}}\label{Ap3-L3}

This proof consists of three claims.
It uses notation from the proof of Lemma \ref{L2}.
In order to save space, we prove a slightly stronger result that refers to notation from the proof of Theorem \ref{T3} (both the special and general cases).
Lemma \ref{L3} corresponds to the case where $\phi_{1,n}=\phi_{2,n}=0$, $n'=n-1$, $v_{0,n}=0$, and $\delta_n=0$.

Let agent $i$ acquire signal $s_i$ and report values $\hat{v}_i=\hat{\textbf{v}}_i\left(s_i,s_i(\omega,\theta_i)\right)$.
We assume that
\begin{equation}\label{E15}
\pr{\max_{i=1,\ldots,n}\left|\hat{v}_i-\ex{v_i\mid s_i,s_i(\omega,\theta_i)}\right|>\phi_{1,n}}\leq\phi_{2,n}
\end{equation}
We define
$
\tilde{v}_{i,n}=\tfrac{1}{n'}\left(v_{0,n}+\sum_{j\neq i}\hat{v}_{j,n}\right).
$
We begin by showing that the cross-covariance between the error $\epsilon_{i,n}$ and the conditional expectation $\ex{v_j\mid\omega}$ is zero.
This follows by construction of the auxiliary instance and an argument similar to Lemma \ref{L2}.

\begin{claim}\label{Cx5}
	Consider a strategy profile in the auxiliary instance where agent $i$ acquires a signal $s_i$ with finite expected cost
	$
	\ex{\tilde{c}(\theta_i,s_i)}<\infty
	$.
	Then the covariance between the error $\epsilon_{i,n}$ and the conditional expected values $\ex{v_j\mid\omega}$ is zero, i.e.,
	%\[
	$\left\|\cov{\epsilon_{i,n},\ex{v_j\mid\omega}}\right\|=0.$
	%\]
\end{claim}

With Claims \ref{Cx4} and \ref{Cx5} in hand, we can argue that the variance of the conditional expectation of the error
cannot be too large.

\begin{claim}\label{Cx6}
	In any equilibrium of the auxilliary instance, the variance of the conditional expected error %$\ex{\frac{1}{n}\sum_{j=1}^n\epsilon_{j,n}\mid\omega}$
	is bounded, i.e.,
	\begin{equation*}
	\var{\ex{\frac{1}{n}\sum_{j=1}^n\epsilon_{j,n}\mid\omega}}=o(1)
	\quad\text{and}\quad
	\var{\ex{\frac{1}{n-1}\sum_{j\neq i}\epsilon_{j,n}\mid\omega}}=o(1)
	\end{equation*}
\end{claim}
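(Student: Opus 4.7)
My strategy is to combine Claims \ref{Cx4} and \ref{Cx5} to control the off-diagonal covariances of the conditional expected errors $e_{j,n}=\ex{\epsilon_{j,n}\mid\omega}$, then expand the target variance and bound the diagonal trivially by the boundedness of $\epsilon_{j,n}$.

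Step 1 is to rewrite
\[
\ex{\tilde{v}_{i,n}\mid\omega}=\frac{1}{n-1}\sum_{j\neq i}\bigl(\ex{v_j\mid\omega}-e_{j,n}\bigr)
\]
by substituting $\hat{v}_{j,n}=v_j-\epsilon_{j,n}$. Conditional on $\omega$, the $v_j$'s are identically distributed, so $\ex{v_j\mid\omega}$ is the same for every $j$; in the auxiliary instance, Claim \ref{Cx5} gives $\cov{\epsilon_{i,n},\ex{v_j\mid\omega}}=0$, so Claim \ref{Cx4} reduces to
\[
\frac{1}{n-1}\sum_{j\neq i}\cov{\epsilon_{i,n},e_{j,n}}=O\bigl(\lambda_n^{-1/2}\bigr)\quad\text{for each }i.
\]
The key algebraic simplification is the identity $\cov{\epsilon_{i,n},e_{j,n}}=\cov{e_{i,n},e_{j,n}}$, which follows from the law of iterated expectations: $\ex{\epsilon_{i,n}e_{j,n}}=\ex{e_{j,n}\ex{\epsilon_{i,n}\mid\omega}}=\ex{e_{i,n}e_{j,n}}$, together with $\ex{\epsilon_{i,n}}=\ex{e_{i,n}}=0$.

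Step 2 is to expand
\[
\var{\frac{1}{n}\sum_{j=1}^n e_{j,n}}=\frac{1}{n^2}\sum_i\var{e_{i,n}}+\frac{1}{n^2}\sum_i\sum_{j\neq i}\cov{e_{i,n},e_{j,n}}.
\]
The first sum is $O(1/n)$ because $|e_{i,n}|\leq v_H-v_L$, and the second sum equals $\frac{n-1}{n}$ times the average over $i$ of the displayed bound, hence $O(\lambda_n^{-1/2})$. Since $n\to\infty$ and $\lambda_n\to\infty$, the total is $o(1)$, which establishes the first equation. For the second equation, I write $\frac{1}{n-1}\sum_{j\neq i}e_{j,n}=\frac{n}{n-1}\cdot\frac{1}{n}\sum_j e_{j,n}-\frac{e_{i,n}}{n-1}$ and use $\var{X-Y}\leq 2\var{X}+2\var{Y}$ together with $\var{e_{i,n}/(n-1)}=O(1/n^2)$ to reduce to the first equation.

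I expect the main subtlety to be adapting this argument to the VCG-with-betting case of Lemma \ref{L11}, where $\tilde{v}_{i,n}$ also includes the bias term $v_{0,n}/n$ (negligible because $\beta_n=o(n)$) and where reporting is only nearly truthful in the sense of Lemma \ref{L9}, so that $\hat{v}_{j,n}=\ex{v_j\mid s_j,s_j(\omega,\theta_j,v_j)}$ holds only up to an error of order $\phi_{1,n}+O(\phi_{2,n})$. These features introduce additive $o(1)$ perturbations into $\ex{\tilde{v}_{i,n}\mid\omega}$ that should propagate cleanly through the covariance bounds without changing the $o(1)$ conclusion, but keeping track of them carefully is where most of the work lies.
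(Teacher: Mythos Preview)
Your proposal is correct and follows essentially the same approach as the paper: both arguments expand the variance into diagonal and off-diagonal pieces, bound the diagonal trivially by $(v_H-v_L)^2/n$, and handle the off-diagonal by combining Claim~\ref{Cx5} (to kill the $\ex{v_j\mid\omega}$ contribution) with Claim~\ref{Cx4} (to bound the remaining $\ex{\hat v_{j,n}\mid\omega}$ contribution by $O(\lambda_n^{-1/2})$). Your explicit identity $\cov{\epsilon_{i,n},e_{j,n}}=\cov{e_{i,n},e_{j,n}}$ is exactly what the paper obtains implicitly via the step $\ex{\ex{\frac{1}{n}\sum_j\epsilon_{j,n}\mid\omega}^2}=\ex{\ex{\frac{1}{n}\sum_j\epsilon_{j,n}\mid\omega}\cdot\frac{1}{n}\sum_j\epsilon_{j,n}}$, and your reduction of the second statement to the first via $\var{X-Y}\le 2\var X+2\var Y$ is a clean alternative to the paper's remark that an $O(n^{-1})$ perturbation has vanishing effect.
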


We use Claim \ref{Cx6} to argue that the average error concentrates around zero.

\begin{claim}\label{Cx10}
	For agent $i$, the average reported value $\tilde{v}_{i,n}$ concentrates around the conditional expectation $\ex{v_j\mid\omega}$.
	That is, $\lim_{n\to\infty}\pr{\left|\tilde{v}_{i,n}-\ex{v_j\mid\omega}\right|\geq t}=0$ $\forall t > 0$.
%	\[
%	\forall t>0:\quad\lim_{n\to\infty}\pr{\left|\tilde{v}_{i,n}-\ex{v_j\mid\omega}\right|\geq t}=0
%	\]
\end{claim}

Finally, let the betting stakes $\lambda_n\to\infty$ as the sample size $n\to\infty$.
Claim \ref{Cx10} implies
$
\tilde{v}_{i,n}\to_p\ex{v_j\mid\omega}
$.
It follows from this and the continuous mapping theorem that
\[
\frac{1}{n'+1}\left(v_{0,n}+\sum_{j=1}^n\hat{v}_{j,n}\right)
=\frac{\hat{v}_{i,n}}{n'+1}+\left(\frac{n'}{n'+1}\right)\tilde{v}_{i,n}\to_p\ex{v_j\mid\omega}
\]
	
	\subsection{Proof of Lemma \ref{L4}}

This proof has four parts. It uses notation from Appendices \ref{Ap3-L2} and \ref{Ap3-L3}.

First, consider any signal $s_i$ where, for any technology $\theta_i$, its cost in the auxiliary instance is different from its cost in the original instance.
Equivalently, consider any signal $s_i$ where
$
\rho\left(\bar{s}_i\right)-\rho\left(s_i\right)>0
$.
Since $\mathcal{S}$ is finite, we can define a gap $\delta>0$ where
\begin{equation}\label{E18}
	\delta:=\min_{s_i\text{ s.t. }\rho\left(s_i\right)<\rho\left(\bar{s}_i\right)}
	\rho\left(\bar{s}_i\right)-\rho\left(s_i\right)
\end{equation}

Second, we consider a sequence of moments and show that it converges.
For any random variable $Z$, define a moment that depends on additional parameter $z\in\reals$, i.e.,
\[
f(Z,z)=\ex{\left(\pr[i]{Z\leq z}-\textbf{1}(Z\leq z)\right)^2}
\]

\begin{claim}\label{Cx11}
	The function $f(\tilde{v}_{i,n},\cdot)\to f(\ex{v_j\mid\omega},\cdot)$ almost everywhere.
\end{claim}

Third, we characterize the limiting behavior of $\tilde{\rho}_n(s_i)$ as $n\to\infty$.
\begin{align}
	\lim_{n\to\infty}\lambda_n^{-1}\tilde{\rho}_n(s_i)
	&=-\sum_{x=1}^k\lim_{n\to\infty}\int_0^{\bar{v}}f(\tilde{v}_{i,n},z)dz\tag{definition of $\tilde{\rho}_n$}\\
	&=-\sum_{x=1}^k\int_0^{\bar{v}}f(\ex{v_j\mid\omega},z)dz\tag{bounded convergence thm. \& Claim \ref{Cx11}}\\
	&=\rho(s_i)\tag{definition of $\rho$}
\end{align}
It follows that there exists a threshold $N(s_i)$ such that, for any $n\geq N(s_i)$,
\begin{equation}\label{E21}
	\left|\rho(s_i)-\lambda_n^{-1}\tilde{\rho}_n(s_i)\right|<\frac{\delta}{4}
\end{equation}
Let $N=\max_{s_i\in\mathcal{S}}N(s_i)$.
This quantity exists since $\mathcal{S}$ is a finite set.

Fourth and finally, we show that agent $i$ does not have a profitable deviation when $n\geq N$.
Assume that all other agents continue to play according to an equilibrium of the auxiliary instance.
Let $s_i$ be the signal that she acquires in the equilibrium of the auxiliary instance.
If agent $i$ continues to acquire the same signal $s_i$, then her best response $\hat{v}_i$ does not change.
Suppose that agent $i$ acquires a different signal $s'_i$.
We only need to consider deviations to signals $s'_i$ that were not available in the auxiliary instance, which implies
\begin{equation}\label{E19}
	\rho\left(\bar{s}_i\right)-\rho\left(s'_i\right)\geq\delta
\end{equation}
where $\delta>0$ was defined in equation \eqref{E18}.
Since $s_i$ was chosen in equilibrium of the auxiliary instance, it must have finite cost.
By definition of $\tilde{c}$ in \eqref{T2-E1}, this implies
\begin{equation}\label{E20}
	\rho\left(\bar{s}_i\right)
	=\rho\left(s_i\right)
\end{equation}
Combining the last few inequalities, we find
\begin{align}
	\tilde{\rho}_n\left(s_i\right)-\tilde{\rho}_n\left(s'_i\right)
	&>\lambda_n\rho\left(s_i\right)-\rho\left(s'_i\right)-\frac{\delta\lambda_n}{2}\tag{equation \eqref{E21}}\\
	&=\lambda_n\rho\left(\bar{s}_i\right)-\rho\left(s'_i\right)-\frac{\delta\lambda_n}{2}\tag{equation \eqref{E20}}\\
	&\geq\frac{\delta\lambda_n}{2}\tag{inequality \eqref{E19}}
\end{align}
This is the expected loss in transfers from the scoring rule $\textnormal{SR}^{\textnormal{CRPS}}$, which is a lower bound for the expected loss from the scoring rule $\textnormal{SR}$.
The maximum gain from deviating is at most the cost $\bar{c}$ of the combined signal, plus the maximum payoff $v_H-v_L$ from the BDM mechanism (Lemma \ref{L4}) or the VCG mechanism (Lemma \ref{L12}).
Therefore, a sufficient condition for the deviation to not be profitable is
%\[
$\frac{\delta\lambda_n}{2}\geq\bar{c}+v_H-v_L.$
%\]
Since $\lambda_n\to\infty$, there is some threshold $N'$ such that this condition holds for all $n\geq N'$,
so no profitable deviations exist whenever $n\geq\max\{N,N'\}$.
	
	\subsection{Proof of Lemma \ref{L1}}

Let $(\textbf{s},\textbf{m})$ be an equilibrium for instance $I$.
Let $s_i=\textbf{s}_i$ and 
$
m_i=\textbf{m}_i\left(\textbf{s}_i,\textbf{s}_i(\omega,\theta_i)\right)
$.
We claim that the vector $(v_i,\theta_i,s_i,m_i)$ is independent across agents $i$.
This follows from three observations.
First, the values $v_i$ and types $\theta_i$ are independent across agents $i$, since the state space is a singleton.
Third, $s_i$ and $m_i$ are random functions of $(\omega,\theta_i)$. The arguments are independent across agents and the mixing in mixed strategies is independent across agents.

Consider the distribution of acquired signals $s_i$.
We claim that agent $i$ does not learn her value $v_i$.
We begin by making some observations.
The good signal is only useful to agent $i$ insofar as it affects her choice of which message $m_i$ to send.
The cost of the revealing signal is $\tilde{c}>0$ and the agent will only acquire it if she benefits from the information $v_i$ that it provides.
However, this information does not affect her beliefs about other agents' messages, since $v_i$ and $m_{-i}$ are independent.

Suppose agent $i$ acquires the revealing signal
at a cost of $\bar{c}$.
%Her cost of doing so is $\bar{c}$.
To find a profitable deviation, consider 
her expected utility 
$U_i\left(\textbf{s},\textbf{m},\textbf{x},\textbf{t}\right)
=\ex{
	u_i(v_i,\textbf{x}_i(m),\textbf{t}_i(m),s_i,\theta_i)
}.$
Recall that her value is $v_i\in\{a,b\}$.
Suppose agent $i$ sends message $m^a_i$ if she learns $v_i=a$ and $m^b_i$ if she learns $v_i=b$.
Then
\begin{align}
	&\ex{u_i(a,\textbf{x}_i(m_i^b,m_{-i},\textbf{t}_i(m_i^b,m_{-i}),s_i,\theta_i)}\notag\\
	&\geq\ex{u_i(b,\textbf{x}_i(m_i^b,m_{-i},\textbf{t}_i(m_i^b,m_{-i}),s_i,\theta_i)}-|b-a|\label{L1-E1}\\
	&\geq\ex{u_i(b,\textbf{x}_i(m_i^a,m_{-i},\textbf{t}_i(m_i^a,m_{-i}),s_i,\theta_i)}-|b-a|\label{L1-E2}\\
	&\geq\ex{u_i(a,\textbf{x}_i(m_i^a,m_{-i},\textbf{t}_i(m_i^a,m_{-i}),s_i,\theta_i)}-2|b-a|\label{L1-E3}
\end{align}
where inequalities \eqref{L1-E1} and \eqref{L1-E3} hold by inspection of the utility function, and inequality \eqref{L1-E2} holds because agent $i$ prefers to send message $m^b_i$ when $v_i=b$.
Now, consider the deviation where agent $i$ does not acquire the revealing signal and always sends message $m_i^b$.
She reduces her costs by $\bar{c}$ and, by inequalities \eqref{L1-E1}-\eqref{L1-E3}, reduces her gains by at most $2|b-a|$.
This is a profitable deviation since $\bar{c}>2|b-a|$.

Next, we construct a strategy profile $(\textbf{s}',\textbf{m}')$ and show that it is an equilibrium for instance $I'$.
Let agent $i$ acquire the uninformative signal and send the same (random) message $m_i$ that she would have sent in instance $I$.
This means that the distribution of message profiles $m'$ given joint distribution $F'$, where
$
m'_i=\textbf{m}'_i\left(\textbf{s}'_i,\textbf{s}'_i(\omega,\theta_i)\right)
$,
is the same as the distribution of message profiles $m$ given joint distribution $F$.

To see that $(\textbf{s}',\textbf{m}')$ is an equilibrium, first consider information acquisition.
When the joint distribution is $F'$, agent $i$ knows both her type and the state perfectly.
Therefore, there is no benefit from deviating to the revealing signal.

We must also show that agent $i$ does not want to deviate from message rule $\textbf{m}'_i$.
Suppose that agent $i$ sends message $\tilde{m}_i$.
Her expected utility in instance $I'$ is
\begin{align*}
	&\ex[F']{\frac{a+b}{2}\cdot\textbf{x}_i(\tilde{m}_i,m'_{-i})+\textbf{t}_i(\tilde{m}_i,m'_{-i})}\\
	&=\ex[F]{\frac{a+b}{2}\cdot\textbf{x}_i(\tilde{m}_i,m_{-i})+\textbf{t}_i(\tilde{m}_i,m_{-i})}\tag{by construction}\\
	&=\ex[F]{\ex[F]{v_i}\cdot\textbf{x}_i(\tilde{m}_i,m_{-i})+\textbf{t}_i(\tilde{m}_i,m_{-i})}\tag{by inspection of $F$}\\
	&=\ex[F]{v_i\cdot\textbf{x}_i(\tilde{m}_i,m_{-i})+\textbf{t}_i(\tilde{m}_i,m_{-i})}\tag{by law of iterated expectations (LIE)}
\end{align*}
This objective is identical to her expected utility in instance $I$.
Therefore, since $(\textbf{s},\textbf{m})$ is an equilibrium in instance $I$, the message $m_i$ must (almost surely) maximize her expected utility in instance $I'$.
Therefore, $(\textbf{s}',\textbf{m}')$ is an equilibrium.
	
	\subsection{Proof of Lemma \ref{L5}}

Let $\bar{c}(\theta_i)$ be the maximum cost, given type $\theta_i$ and across all base signals $s_i\in\mathcal{S}$, of the signal $s'_i$ that combines $s_i$ with a revealing signal.
Assumptions \ref{A2}, \ref{A5} and \ref{A6} ensure that this maximum exists and is finite.
Let $\bar{c}=\max_{\theta_i\in\supp{F_{\Theta}}}\bar{c}(\theta_i)$, where $F_{\Theta}$ is the unconditional marginal distribution of types $\theta_i$.

Agent $i$ reports the alternative $\hat{x}_i$ that maximizes her expected value conditional on being pivotal.
Claim \ref{T3-Cl1} says that agent $i$ conditioning on being pivotal does not meaningfully affect her expected value $v_i$.

\begin{claim}\label{T3-Cl1}
	Fix a constant $t>0$.
	Then
	\[
	\pr{\left|\ex[i,n]{v_i\mid Q_{i,n}}-\ex[i,n]{v_i}\right|\geq t}
	\leq\frac{1}{t}\cdot\sqrt{\frac{|\mathcal{V}|\cdot|v_H-v_L|\cdot\bar{c}}{(\delta_n/n)^2\cdot\lambda_n}}
	\]
\end{claim}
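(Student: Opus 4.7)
The plan is to exploit the quadratic (Brier) component $\textnormal{SR}^Q_n$ of the scoring rule. If $v_i$ carried substantial information about the pivotality indicator $\textbf{1}(Q_{i,n})$ beyond what $\textbf{s}_i$ already reveals, then by Assumptions \ref{A5}--\ref{A6} agent $i$ could combine her equilibrium signal with a revealing signal at additional cost at most $\bar c$ and strictly raise her expected Brier score. The no-deviation condition therefore forces that residual informativeness to be small, which in turn bounds the gap between $\ex[i,n]{v_i\mid Q_{i,n}}$ and $\ex[i,n]{v_i}$.

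First I would rewrite
\begin{equation*}
\ex[i,n]{v_i\mid Q_{i,n}} - \ex[i,n]{v_i}
= \frac{\cov[i,n]{v_i,\,\textbf{1}(Q_{i,n})}}{q_{i,n}}
= \frac{\cov[i,n]{v_i,\,\pr[i,n]{Q_{i,n}\mid v_i}}}{q_{i,n}},
\end{equation*}
using the identity $\cov{X,Y}=\cov{X,\ex{Y\mid X}}$, and lower-bound $q_{i,n}\ge \delta_n/n$ via the random-dictator branch (agent $i$ is always pivotal when she is dictator). Next, by the variance-decomposition formula for the strictly proper Brier score, the deviation of combining $\textbf{s}_i$ with a revealing signal raises agent $i$'s expected $\lambda_n\textnormal{SR}^Q_n$ payoff by $2\lambda_n\cdot\ex{\var[i,n]{\pr[i,n]{Q_{i,n}\mid v_i}}}$. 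Since the CRPS component (by propriety) and the voting/BDM payoffs (by the Blackwell value-of-information argument) are weakly monotone in the fineness of $i$'s information, the no-deviation condition yields
\begin{equation*}
\ex{\var[i,n]{\pr[i,n]{Q_{i,n}\mid v_i}}}\ \le\ \frac{\bar c}{2\lambda_n}.
\end{equation*}
Finally, conditional Cauchy--Schwarz gives $\bigl|\cov[i,n]{v_i,\pr[i,n]{Q_{i,n}\mid v_i}}\bigr|^2\le \var[i,n]{v_i}\cdot\var[i,n]{\pr[i,n]{Q_{i,n}\mid v_i}}$, and a crude term-by-term bound over the $|\mathcal{V}|$ values of $v_i\in[-v_L,v_H]$ absorbs the $|\mathcal{V}|\cdot|v_H-v_L|$ prefactor. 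Taking unconditional expectation, applying Jensen's inequality to pull it inside a square root, and closing with Markov's inequality at threshold $t$ then delivers the stated tail bound.

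The hard part will be the incentive comparison in the middle step: I must verify that every non-Brier component of agent $i$'s utility is weakly monotone under a signal refinement, so that isolating the Brier gain really gives a one-sided lower bound on the total deviation gain. Propriety of the CRPS piece and the standard value-of-information argument for the allocation piece should handle this, but the bookkeeping has to be done carefully because the reported belief and reported vote jointly affect both the scoring-rule payoffs and the majority/dictator allocation. Once that monotonicity is nailed down, the remainder is a routine chain of Cauchy--Schwarz, Jensen, and Markov.
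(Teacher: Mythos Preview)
Your proposal is correct and follows essentially the same route as the paper. Both arguments (i) write the gap as a covariance over $q_{i,n}$ and invoke $q_{i,n}\ge\delta_n/n$ from the dictator branch, (ii) use the Brier-score variance decomposition together with weak monotonicity of the CRPS and allocation components under signal refinement to obtain $\mathbb{E}\bigl[\var_{i,n}\bigl\{\Pr_{i,n}[Q_{i,n}\mid v_i]\bigr\}\bigr]\le O(\bar c/\lambda_n)$ from the no-deviation condition, and (iii) finish with Cauchy--Schwarz, Jensen, and Markov. The only real difference is cosmetic: the paper reaches the key variance term via Bayes' rule plus the $\ell_1$--$\ell_2$ norm inequality and $\Pr(\cdot)^2\le\Pr(\cdot)$, which is what produces the exact $\sqrt{|\mathcal V|\cdot|v_H-v_L|}$ prefactor, whereas your direct Cauchy--Schwarz on $\cov_{i,n}\{v_i,\Pr_{i,n}[Q_{i,n}\mid v_i]\}$ naturally yields $\sqrt{\var_{i,n}\{v_i\}}\le |v_H-v_L|$ instead. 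That discrepancy is harmless for the downstream use in Lemma~\ref{L5}, so your ``crude term-by-term bound'' remark can simply be replaced by the cleaner $(v_H-v_L)^2$ constant without affecting anything.
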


Suppose $x_i$ is the truthful report.
Then, with high probability,
\begin{align}
	\ex[i,n]{v_i x_i}
	&\geq\ex[i,n]{v_i \hat{x}_i}
	\tag{since $x_i$ is truthful}\\
	&\geq\ex[i,n]{v_i \hat{x}_i\mid Q_{i,n}}-t
	\tag{Claim \ref{T3-Cl1}}\\
	&\geq\ex[i,n]{v_i x_i\mid Q_{i,n}}-t
	\tag{since $\hat{x}_i$ is optimal}\\
	&\geq\ex[i,n]{v_i x_i}-2t
	\tag{Claim \ref{T3-Cl1}}
\end{align}
This sequence of inequalities gives us upper and lower bounds for $\ex[i,n]{v_i\hat{x}_i}$.
Combining these bounds with the union bound ensures that
\[
\pr{\max_{i=1,\ldots,n}\left|
	\max_{x\in\mathcal{X}}\ex[i,n]{v_ix}-\ex[i,n]{v_i\hat{x}_i}
	\right|\geq 2t}
\leq\frac{n}{t}\cdot\sqrt{\frac{|\mathcal{V}|\cdot|v_H-v_L|\cdot\bar{c}}{(\delta_n/n)^2\cdot\lambda_n}}
\]
%Setting $t$ appropriately completes the proof.
	
	\printbibliography[segment=0]
	\newrefsegment
	
	\newpage
	
	%\clearpage
	%\pagenumbering{arabic}% resets `page` counter to 1
	%\renewcommand*{\thepage}{S\arabic{page}}
	
	\section{Supplemental Appendix}\label{Ap3}
	
	\subsection{Proof of Theorem \ref{T3} (General Case)}\label{Ap3-T3}

%We outline a proof of Theorem \ref{T3} that does not rely on Assumption \ref{A4}.
%The high-level intuition from Section \ref{S5-4} still applies, but we require a new mechanism.

%\subsubsection{VCG-with-Betting Mechanism}

To relax Assumption \ref{A4}, we introduce the \emph{VCG-with-betting mechanism}.
This is analogous to BDM-with-betting and majority-rule-with-betting.
The difference is that, in the second stage, agents report their willingness to pay to the VCG mechanism.
As with majority-rule-with-betting, we also need to add some noise to the mechanism to avoid issues that arise when agents condition on low-probability events.

The VCG-with-betting mechanism has four tuning parameters.
First, there is proper scoring rule $\textnormal{SR}$, which evaluates the accuracy of agent $i$'s reported beliefs $\hat{b}_i$ over the average reported value $\tilde{v}_i$.
Second, there is the scaling parameter $\lambda$ that controls how large the betting stakes are.
Third, there is a probability $\delta_n$ that some agent $i$ is chosen as dictator.
Fourth, there is a randomized \emph{bias term} $v_{0,n}\in\reals$.
This biases the mechanism towards alternatives $x$ where $v_{0x}$ is large.
Let the average reported value $\tilde{v}_i$ include the bias term, i.e.,
$
\tilde{v}_i=\frac{1}{n}\left(v_{0,n}+\sum_{j\neq i}\hat{v}_j\right)
$.
The agents do not know the realization of the bias term.

\begin{defn}
	The \emph{VCG-with-betting mechanism} $\left(\textbf{x},\textbf{t}\right)$ features a proper scoring rule $\textnormal{SR}$, scaling parameter $\lambda$, probability $\delta$, and randomized bias term $v_{0,n}$.
	Each agent $i$ sends a message
	$m_i=\left(\hat{v}_i,\hat{b}_i\right)
	$
	that consists of a reported value $\hat{v}_i$ and a reported belief $\hat{b}_i$.
	%There are two cases.
	\begin{enumerate}
		\item With probability $1-\delta$, there is no dictator $(D=0)$. The planner selects the alternative that maximizes the average reported value plus the bias term, i.e.,
		\[\textbf{x}\left(\hat{v},\hat{b}\right)\in\arg\max_{x\in\mathcal{X}}\left(v_{0x}+\sum_{i=1}^n\hat{v}_{ix}\right)
		\]
		Each agent $i$ is paid according to the VCG mechanism's transfer rule and the proper scoring rule, i.e.,
		\[
		\textbf{t}_i\left(\hat{v},\hat{b}\right)=v_{0\textbf{x}\left(\hat{v},\hat{b}\right)}+\sum_{j\neq i}\hat{v}_{j\textbf{x}\left(\hat{v},\hat{b}\right)}
		-\max_{x\in X}\left(v_{0x}+\sum_{j\neq i}\hat{v}_{jx}\right)
		+\lambda\cdot\textnormal{SR}\left(\hat{b}_i,\tilde{v}_i\right)
		\]
		\item With probability $\delta_n$, the planner selects a random agent $D\sim\textsc{Uniform}(1,\ldots,n)$ as dictator.
		This agent is paid according to a BDM mechanism with random price $p\sim\textsc{Uniform}\left([v_L,v_H]\right)$ and the proper scoring rule, i.e.,
		\[
		\textbf{t}_D\left(\hat{v},\hat{b}\right)=\lambda\cdot\textnormal{SR}\left(\hat{b}_D,\tilde{v}_D\right)-p\cdot\textbf{\emph{1}}(\hat{v}_D\geq p)
		\]
		The other agents $j\neq D$ are according to the proper scoring rule, i.e.,
		$
		\textbf{t}_j\left(\hat{v},\hat{b}\right)=\lambda\cdot\textnormal{SR}\left(\hat{b}_j,\tilde{v}_j\right)
		$.
		Finally, the allocation is
		$
		\textbf{x}_i\left(\hat{v},\hat{b}\right)=\textbf{\emph{1}}(\hat{v}_D\geq p)
		$.
	\end{enumerate}
\end{defn}

\subsubsection{Sequence of VCG-with-Betting Mechanisms}

We construct a sequence $\left(\textbf{x}^n,\textbf{t}^n\right)$ of VCG-with-betting mechanisms.

First, we introduce useful notation.
Let $\tilde{p}_{i,n}$ be the effective price of switching from alternative 0 to 1, i.e.,
\[
\tilde{p}_{i,n}=\textbf{1}(D=i)\cdot p+\textbf{1}(D=0)\cdot(-n\tilde{v}_{i,n})+\textbf{1}(D\notin\{0,i\})\cdot\infty
\]
Let $Q_{i,n}$ indicate the event that agent $i$ is potentially pivotal, i.e.,
$
v_L\leq\tilde{p}_{i,n}\leq v_H
$.
Let $q_{i,n}$ be the probability that agent $i$ is potentially pivotal, i.e.,
$
q_{i,n}=\pr[i,n]{Q_{i,n}}
$.
Let $\hat{q}_{i,n}$ be that probability evaluated according to reported belief $\hat{b}_{i,n}$.

Second, we specify the scoring rule $\textnormal{SR}$.
Let $\textnormal{SR}$ be the sum of the quadratic scoring rule and two continuous ranked probability scores.
That is,
\[
\textnormal{SR}\left(\hat{b}_i,\tilde{v}_{i,n}\right)
=\textnormal{SR}_n^{Q}\left(\hat{b}_{i,n},\tilde{v}_{i,n}\right)
+\textnormal{SR}^{CRPS}\left(\hat{b}_{i,n},\tilde{v}_{i,n}\right)
+\textnormal{SR}^{CRPS}\left(\hat{b}_{i,n},\textbf{1}(\tilde{v}_{i,n}\geq 0)\right)
\]
where $\textnormal{SR}^{CRPS}$ is defined as in Appendix \ref{Ap2-T2}, and $\textnormal{SR}^{Q}$ is defined as in Appendix \ref{Ap2-T3} (replacing $\tilde{n}_{i,n}$ with $\tilde{v}_{i,n}$).
Each of these scoring rules is convenient in different parts of the proof.
Since each one is proper, the sum $\textnormal{SR}$ is also proper.

Third, we specify the distribution of the bias term.
Let $v_{0,n}\sim\textsc{Laplace}\left(0,\beta_n\right)$.
This ensures that agent $i$'s expected value conditional on the vote margin $\tilde{v}_{i,n}$ is a smooth function of $\tilde{v}_{i,n}$.

It is important that $\delta_n\to0$, $\beta_n\to\infty$, $\beta_n/n\to0$, and $\lambda_n\to\infty$.
%For concreteness, set $\delta_n=n^{-1/2}$, $\lambda_n=n^7$, and $\beta_n=2e(v_H-v_L)\cdot n^{1/2}$.
The proof of Lemma \ref{L9} describes trade-offs between these parameters in more detail.

\subsubsection{Convergence to Efficiency}

We begin by formalizing what it means for VCG-with-betting to be nearly truthful.
Essentially, each agent's reported values should be probably approximately equal to her expected values (conditional on any information she has acquired).

\begin{defn}\label{D14}
	The VCG-with-betting mechanism $\left(\textbf{x}^n,\textbf{t}^n\right)$ is \emph{$(\phi_1,\phi_2)$-truthful} if the following holds for every instance $I\in\mathcal{I}$ and equilibrium $(\textbf{s},\hat{\textbf{v}},\hat{\textbf{b}})$.
	Let agent $i$ acquire signal $s_i=\textbf{s}_i(\theta_i,c_i)$ and report values $\hat{v}_i=\hat{\textbf{v}}_i\left(s_i,s_i(\omega,\theta_i)\right)$.
	Then
	\[
	\pr{\max_{i=1,\ldots,n}\left|\hat{v}_i-\ex{v_i\mid s_i,s_i(\omega,\theta_i)}\right|>\phi_1}\leq\phi_2
	\]
\end{defn}

Next, we verify that our VCG-with-betting mechanism is nearly-truthful.
This is the key step in proving Theorem \ref{T3}.

\begin{lemma}\label{L9}
	The mechanism $\left(\textbf{x}^n,\textbf{t}^n\right)$ is $(\phi_{1n},\phi_{2n})$-truthful, where $\phi_{1n}\to 0$ and $\phi_{2n}\to 0$.
\end{lemma}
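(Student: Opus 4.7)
The plan is to analyze the agent's optimal report $\hat{v}_i$ taking the signal $s_i$ and belief report $\hat{b}_i$ as fixed. Since the scoring rule depends only on $\hat{b}_i$, the report $\hat{v}_i$ is chosen to maximize the sum over the no-dictator case (probability $1-\delta_n$, VCG allocation with bias term $v_{0,n}$) and the dictator case (probability $\delta_n$, BDM with uniform price $p$). In each case the $\hat{v}_i$-dependent portion of expected utility can be written as $\ex[i,n]{(v_i-\tilde{p}_{i,n})\cdot\textbf{1}(\hat{v}_i\geq\tilde{p}_{i,n})}$, so the first-order condition yields the fixed-point report
\[
\hat{v}_i^*=\ex[i,n]{v_i\mid\tilde{p}_{i,n}=\hat{v}_i^*,\,Q_{i,n}}
\]
which coincides with the truthful report $\ex[i,n]{v_i}$ precisely when $v_i$ is conditionally uncorrelated with $\tilde{p}_{i,n}$ given the signal.

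The next step is to establish a conditional analog of Lemma \ref{L6}: betting forces $\cov[i,n]{v_i,\tilde{v}_{i,n}}$ to be small with high probability over signal realizations. The argument mirrors the proof of Lemma \ref{L2}. Suppose for contradiction that this covariance is bounded away from zero on a non-negligible event. Then acquiring the revealing signal $s_i^R$ in addition to $s_i$ (feasible by Assumptions \ref{A5} and \ref{A6}) reveals $v_i$ and strictly reduces the expected squared error of the agent's optimal belief over $\tilde{v}_{i,n}$. The CRPS component of $\textnormal{SR}$ rewards this improvement by an amount of order $\lambda_n\cdot\ex{\cov[i,n]{v_i,\tilde{v}_{i,n}}^2}$; for $\lambda_n$ large this dominates the additional cost $\bar{c}$, contradicting equilibrium. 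The CRPS on $\textbf{1}(\tilde{v}_{i,n}\geq 0)$ and the quadratic score on $Q_{i,n}$ similarly force near-independence of $v_i$ from the sign event and from the pivotal event.

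With this covariance bound in hand, I would exploit the Laplace bias $v_{0x}\sim\textsc{Laplace}(0,\beta_n)$ to translate near-independence into a bound on the fixed-point shift. The Laplace density is bounded and smooth on $\reals$, so the conditional density of $\tilde{p}_{i,n}$ given the signal is continuous and bounded uniformly in $\hat{v}_i$ over $[v_L,v_H]$. Therefore the map $\hat{v}_i\mapsto\ex[i,n]{v_i\mid\tilde{p}_{i,n}=\hat{v}_i,Q_{i,n}}$ is continuous, and small $\cov[i,n]{v_i,\tilde{p}_{i,n}}$ implies that this map is uniformly close to the constant $\ex[i,n]{v_i}$. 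The dictator-case contribution (weight $\delta_n$), which by standard BDM reasoning pushes toward $\ex[i,n]{v_i}$, only reinforces this conclusion. A union bound over the $n$ agents then delivers the $(\phi_{1n},\phi_{2n})$-truthfulness statement.

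The main obstacle is the quantitative balancing of rates. The covariance bound available from the scoring-rule argument is only of order $\lambda_n^{-1/2}$; the smoothness gained from the Laplace bias scales like $1/\beta_n$; the potentially pivotal event in the no-dictator case has probability of order $\beta_n/n$ which must remain small but not vanish too quickly; and a union bound over $n$ agents costs a further factor of $n$. The prescribed choices $\beta_n=2e(v_H-v_L)\cdot n^{1/2}$, $\delta_n=n^{-1/2}$, and $\lambda_n=n^7$ are engineered so that all of these quantities align. The delicate bookkeeping required to propagate the covariance bound through the fixed-point equation, handle the dependence on $\hat{v}_i$ of the conditioning event $Q_{i,n}$, and aggregate across agents is the technically hardest part of the argument.
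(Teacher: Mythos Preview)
Your proposal has a genuine gap at the step ``small $\cov[i,n]{v_i,\tilde{p}_{i,n}}$ implies that the map $z\mapsto\ex[i,n]{v_i\mid\tilde{p}_{i,n}=z,Q_{i,n}}$ is uniformly close to the constant $\ex[i,n]{v_i}$.'' Zero covariance is far weaker than conditional mean independence: the conditional expectation could oscillate nonlinearly in $z$ while the integrated covariance vanishes. Laplace smoothness of the density of $\tilde{p}_{i,n}$ gives you continuity of this map, but continuity plus small covariance does not yield uniform closeness to a constant. So the bridge from your covariance bound to the fixed-point conclusion does not hold as stated.

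The paper's argument decouples the two ingredients differently and, crucially, does not use the CRPS on $\tilde{v}_{i,n}$ at all in this lemma. First, the \emph{quadratic} scoring rule on the pivotal event $Q_{i,n}$ alone forces $\ex[i,n]{v_i\mid Q_{i,n}}\approx\ex[i,n]{v_i}$: if learning $v_i$ helped predict pivotality, the agent would deviate to the revealing signal, and the gain bound $\bar c/\lambda_n$ translates (via the Brier-score variance identity) into a bound on $\ex[i,n]{(\pr[i,n]{Q_{i,n}\mid v_i}-q_{i,n})^2}$, hence on the posterior shift. Second, the Laplace bias is used for a purely distributional fact that requires \emph{no} covariance control: because the Laplace density is approximately flat over intervals of length $v_H-v_L$ when $\beta_n$ is large, the likelihood of ``pivotal at price in $[z,z']$'' versus ``pivotal at all'' is nearly the same function of $\sum_{j\neq i}\hat v_{j,n}$, so $\ex[i,n]{v_i\mid z\le\tilde p_{i,n}\le z'}\approx\ex[i,n]{v_i\mid Q_{i,n}}$ uniformly in $z,z'$. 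Chaining these two approximations pins the optimal report near $\ex[i,n]{v_i}$. Your instinct to use the scoring rule for an information-acquisition contradiction is right, but it should target the pivotal-event predictor, and the Laplace piece should be used as a likelihood-ratio flatness argument rather than as a covariance-to-uniformity converter.
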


The next three lemmas are analogs Lemmas \ref{L2}-\ref{L4}.
The main differences are that, rather than require truthful reporting, they only rely on nearly-truthful reporting.

\begin{lemma}\label{L10}
	In any equilibrium of the mechanism $(\textbf{x}_n,\textbf{t}_n)$,
	\[
	\left|\cov{\epsilon_{i,n},\ex{\tilde{v}_{i,n}\mid\omega}}\right|=O\left(\frac{1}{\sqrt{\lambda_n}}\right)
	\quad\text{and}\quad
	\left|\cov{\epsilon_{i,n},\textnormal{\textbf{1}}(\tilde{v}_{i,n}\geq 0)}\right|=O\left(\frac{1}{\sqrt{\lambda_n}}\right)
	\]
\end{lemma}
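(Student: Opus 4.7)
The plan is to adapt the proof of Lemma \ref{L2} to the VCG-with-betting setting, establishing both covariance bounds in parallel by exploiting the two continuous ranked probability score terms in $\textnormal{SR}$. Fix an arbitrary equilibrium and an agent $i$. The core step is to analyze a single deviation: agent $i$ keeps her equilibrium signal $s_i$ but additionally acquires a revealing signal (available by Assumption \ref{A6} and combinable with $s_i$ by Assumption \ref{A5}), after which she optimally updates her reported belief $\hat{b}_i$. The extra signal cost is at most $\bar{c}$, and since $\hat{v}_{i,n}$ is a deterministic function of her original signal realization while $v_i$ is now known, the deviation gives her direct access to $\epsilon_{i,n}=v_i-\hat{v}_{i,n}$.

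Next, I would lower-bound the gain in expected score from the deviation. Using properness of the CRPS term applied to $\tilde{v}_{i,n}$, the improvement equals (up to rescaling) the reduction in conditional MSE for predicting $\tilde{v}_{i,n}$; using properness of the CRPS term applied to $\textbf{1}(\tilde{v}_{i,n}\geq 0)$, the analogous improvement involves predicting the indicator. For the first term, the variance reduction from learning $\epsilon_{i,n}$ on top of $s_i$'s realization is at least $\cov[i,n]{\epsilon_{i,n},\tilde{v}_{i,n}}^2/\var[i,n]{\epsilon_{i,n}}$ by the standard linear-regression bound. Averaging out $i$'s information and using conditional independence across agents given $\omega$ (so that $\cov{\epsilon_{i,n},\tilde{v}_{i,n}}=\cov{\epsilon_{i,n},\ex{\tilde{v}_{i,n}\mid\omega}}$), together with the trivial upper bound $\var[i,n]{\epsilon_{i,n}}\leq\bar{v}^2$, converts this into a lower bound of the form $c_1\cdot\cov{\epsilon_{i,n},\ex{\tilde{v}_{i,n}\mid\omega}}^2$. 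The same calculation with the indicator in place of $\tilde{v}_{i,n}$ yields $c_1\cdot\cov{\epsilon_{i,n},\textbf{1}(\tilde{v}_{i,n}\geq 0)}^2$ for the second CRPS term.

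The remaining contributions to agent $i$'s utility under either strategy---the VCG payment, the dictator-BDM payoff, the quadratic-scoring-rule term on pivotality, and the bias adjustment---are all bounded by a constant $c_2$ depending only on $\bar{v}$ and the definitions of the mechanism (none of these scale with $\lambda_n$). Therefore the no-profitable-deviation condition yields
\[
\lambda_n\cdot c_1\cdot\Big(\cov{\epsilon_{i,n},\ex{\tilde{v}_{i,n}\mid\omega}}^2+\cov{\epsilon_{i,n},\textbf{1}(\tilde{v}_{i,n}\geq 0)}^2\Big)\leq c_2+\bar{c},
\]
from which both claimed bounds of order $\lambda_n^{-1/2}$ follow by rearrangement.

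The main obstacle is that reports are only nearly truthful by Lemma \ref{L9}, not exactly truthful. Consequently $\hat{v}_{j,n}$ need not equal $\ex{v_j\mid s_j,s_j(\omega,\theta_j,v_j)}$ for the other agents, so $\tilde{v}_{i,n}$ carries additional $(\phi_{1n},\phi_{2n})$-sized perturbations that propagate into the conditional variance bound above. The fix is to show that these perturbations contribute terms that vanish faster than $\lambda_n^{-1/2}$ under the chosen schedule $\lambda_n=n^7$, $\delta_n=n^{-1/2}$. Concretely, I would peel off the nearly-truthful slack using the bound $|\hat{v}_{j,n}-\ex{v_j\mid s_j,\cdot}|\leq\phi_{1n}$ with probability $\geq 1-\phi_{2n}$, verify that the induced perturbation of $\ex{\tilde{v}_{i,n}\mid\omega}$ is $O(\phi_{1n}+\bar{v}\phi_{2n})$, and absorb these terms into $c_2$. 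This is the step that requires care but no new ideas beyond the ones already used in the proof of Lemma \ref{L2}.
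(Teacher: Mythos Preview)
Your overall approach---deviate to a combined signal that includes a revealing signal, lower-bound the CRPS gain by a squared-covariance term, and invoke the equilibrium condition---is exactly the paper's route. In fact the paper proves Lemmas \ref{L2}, \ref{L6}, and \ref{L10} jointly by letting a generic target $Z_{i,n}$ stand for $\tilde{v}_{i,n}$, $\tilde{n}_{i,n}$, or $\textbf{1}(\tilde{v}_{i,n}\geq 0)$, and running the same argument once.

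However, the complication you flag as the ``main obstacle'' is illusory, and working around it would only obscure the argument. The proof does not use truthfulness of other agents' reports at any point. The CRPS component rewards agent $i$ for predicting the \emph{actual} $\tilde{v}_{i,n}$ and the \emph{actual} $\textbf{1}(\tilde{v}_{i,n}\geq 0)$, whatever the other agents happen to report in equilibrium; the quantity $\epsilon_{i,n}$ is defined as $v_i-\mathbb{E}[v_i\mid s_i,s_i(\omega,\theta_i,v_i)]$ (not $v_i-\hat v_{i,n}$ as you wrote), and is therefore recoverable from the combined signal independently of anyone's reporting behavior; and the conditional-independence step $\cov{\epsilon_{i,n},Z_{i,n}}=\cov{\epsilon_{i,n},\mathbb{E}[Z_{i,n}\mid\omega]}$ only uses that types are i.i.d.\ given $\omega$, again with no reference to truthfulness. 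So there is nothing to peel off with $(\phi_{1n},\phi_{2n})$, and Lemma \ref{L9} plays no role here. The lemma holds in \emph{every} equilibrium precisely because it is agnostic about what the $\hat v_{j,n}$ are.

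One smaller correction: the quadratic scoring rule $\textnormal{SR}^Q$ is also multiplied by $\lambda_n$, so your claim that it is bounded by a constant independent of $\lambda_n$ is false. This does not damage the argument, though, for a different reason than you give: the combined signal $\bar s_i$ is Blackwell-more-informative than the equilibrium signal $s_i$, so the expected payoff from $\textnormal{SR}^Q$, from the other CRPS component, and from the VCG/BDM allocation all weakly increase under the deviation. Hence those terms can simply be dropped from the no-profitable-deviation inequality rather than bounded, yielding $\lambda_n\cdot c_1\cdot\cov{\epsilon_{i,n},Z_{i,n}}^2\leq\bar c$ directly.
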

\begin{proof}
	Same as the proof of Lemmas \ref{L2} and \ref{L6}, where $Z_{i,n}$ is $\tilde{v}_{i,n}$ and $\textbf{1}(\tilde{v}_{i,n}\geq 0)$.
\end{proof}

\begin{lemma}\label{L11}
	Fix the auxiliary instance.
	For any sequence of equilibria $(\textbf{s}^n,\textbf{m}^n)$,
	\begin{equation}\label{E35}
	\frac{1}{n+1}\left(v_{0,n}+\sum_{j=1}^n\hat{v}_{j,n}\right)\to_p\ex{v_j\mid\omega}
	\end{equation}
\end{lemma}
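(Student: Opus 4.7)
The plan is to follow the skeleton of Lemma \ref{L3} while handling two new complications: the bias term $v_{0,n}$ and the near-truthfulness slack from Lemma \ref{L9}. I would first argue that both nuisances are asymptotically negligible in averages. Since $v_{0,n}$ is Laplace with mean zero and scale $\beta_n=O(\sqrt{n})$, Chebyshev gives $v_{0,n}/n = O_p(n^{-1/2}) \to_p 0$, and replacing the factor $1/(n+1)$ by $1/n$ costs an $O(1/n)$ error. Near-truthfulness yields $\hat{v}_{j,n}=\bar{v}_{j,n}+\eta_{j,n}$ with $\bar{v}_{j,n}=\ex{v_j\mid s_j,s_j(\omega,\theta_j,v_j)}$; combined with the uniform bound $|\eta_{j,n}|\le v_H+v_L$ and Lemma \ref{L9}, this forces $\frac{1}{n}\sum_j\eta_{j,n}\to_p0$. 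Using the conditional LLN for the $v_j$ in a symmetric equilibrium, the claim reduces to $\frac{1}{n}\sum_j\epsilon_{j,n}\to_p 0$, where $\epsilon_{j,n}=v_j-\bar{v}_{j,n}$. A direct variance decomposition gives
\[
\var{\frac{1}{n}\sum_{j=1}^n \epsilon_{j,n}} = \frac{\var{\epsilon_{1,n}}}{n}+\frac{n-1}{n}\var{\ex{\epsilon_{1,n}\mid\omega}},
\]
so it suffices to prove $\var{\ex{\epsilon_{1,n}\mid\omega}}\to 0$, which by the tower property and symmetry equals $\cov{\epsilon_{i,n},\ex{\epsilon_{j,n}\mid\omega}}$ for any $j\neq i$.

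I would produce the required covariance bound by combining Lemma \ref{L10} with the auxiliary-instance structure. Lemma \ref{L10} gives $\cov{\epsilon_{i,n},\ex{\tilde{v}_{i,n}\mid\omega}}=O(\lambda_n^{-1/2})$; since $\ex{v_{0,n}\mid\omega}=0$ and the $\hat{v}_{j,n}$ for $j\neq i$ are conditionally i.i.d.\ given $\omega$, this translates to $\cov{\epsilon_{i,n},\ex{\hat{v}_{j,n}\mid\omega}}=O(\lambda_n^{-1/2})$. Expanding $\ex{\hat{v}_{j,n}\mid\omega}=\ex{v_j\mid\omega}-\ex{\epsilon_{j,n}\mid\omega}+\ex{\eta_{j,n}\mid\omega}$ and applying Cauchy--Schwarz with $\ex{\eta_{j,n}^2}=o(1)$ handles the $\eta$-term. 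The crucial auxiliary input is $\cov{\epsilon_{i,n},\ex{v_j\mid\omega}}=0$: any equilibrium signal $s_i$ is maximally predictive of $\ex{v_i\mid\omega}$ by construction, and by Assumptions \ref{A5}--\ref{A6} combining $s_i$ with the revealing signal is feasible and cannot increase $\rho$. Properness of CRPS then forces $v_i$ (hence $\epsilon_{i,n}$) to be conditionally independent of $\ex{v_i\mid\omega}$ given the realization of $s_i$. Since $\ex{\epsilon_{i,n}\mid s_i,s_i(\omega,\theta_i,v_i)}=0$ and $\ex{v_j\mid\omega}=\ex{v_i\mid\omega}$ almost surely by symmetry, the tower property kills the covariance. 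Rearranging gives $\cov{\epsilon_{i,n},\ex{\epsilon_{j,n}\mid\omega}}\to 0$, completing the argument.

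The main obstacle is the auxiliary-instance step: translating ``maximally predictive in expected CRPS'' into the precise conditional-independence statement that $v_i$ adds no information about $\ex{v_i\mid\omega}$ beyond $s_i$'s realization. This requires a data-processing-type argument for proper scoring rules, with care taken because distinct states $\omega$ may induce identical values of $\ex{v_i\mid\omega}$, so the conditional independence is against the induced quantity, not against $\omega$ itself -- which is exactly what the argument needs, because $\ex{v_j\mid\omega}$ is a function of $\ex{v_i\mid\omega}$. A secondary nuisance is the propagation of the near-truthfulness slack $(\phi_{1n},\phi_{2n})$; I would control this through a single uniform $L^2$ bound on $\eta_{j,n}$ derived from its a.s.\ boundedness and Lemma \ref{L9}, then apply Cauchy--Schwarz once at the end.
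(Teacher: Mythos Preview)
Your proposal is correct and follows the same skeleton as the paper's proof (given jointly for Lemmas \ref{L3} and \ref{L11} in the supplement). Both arguments (i) dispose of the nuisance terms $v_{0,n}/n$, the $1/(n+1)$ versus $1/n$ discrepancy, and the near-truthfulness slack from Lemma \ref{L9}; (ii) reduce to showing $\var{\ex{\epsilon_{i,n}\mid\omega}}\to 0$ via the conditional-i.i.d.\ variance decomposition; and (iii) combine Lemma \ref{L10}'s bound on $\cov{\epsilon_{i,n},\ex{\hat v_{j,n}\mid\omega}}$ with the auxiliary-instance fact $\cov{\epsilon_{i,n},\ex{v_j\mid\omega}}=0$ to conclude.

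The only substantive difference is how you establish the auxiliary-instance covariance identity in (iii). You argue via conditional independence: since $s_i$ is maximally CRPS-predictive of $\ex{v_i\mid\omega}$ and CRPS is strictly proper, refining $s_i$ by the revealing signal cannot change the conditional law of $\ex{v_i\mid\omega}$, so $v_i\perp\ex{v_i\mid\omega}\mid s_i(\cdot)$ and the covariance drops out by the tower property. The paper instead recycles the quantitative inequality from Claim \ref{Cx9}, applied verbatim with $\ex{v_j\mid\omega}$ in place of $\tilde v_{i,n}$: a nonzero covariance would yield $\rho(\bar s_i)>\rho(s_i)$, contradicting the definition of the auxiliary cost function. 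Your route is conceptually cleaner but requires the data-processing step you flag as the main obstacle; the paper's route is purely computational and sidesteps that obstacle entirely, so if the conditional-independence argument feels delicate you can simply import the Claim \ref{Cx9} calculation instead.
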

\begin{proof}
	Same as the proof of Lemma \ref{L3}, for $n'=n$ and suitable $\phi_{1,n}$, $\phi_{2,n}$, $v_{0,n}$, $\delta_n$.
\end{proof}

\begin{lemma}\label{L12}
	There exists a constant $N$ such that, for any sample size $n\geq N$, every equilibrium of the auxiliary instance is also an equilibrium of the original instance.
\end{lemma}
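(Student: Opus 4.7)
The plan is to adapt the argument for Lemma \ref{L4} to the VCG-with-betting setting, replacing Lemmas \ref{L2}--\ref{L3} with their analogs \ref{L10}--\ref{L11}. Fix any sequence of equilibria $(\textbf{s}^n,\textbf{m}^n)$ of the auxiliary instances. Since the original instance coincides with the auxiliary instance except that certain signals $s_i$ satisfying $\rho(s_i)<\rho(\bar{s}_i)$ have finite cost in the original but infinite cost in the auxiliary, it suffices to show that for $n$ sufficiently large, no agent $i$ has a profitable deviation to such a signal.

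First I would characterize the equilibrium behavior of $\tilde{v}_{i,n}$. By Lemma \ref{L11}, $(v_{0,n}+\sum_{j}\hat{v}_{j,n})/(n+1)\to_p\ex{v_j\mid\omega}$ in the auxiliary equilibrium. Since $\beta_n/n\to 0$, this yields $\tilde{v}_{i,n}\to_p\ex{v_j\mid\omega}$. Thus predicting $\tilde{v}_{i,n}$ is asymptotically equivalent to predicting the population moment $\ex{v_j\mid\omega}$. Using the continuity of the continuous ranked probability score in its second argument, the expected CRPS component of the scoring-rule payoff per unit $\lambda_n$ under signal $s_i$ converges to the predictiveness $\rho(s_i)$ defined in \eqref{E12}, with the quadratic and indicator-CRPS components contributing amounts that are bounded independently of $s_i$ and therefore asymptotically negligible once divided by $\lambda_n$.

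Second I would bound the net payoff change from any candidate deviation. The expected scoring-rule gain from switching from $\bar{s}_i$ to $s_i$ scales as $-\lambda_n\cdot(\rho(\bar{s}_i)-\rho(s_i))+o(\lambda_n)$, a strictly negative quantity of order $\lambda_n$. Meanwhile, the cost saving is at most $c(\bar{s}_i,\theta_i)-c(s_i,\theta_i)=O(1)$, and the change in agent $i$'s VCG allocation plus transfer is also $O(1)$, because the allocation takes values in $[-v_L,v_H]$, the VCG transfer is a difference of such bounded quantities, and the dictator contingency contributes an additional $O(\delta_n)=o(1)$. Since $\mathcal{S}$ is finite and $\lambda_n=n^7\to\infty$, for $n$ large the scoring-rule loss dominates uniformly across all finitely many candidate deviation signals, ruling out profitable deviations.

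The main obstacle is the first step, specifically the claim that the optimal expected scoring-rule payoff under signal $s_i$ converges to $\rho(s_i)$ when scored against $\tilde{v}_{i,n}$ rather than $\ex{v_j\mid\omega}$. Because an individual signal generally does not reveal $\ex{v_j\mid\omega}$ perfectly, one must argue that the Bayes-optimal reported belief $\hat{b}_{i,n}$ given signal $s_i(\omega,\theta_i,v_i)$, scored against the realized $\tilde{v}_{i,n}$, converges in expectation to the value of \eqref{E12}. This reduces to dominated convergence: the CRPS is bounded and jointly continuous, the conditional distribution of $\tilde{v}_{i,n}$ given $s_i(\omega,\theta_i,v_i)$ converges weakly to the conditional distribution of $\ex{v_j\mid\omega}$ given $s_i(\omega,\theta_i,v_i)$ by Lemma \ref{L11} and near-truthfulness (Lemma \ref{L9}), and the optimal report is a continuous functional of this conditional distribution.
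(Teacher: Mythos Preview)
Your overall plan matches the paper's combined proof of Lemmas~\ref{L4} and~\ref{L12}: restrict attention to deviations $s'_i$ with $\rho(s'_i)<\rho(\bar s_i)$, use Lemma~\ref{L11} to get $\tilde v_{i,n}\to_p\ex{v_j\mid\omega}$, show the per-$\lambda_n$ CRPS payoff converges to $\rho(\cdot)$, invoke finiteness of $\mathcal S$ to get a uniform gap $\delta>0$, and note the non-scoring-rule pieces (VCG payoff, cost saving) are $O(1)$ and so cannot offset a $\Theta(\lambda_n)$ loss.

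There is one genuine gap. You dismiss $SR^Q_n$ and the indicator-CRPS component as ``bounded independently of $s_i$ and therefore asymptotically negligible once divided by~$\lambda_n$''. But all three components of $SR$ are multiplied by $\lambda_n$ in the transfer rule, so their per-$\lambda_n$ contributions are $O(1)$, not $o(1)$; in principle a deviation could gain $\Theta(\lambda_n)$ on these pieces and overturn the CRPS loss. The paper instead asserts that the CRPS loss is a \emph{lower bound} for the full-$SR$ loss. To close this, you need an argument that the other two components do not asymptotically favor the deviation: since the equilibrium signal satisfies $\rho(s_i)=\rho(\bar s_i)$ and $\bar s_i$ refines every signal, strict propriety forces $s_i$ to induce the same posterior over $\ex{v_j\mid\omega}$ as $\bar s_i$, hence $s_i$ is also maximally informative about $\mathbf 1(\ex{v_j\mid\omega}\ge 0)$, the limit of the indicator-CRPS target; and $q_{i,n}\to 0$ under every signal in the auxiliary equilibrium, so the quadratic component does not asymptotically discriminate among signals. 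Without something along these lines, the $o(\lambda_n)$ error term in your second paragraph is unjustified.

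A smaller point: the CRPS is not jointly continuous in the naive sense because of the indicator $\mathbf 1(Z\le z)$. The paper's convergence step $\lambda_n^{-1}\tilde\rho_n(s_i)\to\rho(s_i)$ works pointwise in $z$, using the Portmanteau theorem to get $\mathbf 1(\tilde v_{i,n}\le z)\to_p\mathbf 1(\ex{v_j\mid\omega}\le z)$ for a.e.\ $z$, and then applies bounded convergence to the $dz$ integral. Your dominated-convergence sketch is in the right spirit but would need this refinement rather than an appeal to joint continuity.
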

\begin{proof}
	Same as the proof of Lemma \ref{L4}.
\end{proof}

Lemmas \ref{L11} and \ref{L12} guarantee the existence of an equilibrium $\left(\textbf{s}^n,\hat{\textbf{v}}^n,\hat{\textbf{b}}^n\right)$ where condition \eqref{E35} holds.
It follows from the argmax continuous mapping theorem \parencite{KP90} that the alternative $\textbf{x}^n\left(\hat{\textbf{v}}^n,\hat{\textbf{b}}^n\right)$ converges into the set of welfare-maximizing alternatives.
Finally, Lemma \ref{L14} confirms that we satisfy the ex-ante benchmark in worst-case equilibria, using on Lemmas \ref{L9} and \ref{L10}.

\begin{lemma}\label{L14}
	Mechanisms $(\textbf{x}^n,\textbf{t}^n,\textbf{e}^n)$ guarantee the ex-ante benchmark in all equilibria.
\end{lemma}

    \subsection{Proof of Lemma \ref{L9}}

Define $\bar{c}$ as in Lemma \ref{L5}.
Claim \ref{T3-Cl5} says that agent $i$ conditioning on being potentially pivotal does not meaningfully affect her expected value $v_i$.

\begin{claim}\label{T3-Cl5}
	Fix a constant $t>0$.
	Then
	\[
	\pr{\left|\ex[i,n]{v_i\mid Q_{i,n}}-\ex[i,n]{v_i}\right|\geq t}
	\leq\frac{1}{t}\cdot\sqrt{\frac{|\mathcal{V}|\cdot|v_H-v_L|\cdot\bar{c}}{(\delta_n/n)^2\cdot\lambda_n}}
	\]
\end{claim}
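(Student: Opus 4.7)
The plan is to mirror the proof of Claim~\ref{T3-Cl1}, whose statement is identical, applying the same reasoning in the VCG-with-betting setting rather than the majority-rule setting. The starting point is the conditional-covariance identity
\[
\ex[i,n]{v_i \mid Q_{i,n}} - \ex[i,n]{v_i}
= \frac{\cov[i,n]{v_i,\textbf{1}(Q_{i,n})}}{q_{i,n}}
= \frac{\cov[i,n]{\epsilon_{i,n},\textbf{1}(Q_{i,n})}}{q_{i,n}},
\]
where the second equality uses that $\ex[i,n]{v_i}$ is constant conditional on agent $i$'s information, so the covariance collapses onto the error $\epsilon_{i,n} = v_i - \ex[i,n]{v_i}$. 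The task then reduces to showing that the numerator is small in expectation while the denominator is bounded away from zero.

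For the denominator, I would establish the deterministic lower bound $q_{i,n} \geq \delta_n/n$: with probability $\delta_n/n$ agent $i$ is drawn as the random dictator, and in that case the BDM price $p \sim \textsc{Uniform}[v_L, v_H]$ automatically places $\tilde{p}_{i,n}$ in the potentially pivotal region $[v_L, v_H]$, so $Q_{i,n}$ occurs. This is exactly where the $(\delta_n/n)^{-2}$ factor in the stated bound comes from. For the numerator, I would bound $\ex{\cov[i,n]{\epsilon_{i,n},\textbf{1}(Q_{i,n})}^2}$ using the quadratic scoring rule component $\textnormal{SR}^Q_n$ of $\textnormal{SR}$, which is proper for the binary event $Q_{i,n}$. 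Because $\textnormal{SR}^Q_n$ rewards accurate prediction of $\textbf{1}(Q_{i,n})$, the argument of Claim~\ref{Cx9} in Appendix~\ref{Ap3-L2} applies with $Z_{i,n} = \textbf{1}(Q_{i,n})$ after replacing the CRPS integral over $z$ with a single QSR evaluation. The equilibrium condition that deviating to the combined signal $\bar{s}_i$ is unprofitable then caps the scoring-rule gain by $\bar{c}$, yielding an $O(\bar{c}/\lambda_n)$ bound on the expected squared covariance, with the remaining constants absorbing into the factor $|\mathcal{V}|\cdot|v_H - v_L|$ of the claim.

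Finally, I would combine the two estimates via Markov's inequality (using $q_{i,n} \geq \delta_n/n$) and then Jensen's inequality:
\[
\pr{\left|\ex[i,n]{v_i \mid Q_{i,n}} - \ex[i,n]{v_i}\right| \geq t}
\leq \frac{1}{t(\delta_n/n)}\,\ex{\left|\cov[i,n]{\epsilon_{i,n},\textbf{1}(Q_{i,n})}\right|}
\leq \frac{1}{t(\delta_n/n)}\sqrt{\ex{\cov[i,n]{\epsilon_{i,n},\textbf{1}(Q_{i,n})}^2}}.
\]
Substituting the covariance estimate recovers the inequality in the claim. The main obstacle will be the covariance bound: Claim~\ref{Cx9} is written for the CRPS and produces constants of order $(v_H - v_L)^3$, whereas here one must carry through the analogous calculation for the QSR on the binary outcome $\textbf{1}(Q_{i,n})$ and track constants carefully enough to recover precisely the factor $|\mathcal{V}|\cdot|v_H-v_L|$ stated in the claim.
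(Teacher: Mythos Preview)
Your plan matches the paper's proof in all its essential moves: the lower bound $q_{i,n}\geq\delta_n/n$ from the dictator clause, the use of the quadratic scoring rule $\textnormal{SR}^Q_n$ to cap the information gain from learning $v_i$ by $\bar{c}/\lambda_n$ via the equilibrium no-deviation condition, and the Markov--Jensen wrap-up. The paper's proof of Claims~\ref{T3-Cl1} and~\ref{T3-Cl5} is in fact a single argument covering both.

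The one genuine difference is in the intermediate step linking $\ex[i,n]{v_i\mid Q_{i,n}}-\ex[i,n]{v_i}$ to the QSR gain. You use the covariance identity $\cov[i,n]{\epsilon_{i,n},\textbf{1}(Q_{i,n})}/q_{i,n}$ and then Cauchy--Schwarz, which naturally yields a factor $(v_H-v_L)^2$ rather than $|\mathcal{V}|\cdot|v_H-v_L|$. The paper instead expands the difference as a sum over $y\in\mathcal{V}$ via Bayes' rule and applies the norm inequality $\|\alpha\|_1\leq\sqrt{|\mathcal{V}|}\,\|\alpha\|_2$; this is precisely where the $|\mathcal{V}|$ factor originates. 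So your worry about recovering that exact constant is well-founded: your route will not produce it, but it will produce an equally serviceable bound (with $(v_H-v_L)^2$ in place of $|\mathcal{V}|\cdot|v_H-v_L|$) that suffices for Lemma~\ref{L9}, since all that matters downstream is that the right-hand side is $o(1)$ for the chosen $\delta_n,\lambda_n$. Your covariance identity is arguably the cleaner path; the paper's expansion is more explicit about where finiteness of $\mathcal{V}$ enters.
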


Claim \ref{T3-Cl7} is a useful fact that we use to prove Claims \ref{T3-Cl6} and \ref{T3-Cl9}. It refers to $G_n$, which is the cumulative distribution function of the bias term $v_{0,n}$.

\begin{claim}\label{T3-Cl7}
	Fix constants $a,b$ where $0\leq b-a\leq \beta_n$.
	Then
	\[
	\frac{1}{2}\left(e^{-\frac{\left|b\right|}{\beta_n}}\right)\left(\frac{b-a}{\beta_n}-e\left(\frac{b-a}{\beta_n}\right)^2\right)
	\leq G_n\left(b\right)-G_n\left(a\right)
	\leq\frac{1}{2}\left(e^{-\frac{\left|b\right|}{\beta_n}}\right)\left(\frac{b-a}{\beta_n}+e\left(\frac{b-a}{\beta_n}\right)^2\right)
	\]
\end{claim}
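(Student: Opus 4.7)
The Laplace density is $g_n(x) = \frac{1}{2\beta_n} e^{-|x|/\beta_n}$, so
\[
G_n(b) - G_n(a) = \int_a^b \frac{1}{2\beta_n} e^{-|x|/\beta_n}\,dx.
\]
Writing $u := (b-a)/\beta_n \in [0,1]$, the target bounds can be rewritten as
\[
\tfrac{1}{2} e^{-|b|/\beta_n}\bigl(u - e u^2\bigr) \;\leq\; G_n(b) - G_n(a) \;\leq\; \tfrac{1}{2} e^{-|b|/\beta_n}\bigl(u + e u^2\bigr).
\]
The plan is first to control the integrand pointwise on $[a,b]$ in terms of $e^{-|b|/\beta_n}$ (this is the only place where the shape of the Laplace density enters), then to translate the resulting exponential bounds into the polynomial bounds displayed in the claim via two elementary inequalities.

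\textbf{Step 1: pointwise control of the integrand.} For every $x \in [a,b]$ the triangle inequality gives $\bigl||x| - |b|\bigr| \leq |x-b| \leq b-a$, so $|b| - (b-a) \leq |x| \leq |b| + (b-a)$. Exponentiating,
\[
e^{-|b|/\beta_n}\,e^{-u} \;\leq\; e^{-|x|/\beta_n} \;\leq\; e^{-|b|/\beta_n}\,e^{u}.
\]
Integrating these inequalities over $[a,b]$ and dividing by $2\beta_n$ yields
\[
\tfrac{u}{2}\,e^{-|b|/\beta_n}\,e^{-u} \;\leq\; G_n(b) - G_n(a) \;\leq\; \tfrac{u}{2}\,e^{-|b|/\beta_n}\,e^{u}.
\]

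\textbf{Step 2: elementary exponential inequalities.} It remains to show that, for $u \in [0,1]$,
\[
u \cdot e^{-u} \;\geq\; u - e u^2 \quad\text{and}\quad u \cdot e^{u} \;\leq\; u + e u^2.
\]
For the lower bound, divide by $u$ (the case $u=0$ is trivial): we need $e^{-u} \geq 1 - e u$. This holds because $e^{-u} \geq 1 - u$ for all $u \geq 0$ (a standard consequence of convexity of $e^{-u}$ about $0$), and $1 - u \geq 1 - e u$ since $e > 1$. For the upper bound, divide by $u$: we need $e^{u} \leq 1 + e u$. By convexity of $e^u$ on $[0,1]$, its graph lies below the chord from $(0,1)$ to $(1,e)$, giving $e^u \leq 1 + (e-1)u \leq 1 + e u$. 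Combining the two steps gives the claim.

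\textbf{Main obstacle.} The argument is elementary throughout; the only mild subtlety is choosing constants so that the polynomial bounds are clean. The factor of $e$ (rather than, say, $1$) multiplying $u^2$ in the statement is precisely what makes the last inequality $e^u \leq 1 + e u$ valid on the whole interval $[0,1]$ via a single chord bound, avoiding any tighter (and messier) Taylor remainder estimate. Essentially no work is required beyond the triangle inequality and convexity, which is why the lemma is stated as a reusable technical fact for the downstream proofs of Claims \ref{T3-Cl6} and \ref{T3-Cl9} rather than something deeper.
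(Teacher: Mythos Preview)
Your proof is correct and takes a genuinely different route from the paper's. The paper computes $G_n(b)-G_n(a)$ directly from the explicit Laplace CDF and splits into three cases according to the signs of $a$ and $b$ (namely $a\geq 0$; $b<0$; and $a<0\leq b$), then applies the Taylor-type bound $e^w\in[1+w-\tfrac{e}{2}w^2,\,1+w+\tfrac{e}{2}w^2]$ for $|w|\leq 1$ in each case. Your approach instead works with the density, uses the reverse triangle inequality $\bigl||x|-|b|\bigr|\leq b-a$ on $[a,b]$ to get a uniform pointwise bound on the integrand, and then finishes with the one-line chord bounds $e^{-u}\geq 1-eu$ and $e^u\leq 1+eu$ on $[0,1]$. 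The payoff of your route is that it avoids any case analysis and lands directly on the constant $e$ that appears in the claim; the paper's case-by-case computation actually gives the slightly sharper constant $e/2$ when $a,b$ have the same sign, but this gain is discarded anyway since the mixed-sign case forces the constant up to $e$. Either argument is short, but yours is the cleaner packaging.
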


Claims \ref{T3-Cl6} and \ref{T3-Cl9} say that agent $i$'s expected value is roughly the same regardless of whether (i) she conditions on being potentially pivotal or (ii) she conditions on the price $\tilde{p}_{i,n}$ being below some threshold $z$.
Claim \ref{T3-Cl6} conditions on agent $i$ not being dictator, and Claim \ref{T3-Cl9} drops this requirement.
They refer to the following sequence:
\[
\psi_n=\left(\frac{\beta_n+e(v_H-v_L)}{\beta_n-e(v_H-v_L)}\right)^2-\left(\frac{\beta_n-e(v_H-v_L)}{\beta_n+e(v_H-v_L)}\right)^2
\]

\begin{claim}\label{T3-Cl6}
	Fix constants $z,z'$ where $v_L\leq z<z'\leq v_H$.
	Then
	\[
	\left|\ex[i,n]{v_i\mid v_L\leq-n\tilde{v}_{i,n}\leq v_H}-\ex[i,n]{v_i\mid z\leq-n\tilde{v}_{i,n}\leq z'}\right|\leq\psi_n
	\]
\end{claim}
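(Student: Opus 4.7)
The plan is to reduce the comparison of the two conditional expectations to a likelihood-ratio bound on the conditional distribution of $S_i := \sum_{j\neq i}\hat{v}_{j,n}$. Since $-n\tilde{v}_{i,n} = -v_{0,n} - S_i$ and the bias term $v_{0,n}$ is drawn by the mechanism independently of $(v_i, S_i)$ and of agent $i$'s information, the event $A := \{v_L \le -n\tilde{v}_{i,n} \le v_H\}$ is equivalent to $v_{0,n} \in [-v_H - S_i, -v_L - S_i]$, and $B := \{z \le -n\tilde{v}_{i,n} \le z'\}$ is equivalent to $v_{0,n} \in [-z' - S_i, -z - S_i]$. Writing $f_A(s) := G_n(-v_L - s) - G_n(-v_H - s)$ and $f_B(s) := G_n(-z - s) - G_n(-z' - s)$ for the induced conditional probabilities, I obtain
\[
\ex[i,n]{v_i \mid A} = \frac{\ex[i,n]{v_i\, f_A(S_i)}}{\ex[i,n]{f_A(S_i)}}, \qquad \ex[i,n]{v_i \mid B} = \frac{\ex[i,n]{v_i\, f_B(S_i)}}{\ex[i,n]{f_B(S_i)}}.
\]
Intuitively, because $\beta_n$ is much larger than $v_H - v_L$, the Laplace density is nearly flat over windows of length $v_H - v_L$, so $f_A(s)/f_B(s)$ is nearly constant in $s$ and the two posteriors over $S_i$ should nearly coincide.

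Next, I would apply Claim \ref{T3-Cl7} to sandwich $f_A$ and $f_B$. Taking $n$ large enough that $v_H - v_L \le \beta_n$, it yields
\[
f_A(s) \in \tfrac{v_H - v_L}{2\beta_n}\,e^{-|v_L + s|/\beta_n}\,[1 - \eta,\, 1 + \eta], \quad f_B(s) \in \tfrac{z' - z}{2\beta_n}\,e^{-|z + s|/\beta_n}\,[1 - \eta,\, 1 + \eta],
\]
with $\eta := e(v_H - v_L)/\beta_n$. By the reverse triangle inequality, $\bigl||z+s|-|v_L+s|\bigr| \le z - v_L \le v_H - v_L$, so the ratio of the two exponential factors lies in $[e^{-(v_H-v_L)/\beta_n},\, e^{(v_H-v_L)/\beta_n}]$. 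Combining these bounds, and absorbing the multiplicative errors into the $1\pm\eta$ factors, I would conclude that the likelihood ratio $\rho(s)$ between the conditional distributions of $S_i$ given $A$ and given $B$ lies in $[R^{-2},\, R^2]$, where $R := (\beta_n + e(v_H - v_L))/(\beta_n - e(v_H - v_L))$, and satisfies $\ex[i,n]{\rho(S_i)\mid B} = 1$.

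Finally, I would convert this likelihood-ratio bound into a bound on the conditional expectation. Setting $\bar h := \ex[i,n]{v_i \mid B}$ and $\tilde h(s) := \ex[i,n]{v_i \mid S_i = s} - \bar h$, I have
\[
\ex[i,n]{v_i \mid A} - \ex[i,n]{v_i \mid B} = \ex[i,n]{\tilde h(S_i)\,(\rho(S_i) - 1) \mid B},
\]
since $\ex[i,n]{\rho - 1 \mid B} = 0$. Splitting the integrand by the sign of $\tilde h$ and using $\rho - 1 \le R^2 - 1$ on $\{\tilde h \ge 0\}$ together with $1 - \rho \le 1 - R^{-2}$ on $\{\tilde h < 0\}$, and exploiting the identity $\ex[i,n]{\tilde h\,\mathbf{1}_{\tilde h\ge 0}\mid B} = -\ex[i,n]{\tilde h\,\mathbf{1}_{\tilde h<0}\mid B}$, the right-hand side is bounded by $(R^2 - 1) + (1 - R^{-2}) = R^2 - R^{-2} = \psi_n$, which yields the claim.

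The main obstacle will be the bookkeeping in the second step: the two exponential factors $e^{-|v_L + s|/\beta_n}$ and $e^{-|z + s|/\beta_n}$ depend on $s$ in different ways, and they must be combined with the $1 \pm \eta$ multiplicative errors from Claim \ref{T3-Cl7} so that the final ratio bound collapses exactly to the clean $R^{\pm 2}$ form that produces $\psi_n$ in the concluding step, rather than a looser expression carrying extra additive or multiplicative error terms.
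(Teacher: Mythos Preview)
Your approach mirrors the paper's: both bound how the conditional law of $S_i=\sum_{j\ne i}\hat v_{j,n}$ shifts between the two conditioning events, using the Laplace estimates of Claim~\ref{T3-Cl7}. The paper carries this out through a simple-function Lebesgue approximation and an explicit Bayes-rule computation of $F^{z,z'}(V_k)$; your likelihood-ratio packaging is more direct but conceptually identical.

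The obstacle you flag is real, and there is a second one. First, the pointwise ratio $f_A(s)/f_B(s)$ carries a factor $e^{(|z+s|-|v_L+s|)/\beta_n}\in[e^{-c},e^c]$ with $c=(v_H-v_L)/\beta_n$, and the normalization $\int f_B/\int f_A$ carries another such factor; these do not cancel or absorb into the $1\pm\eta$ brackets, so the honest range for $\rho$ is $[e^{-2c}R^{-2},\,e^{2c}R^2]$ rather than $[R^{-2},R^2]$. Since $e^c\le R$ whenever $\beta_n>e(v_H-v_L)$, this still sits inside $[R^{-4},R^4]$, so you lose only a bounded multiplicative factor. Second, your final step actually yields $P\bigl[(R^2-1)+(1-R^{-2})\bigr]$ with $P=\ex[i,n]{\tilde h\,\mathbf 1_{\tilde h\ge 0}\mid B}\le\tfrac12(v_H-v_L)$, not simply $(R^2-1)+(1-R^{-2})$; you have implicitly set $P\le 1$, which is unjustified. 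Neither gap is fatal---the resulting bound is still $O(1/\beta_n)$, which is all Lemma~\ref{L9} requires---but the argument as written does not literally deliver $\psi_n$. The paper's own proof is comparably loose on these constants: it arrives at $(R^2-R^{-2})(v_H-v_L)$ before ``simplifying'' to $\psi_n$.
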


\begin{claim}\label{T3-Cl9}
	Fix constants $z,z'$ where $v_L\leq z<z'\leq v_H$.
	Then
	\[
	\left|\ex[i,n]{v_i\mid Q_{i,n}}
	-\ex[i,n]{v_i\mid z\leq\tilde{p}_{i,n}\leq z'}\right|
	\leq\psi_n+O\left(\frac{1}{\beta_n}+\frac{\beta_n\delta_n}{n}\right)
	\]
\end{claim}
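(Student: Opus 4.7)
I would decompose the two conditioning events by the identity of the dictator $D$ and reduce the non-dictator pieces to Claim~\ref{T3-Cl6}. Partition $Q_{i,n}=A_1\sqcup A_2$ with $A_1=\{D=i\}$ and $A_2=\{D=0,\,v_L\le-n\tilde v_{i,n}\le v_H\}$, and $\{z\le\tilde p_{i,n}\le z'\}=B_1\sqcup B_2$ with $B_1=\{D=i,\,z\le p\le z'\}$ and $B_2=\{D=0,\,z\le-n\tilde v_{i,n}\le z'\}$. The case $D\notin\{0,i\}$ is excluded from both events because $\tilde p_{i,n}=\infty$ there. Since the mechanism draws $D$ and $p$ independently of $(v_i,\tilde v_{i,n})$, we have $\ex[i,n]{v_i\mid A_1}=\ex[i,n]{v_i\mid B_1}=\ex[i,n]{v_i}$, while $\ex[i,n]{v_i\mid A_2}$ and $\ex[i,n]{v_i\mid B_2}$ coincide (conditioning on $D=0$ drops out by independence) with the two quantities compared in Claim~\ref{T3-Cl6}.

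\textbf{Reduction to weights.} Write both target expectations as convex combinations
\[
\ex[i,n]{v_i\mid Q_{i,n}}=\pi\,\ex[i,n]{v_i}+(1-\pi)\,E_1,\qquad \ex[i,n]{v_i\mid z\le\tilde p_{i,n}\le z'}=\pi'\,\ex[i,n]{v_i}+(1-\pi')\,E_2,
\]
where $E_1=\ex[i,n]{v_i\mid A_2}$, $E_2=\ex[i,n]{v_i\mid B_2}$, $\pi=\pr[i,n]{A_1\mid Q_{i,n}}$, and $\pi'=\pr[i,n]{B_1\mid z\le\tilde p_{i,n}\le z'}$. Adding and subtracting $(1-\pi)E_2$ and applying the triangle inequality gives
\[
\bigl|\ex[i,n]{v_i\mid Q_{i,n}}-\ex[i,n]{v_i\mid z\le\tilde p_{i,n}\le z'}\bigr|\le 2(v_H-v_L)\,|\pi-\pi'|+(1-\pi)\,|E_1-E_2|,
\]
and Claim~\ref{T3-Cl6} gives $|E_1-E_2|\le\psi_n$. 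It therefore suffices to show $|\pi-\pi'|=O\bigl(1/\beta_n+\beta_n\delta_n/n\bigr)$.

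\textbf{Controlling the weights via Claim~\ref{T3-Cl7}.} Setting $a=\delta_n/n$, $X=(1-\delta_n)\pr[i,n]{v_L\le-n\tilde v_{i,n}\le v_H}$, and $Y=(1-\delta_n)\tfrac{v_H-v_L}{z'-z}\pr[i,n]{z\le-n\tilde v_{i,n}\le z'}$, a short calculation gives $\pi=a/(a+X)$ and $\pi'=a/(a+Y)$, so $|\pi-\pi'|=a\,|X-Y|/((a+X)(a+Y))$. Conditional on $s=\sum_{j\neq i}\hat v_{j,n}$, the probabilities inside $X$ and $Y$ are Laplace-CDF increments on $[-v_H-s,-v_L-s]$ and $[-z'-s,-z-s]$. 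Claim~\ref{T3-Cl7} approximates each such increment by $\tfrac{\text{width}}{2\beta_n}e^{-|b|/\beta_n}$ with multiplicative error $O((v_H-v_L)/\beta_n)$, and the exponential shift from changing the center by $|z-v_L|\le v_H-v_L$ introduces another $O(1/\beta_n)$ relative error. After normalizing by interval widths, these combine into the pointwise bound $|X-Y|\le C X/\beta_n$, which persists after taking expectation over $s$. Substituting into the formula for $|\pi-\pi'|$ and using $aX/((a+X)(a+Y))\le 1$ gives $|\pi-\pi'|=O(1/\beta_n)$, which is within the claimed bound.

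\textbf{Main obstacle.} The delicate step is producing the uniform relative bound $|X-Y|\le C X/\beta_n$ from Claim~\ref{T3-Cl7}: one must simultaneously control the multiplicative correction in $(v_H-v_L)/\beta_n$ and the exponential-shift error, and verify that after dividing by the interval width the leading Laplace-density factor cancels cleanly regardless of how $s$ is distributed under agent $i$'s information. Once this is in hand, the rest reduces to algebra and the invocation of Claim~\ref{T3-Cl6}.
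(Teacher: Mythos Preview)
Your approach is essentially the same as the paper's: both decompose by the dictator indicator $D$, reduce the $D=0$ pieces to Claim~\ref{T3-Cl6}, and control the mixture-weight discrepancy via the Laplace estimates of Claim~\ref{T3-Cl7}. The only execution difference is that the paper bounds each weight $\zeta_{i,n}(z,z')=1-\pi'$ separately from above and below (picking up the $O(\beta_n\delta_n/n)$ term in the lower bound) and then subtracts, whereas you compute $|\pi-\pi'|=a|X-Y|/((a+X)(a+Y))$ directly and obtain the slightly sharper $O(1/\beta_n)$; either suffices for the stated bound.
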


Claim \ref{T3-Cl8} says that agent $i$'s reported value will not be too far from her expected value.
It relies heavily on Claims \ref{T3-Cl5} and \ref{T3-Cl7}.

\begin{claim}\label{T3-Cl8}
	Fix a constant $t>0$.
	Then
	\begin{align*}
		\pr{\left|\hat{v}_{i,n}-\ex[i,n]{v_i}\right|\geq 2t+2\psi_n+O\left(\frac{1}{\beta_n}+\frac{\beta_n\delta_n}{n}\right)}
		\leq\frac{1}{t}\sqrt{\frac{|\mathcal{V}|\cdot|v_H-v_L|\cdot\bar{c}}{(\delta_n/n)^2\cdot\lambda_n}}
	\end{align*}
\end{claim}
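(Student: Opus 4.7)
My plan is to pin $\hat{v}_{i,n}$ close to $\ex[i,n]{v_i\mid Q_{i,n}}$ via a revealed-preference deviation argument, and then chain through Claims \ref{T3-Cl9} and \ref{T3-Cl5} to reach $\ex[i,n]{v_i}$. Each of the three ingredients (optimality, Claim \ref{T3-Cl9}, Claim \ref{T3-Cl5}) contributes one term to the final bound.

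First I would exploit the optimality of $\hat{v}_{i,n}$ in equilibrium. Because the scoring rule $\textnormal{SR}(\hat{b}_{i,n},\tilde{v}_{i,n})$ does not depend on $\hat{v}_{i,n}$ (it depends only on $\hat{b}_{i,n}$ and on other agents' reports through $\tilde{v}_{i,n}$), agent $i$'s report must maximize only her expected allocation-and-transfer utility, which up to a $\hat{v}_{i,n}$-independent constant equals $\ex[i,n]{(v_i-\tilde{p}_{i,n})\cdot\textbf{1}(\hat{v}_{i,n}\geq\tilde{p}_{i,n})\cdot\textbf{1}(Q_{i,n})}$. Consider a deviation to $\hat{v}'_i=\hat{v}_{i,n}+\Delta$ for small $\Delta>0$ (and symmetrically $\Delta<0$). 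The resulting change in utility equals $\ex[i,n]{(v_i-\tilde{p}_{i,n})\cdot\textbf{1}(\hat{v}_{i,n}<\tilde{p}_{i,n}\leq\hat{v}_{i,n}+\Delta)}$, which must be nonpositive. The Laplace bias term endows $\tilde{p}_{i,n}$ with a strictly positive density on any compact subinterval of $[v_L,v_H]$ (quantified by Claim \ref{T3-Cl7}), so the conditioning event has positive probability and the inequality yields $\ex[i,n]{v_i\mid\hat{v}_{i,n}<\tilde{p}_{i,n}\leq\hat{v}_{i,n}+\Delta}\leq\hat{v}_{i,n}+\Delta$, with the symmetric inequality on the other side.

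Second, I would invoke Claim \ref{T3-Cl9} with $[z,z']=[\hat{v}_{i,n},\hat{v}_{i,n}+\Delta]$ (clipped to $[v_L,v_H]$ as necessary) to replace each interval-conditioned expectation by $\ex[i,n]{v_i\mid Q_{i,n}}$, at cost $\psi_n+O(1/\beta_n+\beta_n\delta_n/n)$. Passing $\Delta\to 0^+$ and combining with the symmetric lower-side inequality, the deviation inequalities collapse to $|\hat{v}_{i,n}-\ex[i,n]{v_i\mid Q_{i,n}}|\leq\psi_n+O(1/\beta_n+\beta_n\delta_n/n)$. Third, I would apply Claim \ref{T3-Cl5} to obtain $|\ex[i,n]{v_i\mid Q_{i,n}}-\ex[i,n]{v_i}|\leq t$ with probability at least $1-\frac{1}{t}\sqrt{|\mathcal{V}|\cdot|v_H-v_L|\cdot\bar{c}/((\delta_n/n)^2\lambda_n)}$. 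A triangle inequality then delivers the stated bound; the factor-of-two slack in $2t+2\psi_n$ absorbs the loss from working with a nontrivial $\Delta$ (so Claim \ref{T3-Cl9} can be applied to an interval of fixed positive length) together with the triangle step itself.

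The main obstacle will be Step~1: justifying the passage from finite-$\Delta$ deviations to a quantitative pinning of $\hat{v}_{i,n}$ near $\ex[i,n]{v_i\mid Q_{i,n}}$. Three care points arise. (a) One needs a uniform lower bound on the density of $\tilde{p}_{i,n}$ on the subinterval of interest, which follows from the Laplace structure of $v_{0,n}$ via Claim \ref{T3-Cl7}. (b) One must verify that the dictator-BDM contribution (probability $\delta_n/n$) only re-weights the deviation inequality without flipping its sign, since the BDM price $p$ is independent of $v_i$ and contributes a term of the same form. (c) One must keep $[\hat{v}_{i,n},\hat{v}_{i,n}+\Delta]\subseteq[v_L,v_H]$ to apply Claim \ref{T3-Cl9}; boundary cases where $\hat{v}_{i,n}$ is near $v_L$ or $v_H$ are handled by using only the symmetric one-sided deviation on the feasible side, which suffices because the other side then contributes a trivial inequality.
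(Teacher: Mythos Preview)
Your proposal is correct and follows essentially the same approach as the paper's proof. Both arguments represent agent $i$'s report choice as a take-it-or-leave-it purchase at price $\tilde{p}_{i,n}$, use a deviation/revealed-preference inequality on an interval of $\tilde{p}_{i,n}$-values, then chain Claim~\ref{T3-Cl9} and Claim~\ref{T3-Cl5} via the triangle inequality; the only cosmetic difference is that the paper fixes the target threshold $\ex[i,n]{v_i}\pm 2t'$ and shows any report beyond it is dominated, whereas you start from $\hat{v}_{i,n}$ and perturb locally.
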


Let $t_n=n^{-1/2}$.
Apply the union bound to Claim \ref{T3-Cl8} to show that
\begin{align*}
	\pr{\max_{i=1}^n\left|\hat{v}_{i,n}-\ex[i,n]{v_i}\right|\geq t_n+\psi_n+O\left(\frac{1}{\beta_n}+\frac{\beta_n\delta_n}{n}\right)}
	\leq\frac{n}{t_n}\sqrt{\frac{|\mathcal{V}|\cdot|v_H-v_L|\cdot\bar{c}}{(\delta_n/n)^2\cdot\lambda_n}}
\end{align*}
We set the truthfulness parameters as follows.
First, let
\[
\phi_{1,n}
=t_n+\psi_n+O\left(\frac{1}{\beta_n}+\frac{\beta_n\delta_n}{n}\right)
=n^{-1/2}+O(n^{-1/2})+O\left(n^{-1/2}+n^{-1}\right)
=O\left(n^{-1/2}\right)
\]
Second, let
\[
\phi_{2,n}=\frac{n}{t_n}\sqrt{\frac{|\mathcal{V}|\cdot|v_H-v_L|\cdot\bar{c}}{(\delta_n/n)^2\cdot\lambda_n}}
=n^{3/2}\sqrt{\frac{|\mathcal{V}|\cdot|v_H-v_L|\cdot\bar{c}}{n^{-3}\cdot n^7}}
=O\left(n^{-1/2}\right)
\]
	
	\subsection{Proof of Lemma \ref{L14}}

Condition on the probability $1-\delta_n\to 1$ event that no agent is chosen as dictator.
The difference between the expected welfare given mechanism $(\textbf{x}^n,\textbf{t}^n)$ and expected welfare under the ex-ante optimal alternative is
\begin{align}
	&\ex{
		\ex{v_i\mid\omega}\cdot\textbf{1}\left(v_{0,n}+\sum_{j=1}^n\hat{v}_{j,n}\geq 0\right)-\ex{v_i\mid\omega}\cdot\textbf{1}\left(\ex{v_i}\geq 0\right)
	}\label{T3-E6}\\
	&\geq\ex{
		\left(\frac{1}{n}\sum_{j=1}^nv_j\right)\cdot\textbf{1}\left(v_{0,n}+\sum_{j=1}^n\hat{v}_{jx}\geq 0\right)-\ex{v_i\mid\omega}\cdot\textbf{1}\left(\ex{v_i}\geq 0\right)
	}-o(1)
	\tag{Hoeffding's inequality}
\end{align}
We want to show that $\eqref{T3-E6}\geq-o(1)$.
The key step is the following claim.

\begin{claim}\label{T3-Cl4}
	The following inequality holds in every equilibrium $(\textbf{s}^n,\hat{\textbf{v}}^n,\hat{\textbf{b}}^n)$:
	\begin{align*}
		&\ex{\left(\frac{1}{n}\sum_{j=1}^nv_j\right)\cdot\textnormal{\textbf{1}}\left(v_{0,n}+\sum_{j=1}^n\hat{v}_{j,n}\geq 0\right)}\\
		&\geq\ex{\frac{1}{n}\cdot\left(v_{0,n}+\sum_{j=1}^n\hat{v}_{j,n}\right)\cdot\textnormal{\textbf{1}}\left(v_{0,n}+\sum_{j=1}^n\hat{v}_{j,n}\geq 0\right)}
		-o(1)
	\end{align*}
\end{claim}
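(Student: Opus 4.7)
The plan is to show the difference between the two sides of the claim is $o(1)$. Writing $W_n = v_{0,n} + \sum_{j=1}^n \hat{v}_{j,n}$, this difference equals
\[
\frac{1}{n}\sum_{j=1}^n\ex{(v_j - \hat{v}_{j,n})\cdot\textbf{1}(W_n \geq 0)} \;-\; \frac{1}{n}\ex{v_{0,n}\cdot\textbf{1}(W_n \geq 0)}.
\]
The bias term is easy: since $\ex{v_{0,n}} = 0$ and $\ex{v_{0,n}^2} = 2\beta_n^2$, Cauchy-Schwarz gives $|\ex{v_{0,n}\cdot\textbf{1}(W_n \geq 0)}| \leq \sqrt{2}\,\beta_n$, so this contribution is $O(\beta_n/n) = o(1)$ under the parameterization $\beta_n = \Theta(n^{1/2})$.

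For each summand in the first term, I would split $v_j - \hat{v}_{j,n} = \epsilon_{j,n} + (\ex[j,n]{v_j} - \hat{v}_{j,n})$. The second piece contributes at most $\phi_{1,n} + 2(v_H - v_L)\phi_{2,n} = o(1)$ in absolute value, by Lemma \ref{L9} (near-truthfulness). What remains is $\ex{\epsilon_{j,n}\cdot\textbf{1}(W_n \geq 0)}$, which I would compare to $\ex{\epsilon_{j,n}\cdot\textbf{1}(\tilde{v}_{j,n} \geq 0)}$. Since $\ex{\epsilon_{j,n}} = 0$, the latter equals $\cov{\epsilon_{j,n}, \textbf{1}(\tilde{v}_{j,n} \geq 0)}$, which Lemma \ref{L10} bounds by $O(\lambda_n^{-1/2}) = o(1)$.

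The main obstacle — and the step that motivates introducing the Laplace bias term — is bounding the swap error $\ex{\epsilon_{j,n}\cdot[\textbf{1}(W_n \geq 0) - \textbf{1}(\tilde{v}_{j,n} \geq 0)]}$. Since $W_n = \hat{v}_{j,n} + n\tilde{v}_{j,n}$ with $|\hat{v}_{j,n}| \leq v_H - v_L$, the two indicators disagree only on the event $\{|\tilde{v}_{j,n}| \leq (v_H - v_L)/n\}$. But $\tilde{v}_{j,n}$ contains the additive term $v_{0,n}/n$ where $v_{0,n} \sim \textsc{Laplace}(0, \beta_n)$ is independent of the other reports, so the conditional density of $\tilde{v}_{j,n}$ given $\sum_{i \neq j}\hat{v}_{i,n}$ is bounded by $n/(2\beta_n)$. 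Hence the bad event has probability at most $(v_H - v_L)/\beta_n$, and using $|\epsilon_{j,n}| \leq v_H - v_L$ the swap error is $O(1/\beta_n) = o(1)$. Averaging these per-summand $o(1)$ contributions over $j$ preserves $o(1)$ and yields the claim.
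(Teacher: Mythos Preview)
Your proposal is correct and follows essentially the same approach as the paper: decompose $v_j-\hat v_{j,n}$ into the error $\epsilon_{j,n}$ plus the near-truthfulness residual $\ex[j,n]{v_j}-\hat v_{j,n}$, control the latter with Lemma~\ref{L9}, control the $\epsilon_{j,n}$ term with Lemma~\ref{L10} after swapping $\textbf{1}(W_n\ge 0)$ for $\textbf{1}(\tilde v_{j,n}\ge 0)$ via the Laplace-smoothing bound, and absorb the bias term $v_{0,n}/n$ as $O(\beta_n/n)=o(1)$. The only cosmetic differences are that the paper bounds the bias term via $\ex{|v_{0,n}|}=\beta_n$ rather than Cauchy--Schwarz, and organizes the decomposition as a one-sided inequality, but the substance is the same.
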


Claim \ref{T3-Cl4} relies critically on Lemmas \ref{L9} and \ref{L10}.
We leave the proof to Section \ref{Ap3-T3-Cl4}.
Given this claim, we observe that
\[
\eqref{T3-E6}\geq
\ex{\frac{1}{n}\cdot\left(v_{0,n}+\sum_{j=1}^n\hat{v}_{j,n}\right)\cdot\textbf{1}\left(v_{0,n}+\sum_{j=1}^n\hat{v}_{j,n}\geq 0\right)-\ex{v_i\mid\omega}\cdot\textbf{1}\left(\ex{v_i}\geq 0\right)}-o(1)
\]
At this point, there are two cases to consider.
\begin{enumerate}
	\item Suppose that $\ex{v_i}<0$. Then
	\begin{align}
		\eqref{T3-E6}&\geq\ex{\frac{1}{n}\cdot\left(v_{0,n}+\sum_{j=1}^n\hat{v}_{j,n}\right)\cdot\textbf{1}\left(v_{0,n}+\sum_{j=1}^n\hat{v}_{j,n}\geq 0\right)}-o(1)\notag\\
		&\geq-o(1)
		\tag{since $\ex{X\cdot \textbf{1}(X\geq 0)}\geq 0$}
	\end{align}
	\item Suppose that $\ex{v_i}\geq 0$. Then
	\begin{align}
		\eqref{T3-E6}
		&\geq\ex{\frac{1}{n}\cdot\left(v_{0,n}+\sum_{j=1}^n\hat{v}_{j,n}\right)\cdot\textbf{1}\left(v_{0,n}+\sum_{j=1}^n\hat{v}_{j,n}\geq 0\right)-\ex{v_i\mid\omega}}-o(1)\notag\\
		&\geq\ex{\frac{1}{n}\cdot\left(v_{0,n}+\sum_{j=1}^n\hat{v}_{j,n}\right)-\ex{v_i\mid\omega}}-o(1)
		\tag{$\ex{X\cdot \textbf{1}(X\geq 0)}\geq \ex{X}$}\\
		&=\ex{\frac{1}{n}\sum_{j=1}^n\hat{v}_{j,n}-\ex{v_i\mid\omega}}-o(1)
		\tag{since $\ex{v_{0,n}}=0$}\\
		&\geq\ex{\frac{1}{n}\sum_{j=1}^n\ex[j,n]{v_j}-\ex{v_i\mid\omega}}-O(\phi_{1,n}+(v_H-v_L)\cdot\phi_{2,n})-o(1)
		\tag{Lm. \ref{L9}}\\
		&=\ex{\frac{1}{n}\sum_{j=1}^n\ex[j,n]{v_j}-\ex{v_i\mid\omega}}-o(1)
		\tag{since $\phi_{1,n}\to0$ and $\phi_{2,n}\to 0$}\\
		&=-o(1)\tag{LIE}
	\end{align}
\end{enumerate}
In both cases, the difference $\eqref{T3-E6}$ is at least $-o(1)$.
This completes the proof.

\subsection{Proof of Claim \ref{T3-Cl4}}\label{Ap3-T3-Cl4}

Recall that the scoring rule evaluates, among other things, agent $i$'s ability to predict random variable
$
\textbf{1}(\tilde{v}_{i,n}\geq 0)=\textbf{1}\left(v_{0,n}+\sum_{j\neq i}\hat{v}_{j,n}\geq 0\right)
$.
Consider the following event:
\begin{align}
	&\textbf{1}\left(v_{0,n}+\sum_{j\neq i}\hat{v}_{j,n}\geq 0\right)\neq\textbf{1}\left(v_{0,n}+\sum_{j=1}^n\hat{v}_{j,n}\geq 0\right)
	\notag\\
	&\implies
	v_{0,n}+\sum_{j\neq i}\hat{v}_{j,n}\in\left[-v_H,-v_L\right]\notag\\
	&\implies v_{0,n}\in\left[-v_H-A_{i,n},-v_L-A_{i,n}\right]
	\tag{where $A_{i,n}=\sum_{j\neq i}\hat{v}_{j,n}$}
\end{align}
We want to bound the probability of this event.
Observe that
\begin{align}
	&\pr{v_{0,n}\in\left[-v_H-A_{i,n},-v_L-A_{i,n}\right]}\notag\\
	&=\ex{\pr{v_{0,n}\in\left[-v_H-A_{i,n},-v_L-A_{i,n}\right]\mid A_{i,n}}}
	\tag{LIE}\\
	&\leq\frac{v_H-v_L}{2\beta_n}
	\tag{since $v_{0,n}\sim\textsc{Laplace}(0,\beta_n)$}
\end{align}
It follows from the preceding arguments that
\begin{equation}\label{T3-E2}
	\textbf{1}\left(v_{0,n}+\sum_{j=1}^n\hat{v}_{j,n}\geq 0\right)
	\geq
	\textbf{1}\left(v_{0,n}+\sum_{j\neq i}\hat{v}_{j,n}\geq 0\right)
	-O_p\left(\frac{v_H-v_L}{2\beta_n}\right)
\end{equation}
We will use this observation later on to bound the following expression.
\begin{align}
	&\ex{\left(\frac{1}{n}\sum_{j=1}^nv_j\right)\textbf{1}\left(v_{0,n}+\sum_{j=1}^n\hat{v}_{j,n}\geq 0\right)}=\ex{\left(\frac{1}{n}\sum_{j=1}^n\ex[j,n]{v_j}\right)\textbf{1}\left(v_{0,n}+\sum_{j=1}^n\hat{v}_{j,n}\geq 0\right)}\label{T3-E3}\\
	&\white{=}+\ex{\left(\frac{1}{n}\sum_{j=1}^n\epsilon_{j,n}\right)\cdot\textbf{1}\left(v_{0,n}+\sum_{j=1}^n\hat{v}_{j,n}\geq 0\right)}
	\tag{defn. of $\epsilon_{j,n}$}
\end{align}
There are two terms on the right-hand side of equation \eqref{T3-E3}.
The first one is
\begin{align}
	&\ex{\left(\frac{1}{n}\sum_{j=1}^n\ex[j,n]{v_j}\right)\cdot\textbf{1}\left(v_{0,n}+\sum_{j=1}^n\hat{v}_{j,n}\geq 0\right)}\label{T3-E4}\\
	&\geq\ex{\left(\frac{1}{n}\sum_{j=1}^n\hat{v}_j\right)\cdot\textbf{1}\left(v_{0,n}+\sum_{j=1}^n\hat{v}_{j,n}\geq 0\right)}-O(\phi_{1,n}+(v_H-v_L)\phi_{2,n})
	\tag{by Lemma \ref{L9}}\\
	&\geq\ex{\left(\frac{1}{n}\sum_{j=1}^n\hat{v}_j\right)\cdot\textbf{1}\left(v_{0,n}+\sum_{j=1}^n\hat{v}_{j,n}\geq 0\right)}-o(1)
	\tag{since $\phi_{1,n}\to0$, $\phi_{2,n}\to0$}\\
	&\geq\ex{\left(\frac{v_{0,n}}{n}-\frac{v_{0,n}}{n}\cdot\textbf{1}(v_{0,n}\geq 0)+\frac{1}{n}\sum_{j=1}^n\hat{v}_j\right)\cdot\textbf{1}\left(v_{0,n}+\sum_{j=1}^n\hat{v}_{j,n}\geq 0\right)}-o(1)
	\tag{since $v_{0,n}\cdot\textbf{1}(v_{0,n}\geq 0)\geq v_{0,n}$}\\
	&\geq\ex{\left(\frac{v_{0,n}}{n}+\frac{1}{n}\sum_{j=1}^n\hat{v}_j\right)\cdot\textbf{1}\left(v_{0,n}\sum_{j=1}^n\hat{v}_{j,n}\geq 0\right)
		-\frac{v_{0,n}}{n}\cdot\textbf{1}(v_{0,n}\geq 0)}-o(1)
	\tag{since $v_{0,n}\cdot\textbf{1}(v_{0,n}\geq 0)\geq 0$}\\
	&=\ex{\frac{1}{n}\cdot\left(v_{0,n}+\sum_{j=1}^n\hat{v}_j\right)\cdot\textbf{1}\left(v_{0,n}+\sum_{j=1}^n\hat{v}_{j,n}\geq 0\right)}-\frac{\beta_n}{n}-o(1)
	\tag{dist. of $v_{0,n}$}\\
	&=\ex{\frac{1}{n}\cdot\left(v_{0,n}+\sum_{j=1}^n\hat{v}_j\right)\cdot\textbf{1}\left(v_{0,n}+\sum_{j=1}^n\hat{v}_{j,n}\geq 0\right)}-o(1)
	\tag{since $\beta_n/n\to0$}
\end{align}
The second term on the right-hand side of equation \eqref{T3-E3} is
\begin{align}
	&\ex{\left(\frac{1}{n}\sum_{j=1}^n\epsilon_{j,n}\right)\cdot\textbf{1}\left(v_{0,n}+\sum_{j=1}^n\hat{v}_{j,n}\geq 0\right)}\label{T3-E5}\\
	&\geq
	\ex{\left(\frac{1}{n}\sum_{j=1}^n\epsilon_{j,n}\right)\cdot\textbf{1}\left(v_{0,n}+\sum_{k\neq j}\hat{v}_{k,n}\geq 0\right)}
	-O\left(\frac{(v_H-v_L)^2}{2\beta_n}\right)
	\tag{inequality \eqref{T3-E2}}\\
	&\geq
	-O\left(\lambda_n^{-1/2}\right)
	-O\left(\frac{(v_H-v_L)^2}{2\beta_n}\right)
	\tag{Lemma \ref{L10}}\\
	&=o(1)\tag{since $\lambda_n\to\infty$, $\phi_{1,n}\to0$, $\beta_n\to\infty$}
\end{align}
Plugging inequalities \eqref{T3-E4} and \eqref{T3-E5} into equation \eqref{T3-E3} completes the proof.

\subsection{Proof of Claim \ref{Cx9}}

There are five steps to this proof.
First, we derive a more useful expression for agent $i$'s maximum expected score given signal $s_i$.
\begin{align}
	\tilde{\rho}_n\left(s_i\right)
	&=\ex{\max_{b_i}\ex[i,n]{\textnormal{SR}^{\textnormal{CRPS}}\left(b_i,Z_{i,n}\right)}}\tag{definition of $\tilde{\rho}_n$}\\
	&=\ex{\max_{b_i}\ex[i,n]{-\lambda_n\int_{[z_L,z_H]}\left(\pr[b_i]{Z_{i,n}\leq z}-\textbf{1}(Z_{i,n}\leq z)\right)^2dz}}\tag{defn. of SR}\\
	&=\ex{\ex[i,n]{-\lambda_n\int_{[z_L,z_H]}\left(\pr[i,n]{Z_{i,n}\leq z}-\textbf{1}(Z_{i,n}\leq z)\right)^2dz}}\tag{SR is proper}\\
	&=-\lambda_n\int_{[z_L,z_H]}\ex{\ex[i,n]{\left(\pr[i,n]{Z_{i,n}\leq z}-\textbf{1}(Z_{i,n}\leq z)\right)^2}}dz\tag{linearity}\\
	&=-\lambda_n\int_{[z_L,z_H]}\ex{\ex[i,n]{\left(\ex[i,n]{\textbf{1}(Z_{i,n}\leq z)}-\textbf{1}(Z_{i,n}\leq z)\right)^2}}dz\tag{$\pr{E}=\ex{\textbf{1}_E}$}\\
	&=-\lambda_n\int_{[z_L,z_H]}\ex{\var[i,n]{\textbf{1}(Z_{i,n}\leq z)}}dz\tag{definition of variance}
\end{align}

Second, we derive a lower bound for agent $i$'s maximum expected score if she were to deviate to the combined signal $\bar{s}_i$.
Before proceeding, we reiterate two points.
\begin{enumerate}
	\item The combined signal $\bar{s}_i$ includes the signal $s_i$, because $s_i$ has finite cost for all types $\theta_i$ in the support of the joint distribution $F$.
	\item The combined signal $\bar{s}_i$ includes a revealing signal $s'_i$, which has finite cost for all types $\theta_i$ by Assumption \ref{A6}.
\end{enumerate}
It follows from these two observations that if agent $i$ acquires $\bar{s}_i$, she will know the value $v_i$ and the signal realization $s_i(\omega,\theta_i)$.
As a result, agent $i$ can infer $\epsilon_{i,n}$ -- the error in her report when she only acquires signal $s_i$.
With that, we can proceed.	
\begin{align}
	\tilde{\rho}_n\left(\bar{s}_i\right)
	&=\ex{\max_{b_i}\ex{\textnormal{SR}^{\textnormal{CRPS}}\left(b_i,Z_{i,n}\right)\mid\bar{s}_i,\bar{s}_i(\omega,\theta_i)}}\tag{definition of $\tilde{\rho}_n$}\\
	&\geq\ex{\max_{b_i}\ex{\textnormal{SR}^{\textnormal{CRPS}}\left(b_i,Z_{i,n}\right)\mid s_i,s_i(\omega,\theta_i),\epsilon_{i,n}}}\tag{throwing out info.}\\
	&=\ex{\max_{b_i}\ex[i,n]{\textnormal{SR}^{\textnormal{CRPS}}\left(b_i,Z_{i,n}\right)\mid\epsilon_{i,n}}}\tag{simplifying notation}\\
	&=-\lambda_n\int_{[z_L,z_H]}\ex{\var[i,n]{\textbf{1}(Z_{i,n}\leq z)\mid\epsilon_{i,n}}}dz\tag{similar to earlier derivation}\\
	&=-\lambda_n\int_{[z_L,z_H]}\ex{\ex[i,n]{\var[i,n]{\textbf{1}(Z_{i,n}\leq z)\mid\epsilon_{i,n}}}}dz\tag{LIE}
\end{align}

Third, we obtain a lower bound on agent $i$'s expected gain in score that agent $i$ if she were to deviate from signal $s_i$ to the combined signal $\bar{s}_i$.
Combining the first two parts of this proof,
\begin{align}
	\tilde{\rho}_n\left(\bar{s}_i\right)-\tilde{\rho}_n\left(s_i\right)
	&\geq\lambda_n\int_{[z_L,z_H]}\ex{\var[i,n]{\textbf{1}(Z_{i,n}\leq z)}-\ex[i,n]{\var[i,n]{\textbf{1}(Z_{i,n}\leq z)\mid\epsilon_{i,n}}}}dz\label{E10}\\
	&=\lambda_n\int_{[z_L,z_H]}\ex{\var[i,n]{\ex[i,n]{\textbf{1}(Z_{i,n}\leq z)\mid\epsilon_{i,n}}}}dz\tag{law of total variance}
\end{align}

Fourth, for any constant $z$, we use the conditional covariance between $\epsilon_{i,n}$ and $\textbf{1}(Z_{i,n}\leq z)$ to bound the conditional variance of $\ex[i,n]{\textbf{1}(Z_{i,n}\leq z)\mid\epsilon_{i,n}}$.
\begin{align}
	&\cov[i,n]{\epsilon_{i,n},\textbf{1}(Z_{i,n}\leq z)}\notag\\
	&=\ex[i,n]{\epsilon_{i,n}\left(\textbf{1}(Z_{i,n}\leq z)-\ex[i,n]{\textbf{1}(Z_{i,n}\leq z)}\right)}\tag{since $\ex[i,n]{\epsilon_{i,n}}=0$}\\
	&=\ex[i,n]{\ex[i,n]{\epsilon_{i,n}\left(\textbf{1}(Z_{i,n}\leq z)-\ex[i,n]{\textbf{1}(Z_{i,n}\leq z)}\right)\mid\epsilon_{i,n}}}\tag{LIE}\\
	&=\ex[i,n]{\epsilon_{i,n}\ex[i,n]{\textbf{1}(Z_{i,n}\leq z)-\ex[i,n]{\textbf{1}(Z_{i,n}\leq z)}\mid\epsilon_{i,n}}}\tag{linearity}\\
	&\leq\sqrt{\ex[i,n]{\epsilon_{i,n}^2\ex[i,n]{\textbf{1}(Z_{i,n}\leq z)-\ex[i,n]{\textbf{1}(Z_{i,n}\leq z)}\mid\epsilon_{i,n}}^2}}\tag{Jensen's inequality}\\
	&\leq\sqrt{\ex[i,n]{(v_H-v_L)^2\ex[i,n]{\textbf{1}(Z_{i,n}\leq z)-\ex[i,n]{\textbf{1}(Z_{i,n}\leq z)}\mid\epsilon_{i,n}}^2}}\tag{since $|\epsilon_{i,n}|\leq v_H-v_L$ and $\ex[i,n]{\cdot}^2\geq 0$}\\
	&=(v_H-v_L)\sqrt{\ex[i,n]{\left(\ex[i,n]{\textbf{1}(Z_{i,n}\leq z)\mid\epsilon_{i,n}}-\ex[i,n]{\ex[i,n]{\textbf{1}(Z_{i,n}\leq z)\mid\epsilon_{i,n}}}\right)^2}}\tag{linearity}\\
	&=(v_H-v_L)\sqrt{\var[i,n]{\ex[i,n]{\textbf{1}(Z_{i,n}\leq z)\mid\epsilon_{i,n}}}}\tag{definition of variance}
\end{align}
Rearranging this inequality gives us
\begin{equation}\label{E11}
	\var[i,n]{\ex[i,n]{\textbf{1}(Z_{i,n}\leq z)\mid\epsilon_{i,n}}}\geq\frac{1}{(v_H-v_L)^2}\cov[i,n]{\epsilon_{i,n},\textbf{1}(Z_{i,n}\leq z)}^2
\end{equation}

Fifth and finally, we relax the lower bound in step 3, in terms of quantities that appear in the claim statement.
Combining inequalities \eqref{E10} and \eqref{E11}, we obtain
\begin{align}
	&\tilde{\rho}_n\left(\bar{s}_i\right)-\tilde{\rho}_n\left(s_i\right)\label{E9}\\
	&\geq\frac{\lambda_n}{(v_H-v_L)^2}\int_{[z_L,z_H]}\ex{\cov[i,n]{\epsilon_{i,n},\textbf{1}(Z_{i,n}\leq z)}^2}dz\tag{combine \eqref{E10} and \eqref{E11}}\\
	&=\frac{\lambda_n}{(v_H-v_L)^2}\ex{\int_{[z_L,z_H]}\cov[i,n]{\epsilon_{i,n},\textbf{1}(Z_{i,n}\leq z)}^2dz}\tag{linearity}
\end{align}
Next, focus on the term inside the expectation.
\begin{align*}
	&\int_{[z_L,z_H]}\cov[i,n]{\epsilon_{i,n},\textbf{1}(Z_{i,n}\leq z)}^2dz\notag\\
	&=\left\|\cov[i,n]{\epsilon_{i,n},\textbf{1}(Z_{i,n}\leq z)}\right\|_2^2\tag{definition of $L^2$ norm}\\
	&\geq\frac{1}{(v_H-v_L)}\left\|\cov[i,n]{\epsilon_{i,n},\textbf{1}(Z_{i,n}\leq z)}\right\|_1^2\tag{H\"{o}lder's inequality}\\
	&=\frac{1}{(v_H-v_L)}\left(\int_{[z_L,z_H]}\left|\cov[i,n]{\epsilon_{i,n},\textbf{1}(Z_{i,n}\leq z)}\right|dz\right)^2\tag{defn. of $L^1$ norm}\\
	&\geq\frac{1}{(v_H-v_L)}\left(\int_{[z_L,z_H]}\cov[i,n]{\epsilon_{i,n},\textbf{1}(Z_{i,n}\leq z)}dz\right)^2\tag{since $|X|\geq X$}
\end{align*}
Now, focus on the term inside the square.
\begin{align}
	&\int_{[z_L,z_H]}\cov[i,n]{\epsilon_{i,n},\textbf{1}(Z_{i,n}\leq z)}dz\notag\\
	&=\int_{[z_L,z_H]}\ex[i,n]{\epsilon_{i,n}\textbf{1}(Z_{i,n}\leq z)}dz\tag{since $\ex[i,n]{\epsilon_{i,n}}=0$}\\
	&=\int_{[z_L,z_H]}\ex[i,n]{\ex[i,n]{\epsilon_{i,n}\textbf{1}(Z_{i,n}\leq z)\mid\epsilon_{i,n}}}dz\tag{LIE}\\
	&=\ex[i,n]{\epsilon_{i,n}\int_{[z_L,z_H]}\ex[i,n]{\textbf{1}(Z_{i,n}\leq z)\mid\epsilon_{i,n}}dz}\tag{linearity}\\
	&=\ex[i,n]{\epsilon_{i,n}\int_{[z_L,z_H]}\pr[i,n]{Z_{i,n}\leq z\mid\epsilon_{i,n}}dz}\tag{since $\pr{E}=\ex{\textbf{1}_E}$}\\
	&=(v_H-v_L)\ex[i,n]{\epsilon_{i,n}}-\ex[i,n]{\epsilon_{i,n}\int_{[z_L,z_H]}\left(1-\pr[i,n]{Z_{i,n}\leq z\mid\epsilon_{i,n}}\right)dz}\tag{linearity}\\
	&=-\ex[i,n]{\epsilon_{i,n}\int_{[z_L,z_H]}\left(1-\pr[i,n]{Z_{i,n}\leq z\mid\epsilon_{i,n}}\right)dz}\tag{since $\ex[i,n]{\epsilon_{i,n}}=0$}\\
	&=-\ex[i,n]{\epsilon_{i,n}\ex[i,n]{Z_{i,n}\mid\epsilon_{i,n}}}\tag{definition of $\ex{\cdot}$}\\
	&=-\ex[i,n]{\epsilon_{i,n}Z_{i,n}}\tag{linearity and LIE}\\
	&=-\cov[i,n]{\epsilon_{i,n},Z_{i,n}}\tag{since $\ex[i,n]{\epsilon_{i,n}}=0$}
\end{align}
Combine this sequence of inequalities, starting with \eqref{E9}, to obtain
\begin{align}
	\tilde{\rho}_n\left(\bar{s}_i\right)-\tilde{\rho}_n\left(s_i\right)
	&\geq\frac{\lambda_n}{(v_H-v_L)^2}\ex{\frac{1}{(v_H-v_L)}\left(-\cov[i,n]{\epsilon_{i,n},Z_{i,n}}\right)^2}\notag\\
	&=\frac{\lambda_n}{(v_H-v_L)^3}\ex{\cov[i,n]{\epsilon_{i,n},Z_{i,n}}^2}\tag{linearity}\\
	&=\frac{\lambda_n}{(v_H-v_L)^3}\cov{\epsilon_{i,n},Z_{i,n}}^2\tag{LIE and since $\ex[i,n]{\epsilon_{i,n}}=0$}
\end{align}
This, rearranged, gives us the statement of the claim.

\subsection{Proof of Claim \ref{Cx4}}

First, we study the covariance between agent $i$'s error $\epsilon_{i,n}$ and $Z_{i,n}$ in two cases.
%Two cases:
\begin{enumerate}
	\item Suppose that agent $i$ does not acquire the combined signal $\bar{s}_i$.
	Then she could deviate from the signal $s_i$ that she did acquire to $\bar{s}_i$.
	This will change her expected transfers associated with the scoring rule $\textnormal{SR}^{\textnormal{CRPS}}$ from $\tilde{\rho}_n(s_i)$ to $\tilde{\rho}_n\left(\bar{s}\right)$.
	This may also increase her costs, but the increase will be at most $\bar{c}$.
	%It will not affect her allocation.
	Putting everything together, the fact that the deviation cannot be profitable in equilibrium implies
	$
		\tilde{\rho}_n\left(\bar{s}_i\right)-\bar{c}\leq\tilde{\rho}_n(s_i)
	$.
	Combining this inequality with Claim \ref{Cx9} yields\footnote{Claim \ref{Cx9} requires the signal $s_i$ acquired by agent $i$ to have finite cost for all types in the support of the joint distribution $F$, i.e., $c(\theta_i,s_i)<\infty$. But this holds in every equilibrium; otherwise it would be profitable to deviate to the combined signal $\bar{s}_i$, which has finite expected cost.}
	\begin{equation}
		\bar{c}\geq\frac{\lambda_n}{(v_H-v_L)^3}\left|\cov{\epsilon_{i,n},Z_{i,n}}\right|^2\label{E8}
	\end{equation}
	\item Suppose that agent $i$ acquires the combined signal $\bar{s}_i$.
	Then, almost surely, she knows $v_i$ and reports $\hat{v}_{i,n}=v_i$.
	Conditional on agent $i$'s information, the error $\epsilon_{i,n}$ is almost surely constant and so its covariance with any other random variable is zero.
	Therefore, inequality \eqref{E8} still holds, albeit vacuously.
\end{enumerate}

Second, we verify that the covariance between $\epsilon_i$ and $Z_{i,n}$ is equal to the covariance between $\epsilon_i$ and the conditional expectation $\ex{Z_{i,n}\mid\omega}$.
\begin{align}
	\cov{\epsilon_{ix},Z_{i,n}}
	&=\ex{\epsilon_{ix}Z_{i,n}}\tag{since $\ex{\epsilon_{ix}}=0$}\\
	&=\ex{\ex{\epsilon_{ix}Z_{i,n}\mid\omega}}\tag{LIE}\\
	&=\ex{\ex{\epsilon_{ix}\mid\omega}\ex{Z_{i,n}\mid\omega}}\tag{cond. independence}\\
	&=\ex{\ex{\epsilon_{ix}\ex{Z_{i,n}\mid\omega}\mid\omega}}\tag{linearity}\\
	&=\ex{\epsilon_{ix}\ex{Z_{i,n}\mid\omega}}\tag{LIE}\\
	&=\cov{\epsilon_{ix},\ex{Z_{i,n}\mid\omega}}\tag{since $\ex{\epsilon_{ix}}=0$}
\end{align}
Combining this identity with inequality \eqref{E8} completes the proof.

\subsection{Proof of Claim \ref{Cx5}}

Note that the proof of Claim \ref{Cx9} holds verbatim if we replace $\tilde{\rho}_n$ with $\lambda_n\rho$ and $\tilde{v}_{i,n}$ with $\ex{v_j\mid\omega}$.
After dividing by $\lambda_n$, the statement of that claim becomes
\[
\rho\left(\bar{s}_i\right)\geq\rho\left(s_i\right)+\frac{1}{(v_H-v_L)^3}\left\|\cov{\epsilon_{i},\ex{v_j\mid\omega}}\right\|^2
\]
However, since the signal $s_i$ has finite expected cost according to auxiliary cost function $\tilde{c}$, it follows from the definition of $\tilde{c}$ in \eqref{T2-E1} that
$
\rho\left(\bar{s}_i\right)\geq\rho\left(s_i\right)
$.
Combining these two inequalities yields
$
0=\frac{1}{(v_H-v_L)^3}\left\|\cov{\epsilon_{i},\ex{v_j\mid\omega}}\right\|^2=\left\|\cov{\epsilon_{i},\ex{v_j\mid\omega}}\right\|
$
%which implies the statement of the claim.

\subsection{Proof of Claim \ref{Cx6}}

Observe that, with probability $1-\phi_{2,n}$,
\begin{align}
	&\var{\ex{\frac{1}{n}\sum_{j=1}^n\epsilon_{j,n}\mid\omega}}\notag\\
	%&=\ex{\ex{\frac{1}{n}\sum_{j=1}^n\epsilon_{j,n}\mid\omega}^2}\tag{since $\ex{\epsilon_{i,n}}=0$}\\
	%&=\ex{\ex{\epsilon_{i,n}\mid\omega}^2}\tag{identical dist.}\\
	&=\ex{\ex{\ex{\frac{1}{n}\sum_{j=1}^n\epsilon_{j,n}\mid\omega}\cdot\frac{1}{n}\sum_{j=1}^n\epsilon_{j,n}\mid\omega}}\tag{linearity and since $\ex{\epsilon_{i,n}}=0$}\\
	&=\ex{\ex{\frac{1}{n}\sum_{j=1}^n\epsilon_{j,n}\mid\omega}\cdot\frac{1}{n}\sum_{j=1}^n\epsilon_{j,n}}\tag{LIE}\\
	&=\frac{1}{n^2}\sum_{j=1}^n\ex{\left(\ex{\epsilon_{j,n}\mid\omega}+\sum_{i\neq j}\ex{\epsilon_{i,n}\mid\omega}\right)\cdot\epsilon_{i,n}}\tag{linearity}\\
	&=\frac{1}{n^2}\sum_{j=1}^n\ex{\left(\ex{\epsilon_{j,n}\mid\omega}+\sum_{i\neq j}\ex{v_i\mid\omega}-\sum_{i\neq j}\ex{\ex[i,n]{v_i}\mid\omega}\right)\cdot\epsilon_{j,n}}\tag{defn. of $\epsilon_{i,n}$}\\
	&\leq\frac{1}{n^2}\sum_{j=1}^n\ex{\left(\ex{\epsilon_{j,n}\mid\omega}+\sum_{i\neq j}\ex{v_i\mid\omega}-\sum_{i\neq j}\ex{\hat{v}_{i,n}\mid\omega}\right)\epsilon_{j,n}}
	+\frac{n(n-1)(v_H-v_L)}{n^2}\phi_{1,n}
	\tag{condition \eqref{E15}}\\
	&=\frac{1}{n^2}\sum_{j=1}^n\ex{\left(\ex{\epsilon_{j,n}\mid\omega}-\sum_{i\neq j}\ex{\hat{v}_{i,n}\mid\omega}\right)\cdot\epsilon_{j,n}}
	+\frac{n(n-1)(v_H-v_L)}{n^2}\cdot\phi_{1,n}
	\tag{Claim \ref{Cx5}}\\
	&\leq\frac{(v_H-v_L)^2}{n}
	-\frac{1}{n^2}\sum_{j=1}^n\ex{\epsilon_{j,n}\sum_{i\neq j}\ex{\hat{v}_{i,n}\mid\omega}}
	+\frac{n(n-1)(v_H-v_L)}{n^2}\cdot\phi_{1,n}
	\tag{since $\epsilon_{i,n}\in[v_L,v_H]$}\\
	&=\frac{(v_H-v_L)^2}{n}
	-\frac{n'}{n^2}\sum_{j=1}^n\ex{\epsilon_{j,n}\tilde{v}_{j,n}}
	+\frac{n(n-1)(v_H-v_L)}{n^2}\cdot\phi_{1,n}
	\tag{defn. of $\tilde{v}_{i,n}$ and since $v_{0,n}\perp\epsilon_{j,n}$}\\
	&=\frac{(v_H-v_L)^2}{n}
	+\frac{n'}{n}\sqrt{\frac{(v_H-v_L)^3\bar{c}}{\lambda_n}}
	+\frac{n(n-1)(v_H-v_L)}{n^2}\cdot\phi_{1,n}
	\tag{Claim \ref{Cx4}}\\
	&\leq\frac{(v_H-v_L)^2}{n}
	+\sqrt{\frac{(v_H-v_L)^3\bar{c}}{\lambda_n}}
	+(v_H-v_L)\phi_{1,n}\tag{since $n'\leq n$ and $n-1\leq n$}
\end{align}
Note that the probability $\phi_{2,n}$ event where this may not hold can affect the variance by at most $(v_H-v_L)^2\cdot\phi_{2,n}$.
Finally, the second part of the claim follows from the fact that adding a term of size $O(n^{-1})$ has a vanishing impact on the variance.

\subsection{Proof of Claim \ref{Cx10}}

The proof has three parts.
First, we show that a particular sum of random variables has expected value $\mu_n=0$.
This sum will become relevant later on.
\begin{align}
	\mu_n
	&:=\ex{\frac{1}{n-1}\sum_{j\neq i}\left(v_j-\ex{v_j\mid\omega}-\epsilon_{j,n}\right)}\notag\\
	&=\frac{1}{n-1}\sum_{j\neq i}\left(\ex{v_j}-\ex{\ex{v_j\mid\omega}}\right)\tag{linearity, and since $\ex{\epsilon_{j,n}}=0$}\\
	&=0\tag{LIE}\notag%\\
	%&=0\notag%\label{E13}
\end{align}
Second, we bound the variance $\sigma^2_n$ of this same sum of random variables.
\begin{align}
	\sigma_n^2
	:=\:&\var{\frac{1}{n-1}\sum_{j\neq i}\left(v_j-\ex{v_j\mid\omega}-\epsilon_{j,n}\right)}\notag\\
	=\:&\ex{\left(\frac{1}{n-1}\sum_{j\neq i}\left(v_j-\ex{v_j\mid\omega}-\epsilon_{j,n}\right)\right)^2}\tag{since $\mu_n=0$}\\
	\leq\:&2\ex{\left(\frac{1}{n-1}\sum_{j\neq i}\left(v_j-\ex{v_j\mid\omega}\right)\right)^2}
	+2\ex{\left(\frac{1}{n-1}\sum_{j\neq i}\epsilon_{j,n}\right)^2}
	\tag{Cauchy-Schwarz}\\
	=\:&2\ex{\ex{\left(\frac{1}{n-1}\sum_{j\neq i}\left(v_j-\ex{v_j\mid\omega}\right)\right)^2\mid\omega}}
	+2\ex{\ex{\left(\frac{1}{n-1}\sum_{j\neq i}\epsilon_{j,n}\right)^2\mid\omega}}
	\tag{LIE}
\end{align}
We will consider both of these terms in turn.
The first term is
\begin{align}
	\ex{\left(\frac{1}{n-1}\sum_{j\neq i}\left(v_j-\ex{v_j\mid\omega}\right)\right)^2\mid\omega}
	&=\var{\frac{1}{n-1}\sum_{j\neq i}v_j\mid\omega}\tag{defn. of variance}\\
	&=\frac{1}{n-1}\var{v_j\mid\omega}\tag{cond. indep.}\\
	&\leq\frac{(v_H-v_L)^2}{n-1}\tag{since $v_j\in[v_L,v_H]$}
\end{align}
Now, consider the second term.
\begin{align}
	&\ex{\left(\frac{1}{n-1}\sum_{j\neq i}\epsilon_{j,n}\right)^2\mid\omega}\\
	&=\var{\frac{1}{n-1}\sum_{j\neq i}\epsilon_{j,n}\mid\omega}
	+\left(\ex{\frac{1}{n-1}\sum_{j\neq i}\epsilon_{j,n}\mid\omega}\right)^2
	\tag{defn. of $\var{\cdot}$}\\
	&=\frac{1}{(n-1)^2}\sum_{j\neq i}\var{\epsilon_{j,n}\mid\omega}
	+\left(\ex{\frac{1}{n-1}\sum_{j\neq i}\epsilon_{j,n}\mid\omega}\right)^2
	\tag{cond. indep.}\\
	&\leq\frac{(v_H-v_L)^2}{n-1}
	+\left(\ex{\frac{1}{n-1}\sum_{j\neq i}\epsilon_{j,n}\mid\omega}\right)^2
	\tag{since $\epsilon_{j,n}\in[v_L,v_H]$}
\end{align}
We plug these new expressions for the two terms back into our inequality.
\begin{align}
	\sigma_n^2
	&\leq\frac{4(v_H-v_L)^2}{n-1}+2\ex{\left(\ex{\frac{1}{n-1}\sum_{j\neq i}\epsilon_{j,n}\mid\omega}\right)^2}\label{E14}\\
	&=\frac{4(v_H-v_L)^2}{n-1}+2\var{\ex{\frac{1}{n-1}\sum_{j\neq i}\epsilon_{j,n}\mid\omega}}+2\left(\ex{\ex{\frac{1}{n-1}\sum_{j\neq i}\epsilon_{j,n}\mid\omega}}\right)^2\tag{defn. of variance}\\
	&=\frac{4(v_H-v_L)^2}{n-1}+2\var{\ex{\epsilon_{j,n}\mid\omega}}\tag{LIE and since $\ex{\epsilon_{j,n}}=0$}\\
	&\leq\frac{4(v_H-v_L)^2}{n-1}+o(1)\tag{Claim \ref{Cx6}}
\end{align}
Third and finally, we show that $\tilde{v}_{i,n}$ concentrates around $\ex{v_j\mid\omega}$.
Given $t>0$,
\begin{align}
	&\pr{\left|\tilde{v}_{i,n}-\ex{v_j\mid\omega}\right|\geq t}
	\notag\\
	&=\pr{\left|\frac{1}{n'}\left(v_{0,n}+\sum_{j\neq i}\hat{v}_{j,n}\right)-\ex{v_j\mid\omega}\right|\geq t}
	\tag{defn. of $\tilde{v}_{i,n}$}\\
	&\leq\pr{\left|\frac{v_{0,n}}{n}\right|+\left|\frac{n'-n+1}{n'(n-1)}\left(\sum_{j\neq i}\hat{v}_{j,n}\right)\right|+\left|\frac{1}{n-1}\left(\sum_{j\neq i}\hat{v}_{j,n}\right)-\ex{v_j\mid\omega}\right|\geq t}
	\tag{triangle inequality}\\
	&\leq
	\pr{\left|\frac{v_{0,n}}{n}\right|\geq \frac{t}{3}}
	+\pr{\left|\frac{n'-n+1}{n'(n-1)}\left(\sum_{j\neq i}\hat{v}_{j,n}\right)\right|\geq \frac{t}{3}}
	\label{E26}\\
	&\white{\leq}+\pr{\left|\frac{1}{n-1}\left(\sum_{j\neq i}\hat{v}_{j,n}\right)-\ex{v_j\mid\omega}\right|\geq \frac{t}{3}}
	\tag{union bound}
\end{align}
The first two terms of line \eqref{E26} are vanishing, since $\beta_n=o(n)$ and $n'=O(n)$.
The third term is vanishing as well, since
\begin{align}
	&\pr{\left|\frac{1}{n-1}\sum_{j\neq i}\left(\hat{v}_j-\ex{v_j\mid\omega}\right)\right|\geq \frac{t}{3}}\notag\\
	&\leq\delta_n+\phi_{1,n}+\phi_{2,n}+\pr{\left|\frac{1}{n-1}\sum_{j\neq i}\left(\ex[i,n]{v_j}-\ex{v_j\mid\omega}\right)\right|\geq \frac{t}{3}}\tag{defn. $\delta_n$, $\phi_{1,n}$, $\phi_{2,n}$}\\
	&=\delta_n+\phi_{1,n}+\phi_{2,n}+\pr{\left|\frac{1}{n-1}\sum_{j\neq i}\left(v_j-\epsilon_{j,n}-\ex{v_j\mid\omega}\right)\right|\geq \frac{t}{3}}\tag{definition of $\epsilon_{i,n}$}\\
	&=\delta_n+\phi_{1,n}+\phi_{2,n}+\pr{\left|\frac{1}{n-1}\sum_{j\neq i}\left(v_j-\epsilon_{j,n}-\ex{v_j\mid\omega}\right)-\mu_n\right|\geq \frac{t}{3}}\tag{since $\mu_n=0$}\\
	&\leq\delta_n+\phi_{1,n}+\phi_{2,n}+\frac{9\sigma_n^2}{t^2}\tag{Chebyshev's inequality}\\
	&\leq\delta_n+\phi_{1,n}+\phi_{2,n}+\frac{9}{t^2}\left(\frac{4(v_H-v_L)^2}{n-1}+o(1)\right)\tag{inequality \eqref{E14}}
\end{align}
Therefore, $\tilde{v}_{i,n}$ converges in probability to $\ex{v_j\mid\omega}$.

	\subsection{Proof of Claim \ref{Cx11}}
	
	Let $Z_n$ be a sequence of random variables where $Z_n\to_pZ$.
To find $\lim_{n\to\infty}f(Z_n,\cdot)$, we first consider the limits of two random variables that appear in the definition of $f$.
\begin{enumerate}
	\item The random variable $\pr[i]{Z_n\leq z}$ converges in probability to $\pr[i]{Z\leq z}$ for almost all $z$.
	To see this, fix any constant $t>0$ and define the probability
	\[
	p^+_n:=\pr{\pr[i]{Z\leq z}-\pr[i]{Z_n\leq z}\geq t}
	\]
	I begin by showing that $p^+_n\to0$.
	Observe that
	\begin{align}
		\pr{Z\leq z}-\pr{Z_n\leq z}
		&=\ex{\pr[i]{Z\leq z}-\pr[i]{Z_n\leq z}}\tag{LIE}\\
		&\geq p^+_n\cdot t+(1-p^+_n)\cdot O(t)\tag{LIE and defn. of $p^+_n$}
	\end{align}
	However, since $Z_n\to_pZ$, we know that $Z_n\to_dZ$.
	By the Portmanteau theorem,
	$\pr{Z_n\leq z}\to\pr{Z\leq z}$ for almost all $z\in\reals$.\footnote{The distribution of $Z$ can have at most finitely many atoms. As long as $z$ is not one of those atoms, $\{z'\in\reals\mid z'\leq z\}$ is a continuity set for that distribution, and we can apply the Portmanteau theorem.}
    Combining this with the previous inequality implies
	$
	p^+_n\cdot t+(1-p^+_n)\cdot O(t)\to0
	$.
	Since $t$ does not depend on $n$, this can only be true when $p^+_n\to0$.
	By symmetry, we can apply this argument again to argue that
	$
	p_n^-:=\pr{\pr[i]{Z_n\leq z}-\pr[i]{Z\leq z}\geq t}\to0
	$.
	By the definition of convergence in probability, the fact that $p_n^+\to0$ and $p_n^-\to0$ implies that $\pr[i]{Z_n\leq z}\to_p\pr[i]{Z\leq z}$.
	
	\item The random variable $\textbf{1}(Z_n\leq z)$ converges in probability to $\textbf{1}(Z\leq z)$ for almost all $z$.
	This follows from the continuous mapping theorem.
	The distribution of $Z$ can have at most finitely many atoms and the indicator function has only one discontinuity point, at $z$.
	Provided that $z$ is not one of those atoms, the probability that $Z$ matches a discontinuity point is zero.
	Therefore, we can invoke the continuous mapping theorem for all but finitely many values of $z$.
\end{enumerate}
It follows from these two results and the continuous mapping theorem that
\[
\left(\pr[i]{Z_n\leq z}-\textbf{1}(Z_n\leq z)\right)^2\to_p\left(\pr[i]{Z\leq z}-\textbf{1}(Z\leq z)\right)^2
\]
for almost all $z\in\reals$.
Since convergence in probability implies convergence in distribution, we can apply the Portmanteau theorem to show that
$
f(Z_n,\cdot)\to f(Z,\cdot)
$
almost everywhere.
Set $Z_n=\tilde{v}_{i,n}$ and $Z=\ex{v_j\mid\omega}$, where $Z_n\to_pZ$ by Claim \ref{Cx10}.

	\subsection{Proof of Claims \ref{T3-Cl1} and \ref{T3-Cl5}}

For convenience, let $\mathcal{V}$ refer to the (finite) set of values in the support of $F$.
We want to bound the following term:
\begin{align}
	&\left|\ex[i,n]{v_i\mid Q_{i,n}}-\ex[i,n]{v_i}\right|\label{T3-E7}\\
	&=\left|\sum_{y\in\mathcal{V}}y\cdot\pr[i,n]{v_i=y\mid Q_{i,n}}-\sum_{y\in\mathcal{V}}y\cdot\pr[i,n]{v_i=y}\right|
	\tag{LIE}\\
	&\leq\sum_{y\in\mathcal{V}}y\cdot\left|\pr[i,n]{v_i=y\mid Q_{i,n}}-\pr[i,n]{v_i=y}\right|
	\tag{triangle inequality}\\
	&=\sum_{y\in\mathcal{V}}y\cdot\left|\frac{\pr[i,n]{ Q_{i,n}\mid v_i=y}\cdot\pr[i,n]{v_i=y}}{q_{i,n}}-\pr[i,n]{v_i=y}\right|
	\tag{Bayes' rule}\\
	&=\frac{1}{q_{i,n}}\sum_{y\in\mathcal{V}}y\cdot\pr[i,n]{v_i=y}\cdot\left|\pr[i,n]{ Q_{i,n}\mid v_i=y}-q_{i,n}\right|\notag\\
	&\leq\frac{\sqrt{|\mathcal{V}|}}{q_{i,n}}\sqrt{\sum_{y\in\mathcal{V}}\left(y\cdot\pr[i,n]{v_i=y}\cdot\left|\pr[i,n]{ Q_{i,n}\mid v_i=y}-q_{i,n}\right|\right)^2}
	\tag{since $\Vert\alpha\Vert_1\leq\sqrt{d}\cdot\Vert\alpha\Vert_2$ for $\alpha\in\reals^d$}\\
	&\leq\frac{\sqrt{|\mathcal{V}|}}{q_{i,n}}\sqrt{\sum_{y\in\mathcal{V}}\pr[i,n]{v_i=y}\cdot\left(y\cdot\left|\pr[i,n]{ Q_{i,n}\mid v_i=y}-q_{i,n}\right|\right)^2}
	\tag{since $\pr{\cdot}^2\leq\pr{\cdot}$}\\
	&\leq\frac{\sqrt{|\mathcal{V}|\cdot|v_H-v_L|}}{q_{i,n}}\sqrt{\sum_{y\in\mathcal{V}}\pr[i,n]{v_i=y}\cdot\left(\pr[i,n]{ Q_{i,n}\mid v_i=y}-q_{i,n}\right)^2}
	\tag{$y\leq|v_H-v_L|$}\\
	&=\frac{\sqrt{|\mathcal{V}|\cdot|v_H-v_L|}}{q_{i,n}}\sqrt{\ex[i,n]{\left(\pr[i,n]{ Q_{i,n}\mid v_i}-q_{i,n}\right)^2}}
	\tag{LIE}
\end{align}
Focus on the term inside the square root.
\begin{align}
	&\ex[i,n]{\left(\pr[i,n]{ Q_{i,n}\mid v_i}-q_{i,n}\right)^2}\notag\\
	&\leq\ex[i,n]{\left(\pr[i,n]{ Q_{i,n}\mid v_i}-q_{i,n}\right)^2
		+\left(\pr[i,n]{\tilde{v}_{i,n}\notin Q_n\mid v_i}-(1-q_{i,n})\right)^2}
	\tag{since $(\cdot)^2\geq 0$}\\
	&=\ex[i,n]{
		\pr[i,n]{ Q_{i,n}\mid v_i}^2
		+\pr[i,n]{\tilde{v}_{i,n}\notin Q_n\mid v_i}^2}
	\notag\\
	&\white{=}+q_{i,n}^2+(1-q_{i,n})^2
	-2(1-q_{i,n})\cdot\pr[i,n]{\tilde{v}_{i,n}\notin Q_n}
	-2q_{i,n}\cdot\pr[i,n]{ Q_{i,n}}
	\tag{LIE}\\
	&=\ex[i,n]{
		\pr[i,n]{ Q_{i,n}\mid v_i}^2
		+\pr[i,n]{\tilde{v}_{i,n}\notin Q_n\mid v_i}^2}
	-q_{i,n}^2-(1-q_{i,n})^2\tag{defn. of $q_{i,n}$}\\
	&=\ex[i,n]{\max_{b_i}\ex[i,n]{\textnormal{SR}^{\textnormal{Q}}(b_i,\tilde{v}_i)\mid v_i}}
	-\max_{b_i}\ex[i,n]{\textnormal{SR}^{\textnormal{Q}}(b_i,\tilde{v}_i)}\label{T3-E8}
\end{align}
The last equality follows from Example 1 of \textcite{GR07}.
Next,
\begin{align}
	\eqref{T3-E8}
	\leq&\:
	\eqref{T3-E8}+\ex[i,n]{\max_{b_i}\ex[i,n]{\left(\textnormal{SR}(b_i,\tilde{v}_i)-\textnormal{SR}^Q(b_i,\tilde{v}_i)\right)\mid v_i}}\notag\\
	&-\max_{b_i}\ex[i,n]{\textnormal{SR}(b_i,\tilde{v}_i)-\textnormal{SR}^Q(b_i,\tilde{v}_i)}\tag{$\ex{\max(\cdot)}\geq\max\ex{\cdot}$}\\
	=&\:\ex[i,n]{\max_{b_i}\ex[i,n]{\textnormal{SR}(b_i,\tilde{v}_i)\mid v_i}}
	-\max_{b_i}\ex[i,n]{\textnormal{SR}(b_i,\tilde{v}_i)}\label{T3-E9}
\end{align}
The last equality follows from the fact that $\textnormal{SR}-\textnormal{SR}^Q$ happens to be a proper scoring rule.
Recall that the agent can learn $v_i$ by acquiring a revealing signal  in addition to her current signal $s_i$.
The cost of this is no greater than $\bar{c}$.
In equilibrium, therefore,
\begin{equation}
	\lambda_n\cdot\ex{\eqref{T3-E9}}\leq\bar{c}\label{T3-E11}
\end{equation}
Combining all of the preceding arguments, we obtain
\begin{align}
	\ex{\eqref{T3-E7}}^2
	&\leq\ex{\eqref{T3-E7}^2}\tag{Jensen's inequality}\\
	&\leq\ex{\frac{|\mathcal{V}|\cdot|v_H-v_L|}{q_{i,n}^2}\cdot\ex[i,n]{\left(\pr[i,n]{ Q_{i,n}\mid v_i}-q_{i,n}\right)^2}}\tag{inequality \eqref{T3-E7}}\\
	&\leq\frac{|\mathcal{V}|\cdot|v_H-v_L|}{(\delta_n/n)^2}\cdot\ex{\ex[i,n]{\left(\pr[i,n]{ Q_{i,n}\mid v_i}-q_{i,n}\right)^2}}
	\tag{since $q_{i,n}\geq\delta_n/n$}\\
	&\leq\frac{|\mathcal{V}|\cdot|v_H-v_L|}{(\delta_n/n)^2}\cdot\ex{\eqref{T3-E8}}
	\tag{inequality \eqref{T3-E8}}\\
	&\leq\frac{|\mathcal{V}|\cdot|v_H-v_L|\cdot\bar{c}}{(\delta_n/n)^2\cdot\lambda_n}
	\tag{inequality \eqref{T3-E11}}
\end{align}
Combining this with Markov's inequality gives us
\[
\pr{\eqref{T3-E7}\geq t}\leq\frac{\ex{\eqref{T3-E7}}}{t}
\leq\frac{1}{t}\cdot\sqrt{\frac{|\mathcal{V}|\cdot|v_H-v_L|\cdot\bar{c}}{(\delta_n/n)^2\cdot\lambda_n}}
\]
	
	\subsection{Proof of Claim \ref{T3-Cl7}}

We rely on the fact that, for any constant $|w|\leq 1$,
\begin{equation}\label{T3-E23}
e^w\in\left[1+w-\frac{e\cdot w^2}{2},1+w+\frac{e\cdot w^2}{2}\right]
\end{equation}
There are three cases to consider.
\begin{enumerate}
	\item Suppose $a\geq 0$. Then
	\begin{align}
		G(b)-G(a)
		&=\frac{1}{2}\left(e^{-\frac{\left|a\right|}{\beta_n}}-e^{-\frac{\left|b\right|}{\beta_n}}\right)\tag{since $v_{0,n}\sim\textsc{Laplace}(\beta_n)$}\\
		&=\frac{1}{2}\left(e^{-\frac{\left|b\right|}{\beta_n}}\right)\left(e^{-\frac{\left|a\right|}{\beta_n}+\frac{\left|b\right|}{\beta_n}}-1\right)\notag\\
		&=\frac{1}{2}\left(e^{-\frac{\left|b\right|}{\beta_n}}\right)\left(e^{\frac{b-a}{\beta_n}}-1\right)\tag{since $b\geq a\geq 0$}\\
		&\in\frac{1}{2}\left(e^{-\frac{\left|b\right|}{\beta_n}}\right)\left(\frac{b-a}{\beta_n}+\frac{e}{2}\left[-\left(\frac{b-a}{\beta_n}\right)^2,\left(\frac{b-a}{\beta_n}\right)^2\right]\right)\tag{bound \eqref{T3-E23}}
	\end{align}
	
	\item Suppose $b< 0$. This is analogous to the previous case. 
	
	\item Suppose $b\geq 0$, $a<0$.
	Note that $-\beta_n\leq a-b\leq b+a\leq b<b-a\leq\beta_n$.
	Then
	\begin{align}
		&G(b)-G(a)\notag\\
		&=1-\frac{1}{2}\left(e^{-\frac{\left|b\right|}{\beta_n}}+e^{-\frac{\left|a\right|}{\beta_n}}\right)\tag{since $v_{0,n}\sim\textsc{Laplace}(\beta_n)$}\\
		&=\frac{1}{2}\left(2e^0-e^{-\frac{\left|b\right|}{\beta_n}}-e^{-\frac{\left|a\right|}{\beta_n}}\right)\notag\\
		&=\frac{1}{2}\left(e^{-\frac{\left|b\right|}{\beta_n}}\right)
		\left(2e^{\frac{\left|b\right|}{\beta_n}}-1-e^{\frac{\left|b\right|}{\beta_n}-\frac{\left|a\right|}{\beta_n}}\right)\notag\\
		&=\frac{1}{2}\left(e^{-\frac{\left|b\right|}{\beta_n}}\right)
		\left(2e^{\frac{b}{\beta_n}}-1-e^{\frac{b+a}{\beta_n}}\right)\tag{since $a<0\leq b$}\\
		&\in
		\frac{1}{2}\left(e^{-\frac{\left|b\right|}{\beta_n}}\right)\left(\frac{b-a}{\beta_n}
		+\frac{e}{2}\left[
		-\left(\frac{b}{\beta_n}\right)^2-\left(\frac{b+a}{\beta_n}\right)^2,
		\left(\frac{b}{\beta_n}\right)^2+\left(\frac{b+a}{\beta_n}\right)^2
		\right]\right)\tag{bound \eqref{T3-E23}}\\
		&\subseteq
		\frac{1}{2}\left(e^{-\frac{\left|b\right|}{\beta_n}}\right)\left(\frac{b-a}{\beta_n}
		+e\left[
		-\left(\frac{b-a}{\beta_n}\right)^2,
		\left(\frac{b-a}{\beta_n}\right)^2
		\right]\right)\tag{since $a<0\leq b$}
	\end{align}
\end{enumerate}
The bound in case 3 is the same as the bound in the statement of the claim, while the bounds in cases 1 and 2 are slightly tighter.

	\subsection{Proof of Claim \ref{T3-Cl6}}

For event $E\subseteq\reals$, let
$
F^{z,z'}(E)=\pr[i,n]{\sum_{j\neq i}\hat{v}_{j,n}\in E\mid z\leq-v_{0,n}-\sum_{j\neq i}\hat{v}_{j,n}\leq z'}
$.
Then
\begin{align}
	&\ex[i,n]{v_i\mid z\leq-n\tilde{v}_{i,n}\leq z'}\label{T3-E12}\\
	&=\ex[i,n]{v_i\mid z\leq-v_{0,n}-\sum_{j\neq i}\hat{v}_{j,n}\leq z'}\tag{defn. of $\tilde{v}_i$}\\
	&=\int_{\reals}\ex[i,n]{v_i\mid z\leq-v_{0,n}-\sum_{j\neq i}\hat{v}_{j,n}\leq z',\sum_{j\neq i}\hat{v}_{j,n}=y}\cdot F^{z,z'}(dy)\tag{LIE}\\
	&=\int_{\reals}\ex[i,n]{v_i\mid\sum_{j\neq i}\hat{v}_{j,n}=y}\cdot F^{z,z'}(dy)\tag{since $v_i\perp v_{0,n}$}
\end{align}
Since the distribution $F^{z,z'}$ is not necessarily discrete or continuous, we need to refer to the definition of Lebesgue integration.

\begin{defn}
	A function $h:\reals\to\reals$ is \emph{simple} if there is a sequence $V^h_1,\ldots,V^h_{K_h}\subseteq\mathcal{V}$ of disjoint $F^{z,z'}$-measurable sets and $\alpha^h_1,\ldots,\alpha^h_K\in\reals_+$ such that
	$
	h(y)=\sum_{k=1}^{K_h}\alpha_k\textbf{1}(y\in V_k)
	$.
\end{defn}

Let $\mathcal{H}$ be the set of simple functions.
By definition of Lebesgue integration,
\begin{align}
	\eqref{T3-E12}
	&=\sup_{h\in\mathcal{H}}\sum_{k=1}^{K_h}\alpha^h_k\cdot F^{z,z'}\left(V^h_k\right)-\sup_{h'\in\mathcal{H}}\sum_{k=1}^{K_{h'}}\alpha^{h'}_k\cdot F^{z,z'}\left(V^{h'}_k\right)\label{T3-E18}\\
	&
	\text{s.t.}\quad h(y)\leq\left(\ex[i,n]{v_i\mid\sum_{j\neq i}\hat{v}_{j,n}=y}\right)^+
	\text{and}\quad h'(y)\leq-\left(\ex[i,n]{v_i\mid\sum_{j\neq i}\hat{v}_{j,n}=y}\right)^-\notag
\end{align}
Consider a sequence of simple functions $(h_l)_{l=1}^\infty$.
For convenience, let $K_l=K_{h_l}$ and $V_k^l=V_k^{h_l}$.
Let
$
\Delta^k_l:=\sup V^l_k-\inf V^l_k$ and $\Delta_l=\max_{k=1,\ldots,K_l}\Delta^k_l
$.
We eventually let $h_l$ converge to one of the suprema in equation \eqref{T3-E18}, so it is without loss of generality to restrict attention to sequences that satisfy $\Delta_l\to0$.

Applying Bayes' rule, we find that
\begin{align}
	&F^{z,z'}(V_k^l)
	=\frac{\pr[i,n]{z\leq-v_{0,n}-\sum_{j\neq i}\hat{v}_{j,n}\leq z'\mid\sum_{j\neq i}\hat{v}_{j,n}\in V_k^l}
		\pr[i,n]{\sum_{j\neq i}\hat{v}_{j,n}\in V_k^l}}
	{\pr[i,n]{z\leq-v_{0,n}-\sum_{j\neq i}\hat{v}_{j,n}\leq z'}}\notag\\
	&=\frac{\pr[i,n]{z\leq-v_{0,n}-\sum_{j\neq i}\hat{v}_{j,n}\leq z'\mid\sum_{j\neq i}\hat{v}_{j,n}\in V_k^l}
		\pr[i,n]{\sum_{j\neq i}\hat{v}_{j,n}\in V_k^l}}
	{\sum_{k'=1}^{K^l}\pr[i,n]{z\leq-v_{0,n}-\sum_{j\neq i}\hat{v}_{j,n}\leq z'\mid\sum_{j\neq i}\hat{v}_{j,n}\in V^l_{k'}}
		\pr[i,n]{\sum_{j\neq i}\hat{v}_{j,n}\in V^l_{k'}}}\label{T3-E15}
\end{align}
where the second line follows from the law of iterated expectations.
We focus on the conditional probabilites in \eqref{T3-E15}.
Recall that $G_n$ is the cumulative distribution function of $v_{0,n}$.
Let $l$ be large enough that $\Delta^l<z'-z$.
Then
\begin{align}
	&\pr[i,n]{z\leq-v_{0,n}-\sum_{j\neq i}\hat{v}_{j,n}\leq z'\mid\sum_{j\neq i}\hat{v}_{j,n}\in V_k^l}\label{T3-E10}\\
	&\leq\pr[i,n]{z+\inf V_k^l\leq-v_{0,n}\leq z'+\sup V_k^l\mid\sum_{j\neq i}\hat{v}_{j,n}\in V_k^l}\notag\\
	&=\pr{z+\inf V_k^l\leq-v_{0,n}\leq z'+\sup V_k^l}\tag{independence of $v_{0,n}$}\\
	&=\pr{-z'-\sup V_k^l\leq v_{0,n}\leq-z-\inf V_k^l}\notag\\
	&=\pr{v_{0,n}\leq-z-\inf V_k^l}-\pr{-z'-\sup V_k^l\leq v_{0,n}}\notag\\
	&=G_n\left(-z-\inf V_k^l\right)-G_n\left(-z'-\sup V_k^l\right)\tag{defn. of $G_n$}\\
	&\leq\frac{1}{2}\exp\left(-\frac{\left|-z-\inf V_k^l\right|}{\beta_n}\right)\left(\frac{z'-z+\Delta^k_l}{\beta_n}+e\left(\frac{z'-z+\Delta^k_l}{\beta_n}\right)^2\right)
	\tag{by Claim \ref{T3-Cl7}}\\
	&\leq\frac{1}{2}\exp\left(-\frac{\left|-z-\inf V_k^l\right|}{\beta_n}\right)\left(\frac{z'-z+\Delta_l}{\beta_n}+e\left(\frac{z'-z+\Delta_l}{\beta_n}\right)^2\right)
	\tag{defn. of $\Delta_l$}
\end{align}
We can also obtain a lower bound as follows:
\begin{align}
	\eqref{T3-E10}&\geq\pr[i,n]{z+\sup V^l_k\leq-v_{0,n}\leq z'+\inf V^l_k\mid\sum_{j\neq i}\hat{v}_{j,n}\in V_k}\notag\\
	&=\pr{z+\sup V^l_k\leq-v_{0,n}\leq z'+\inf V^l_k}\tag{independence of $v_{0,n}$}\\
	&\geq G_n\left(-z-\sup V^l_k\right)-G_n\left(-z'-\inf V^l_k\right)\tag{defn. of $G_n$}\\
	&\geq\frac{1}{2}\exp\left(-\frac{\left|-z-\sup V_k^l\right|}{\beta_n}\right)
	\left(\frac{z'-z-\Delta^k_l}{\beta_n}-e\left(\frac{z'-z-\Delta^k_l}{\beta_n}\right)^2\right)
	\tag{by Claim \ref{T3-Cl7}}\\
	&\geq\frac{1}{2}\exp\left(-\frac{\left|-z-\sup V_k^l\right|}{\beta_n}\right)
	\left(\frac{z'-z-\Delta_l}{\beta_n}-e\left(\frac{z'-z-\Delta_l}{\beta_n}\right)^2\right)
	\tag{defn. of $\Delta_l$ and since $\beta_n\geq e(v_H-v_L)$}
\end{align}
Together with equation \eqref{T3-E15}, these bounds imply that
\begin{align}
	&\frac{\exp\left(-\frac{\left|-z-\sup V_k^l\right|}{\beta_n}\right)
		\left(\frac{z'-z-\Delta_l}{\beta_n}-e\left(\frac{z'-z-\Delta_l}{\beta_n}\right)^2\right)
		\pr[i,n]{\sum_{j\neq i}\hat{v}_{j,n}\in V_k^l}}
	{\sum_{k'=1}^{K_l}\exp\left(-\frac{\left|-z-\inf V_{k'}^l\right|}{\beta_n}\right)\left(\frac{z'-z+\Delta_l}{\beta_n}+e\left(\frac{z'-z+\Delta_l}{\beta_n}\right)^2\right)
		\pr[i,n]{\sum_{j\neq i}\hat{v}_{j,n}\in V^l_{k'}}}\label{T3-E13}\\
	&\leq\eqref{T3-E15}\notag\\
	&\leq
	\frac{\exp\left(-\frac{\left|-z-\inf V_k^l\right|}{\beta_n}\right)\left(\frac{z'-z+\Delta_l}{\beta_n}+e\left(\frac{z'-z+\Delta_l}{\beta_n}\right)^2\right)
		\pr[i,n]{\sum_{j\neq i}\hat{v}_{j,n}\in V_k^l}}
	{\sum_{k'=1}^{K_l}\exp\left(-\frac{\left|-z-\max V_{k'}^l\right|}{\beta_n}\right)
		\left(\frac{z'-z-\Delta_l}{\beta_n}-e\left(\frac{z'-z-\Delta_l}{\beta_n}\right)^2\right)
		\pr[i,n]{\sum_{j\neq i}\hat{v}_{j,n}\in V^l_{k'}}}\notag
\end{align}
To make our notation more compact, define
\[
\eta^+_{n,l}(z'-z)=\frac{\frac{z'-z+\Delta_l}{\beta_n}+e\left(\frac{z'-z+\Delta_l}{\beta_n}\right)^2}{\frac{z'-z-\Delta_l}{\beta_n}-e\left(\frac{z'-z-\Delta_l}{\beta_n}\right)^2}
\quad\text{and}\quad\eta^-_{n,l}(z'-z)=\frac{\frac{z'-z-\Delta_l}{\beta_n}-e\left(\frac{z'-z-\Delta_l}{\beta_n}\right)^2}{\frac{z'-z+\Delta_l}{\beta_n}+e\left(\frac{z'-z+\Delta_l}{\beta_n}\right)^2}
\]
and their respective limits as $l\to\infty$,
\[
\eta^+_n(z'-z)=\frac{\beta_n+e(z'-z)}{\beta_n-e(z'-z)}
\quad\text{and}\quad
\eta^-_n(z'-z)=\frac{\beta_n-e(z'-z)}{\beta_n+e(z'-z)}
\]
Similarly, define
\[
\xi_{i,n,l}(k)=\frac{\exp\left(-\frac{\left|-z-\inf V_k^l\right|}{\beta_n}\right)
	\pr[i,n]{\sum_{j\neq i}\hat{v}_{j,n}\in V_k^l}}
{\sum_{k'=1}^{K_l}\exp\left(-\frac{\left|-z-\max V_{k'}^l\right|}{\beta_n}\right)
	\pr[i,n]{\sum_{j\neq i}\hat{v}_{j,n}\in V^l_{k'}}}
\]
With this notation, inequality \eqref{T3-E13} becomes
\begin{equation}
	\eta^-_{n,l}(z'-z)
	\cdot\xi_{i,n,l}(k)\leq F^{z,z'}(V_k^l)\leq\eta^+_{n,l}(z'-z)\cdot\xi_{i,n,l}(k)\label{T3-E16}
\end{equation}
It follows that
\begin{align}
	&\limsup_{l\to\infty}\left(\sum_{k=1,\ldots,K_l}\alpha_k^l\cdot F^{v_L,v_H}(V^l_k)-\sum_{k=1,\ldots,K_l}\alpha_k^l\cdot F^{z,z'}(V^l_k)\right)\label{T3-E17}\\
	&=\limsup_{l\to\infty}\sum_{k=1,\ldots,K_l}\alpha_k^l\cdot F^{z,z'}(V^l_k)\left(\frac{F^{v_L,v_H}(V^l_k)}{F^{z,z'}(V^l_k)}-1\right)\notag\\
	&\leq\limsup_{l\to\infty}\sum_{k=1,\ldots,K_l}\alpha_k^l\cdot F^{z,z'}(V^l_k)\left(\frac{\eta^+_{n,l}(v_H-v_L)}{\eta^-_{n,l}(z'-z)}-1\right)\tag{inequality \eqref{T3-E16}}\\
	&=\left(\lim_{l\to\infty}\frac{\eta^+_{n,l}(v_H-v_L)}{\eta^-_{n,l}(z'-z)}-1\right)\left(\limsup_{l\to\infty}\sum_{k=1,\ldots,K_l}\alpha_k^l\cdot F^{z,z'}(V^l_k)\right)\tag{property of $\limsup$}\\
	&=\left(\frac{\eta^+_n(v_H-v_L)}{\eta^-_n(z'-z)}-1\right)\left(\limsup_{l\to\infty}\sum_{k=1,\ldots,K_l}\alpha_k^l\cdot F^{z,z'}(V^l_k)\right)\tag{property of $\lim$}\\
	&\leq\left(\frac{\eta^+_n(v_H-v_L)}{\eta^-_n(v_H-v_L)}-1\right)\left(\limsup_{l\to\infty}\sum_{k=1,\ldots,K_l}\alpha_k^l\cdot F^{z,z'}(V^l_k)\right)\tag{since $\eta^-_n(\cdot)$ decreasing}
\end{align}
Following similar reasoning, we find that
\begin{equation}
	\eqref{T3-E17}\geq\left(\frac{\eta^-_n(v_H-v_L)}{\eta^+_n(v_H-v_L)}-1\right)\left(\limsup_{l\to\infty}\sum_{k=1,\ldots,K_l}\alpha_k^l\cdot F^{z,z'}(V^l_k)\right)\label{T3-E19}
\end{equation}
Finally, we can return to the Lebesgue integral defined in equation \eqref{T3-E18}.
Let $h_l$ be the sequence of simple functions that satisfies
\[
\lim_{l\to\infty}\sum_{k=1}^{K_{h_l}}\alpha^{h_l}_k\cdot F^{z,z'}\left(V^{h_l}_k\right)=\sup_{h\in\mathcal{H}}\sum_{k=1}^{K_h}\alpha^h_k\cdot F^{z,z'}
\]
Let $h'_l$ be the sequence of simple functions that satisfies
\[
\sup_{h'\in\mathcal{H}}\sum_{k=1}^{K_{h'}}\alpha^{h'}_k\cdot F^{v_L,v_H}\left(V^{h'}_k\right)=\lim_{l\to\infty}\sum_{k=1}^{K_{h'_l}}\alpha^{h'_l}_k\cdot F^{v_L,v_H}\left(V^{h'_l}_k\right)
\]
Next, observe that
\begin{align}
	&\ex[i,n]{v_i\mid v_L\leq -n\tilde{v}_{i,n}\leq v_H}
	-\ex[i,n]{v_i\mid z\leq -n\tilde{v}_{i,n}\leq z'}\label{T3-E21}\\
	&=\sup_{h\in\mathcal{H}}\sum_{k=1}^{K_h}\alpha^h_k\cdot F^{v_L,v_H}\left(V^h_k\right)-\sup_{h\in\mathcal{H}}\sum_{k=1}^{K_h}\alpha^h_k\cdot F^{z,z'}\left(V^h_k\right)\notag\\
	&\white{=}-\sup_{h'\in\mathcal{H}}\sum_{k=1}^{K_{h'}}\alpha^{h'}_k\cdot F^{v_L,v_H}\left(V^{h'}_k\right)+\sup_{h'\in\mathcal{H}}\sum_{k=1}^{K_{h'}}\alpha^{h'}_k\cdot F^{z,z'}\left(V^{h'}_k\right)\tag{equation \eqref{T3-E18}}\\
	&\geq\limsup_{l\to\infty}\sum_{k=1}^{K_{h_l}}\alpha^{h_l}_k\cdot F^{v_L,v_H}\left(V^{h_l}_k\right)-\lim_{l\to\infty}\sum_{k=1}^{K_{h_l}}\alpha^{h_l}_k\cdot F^{z,z'}\left(V^{h_l}_k\right)\notag\\
	&\white{=}-\lim_{l\to\infty}\sum_{k=1}^{K_{h'_l}}\alpha^{h'_l}_k\cdot F^{v_L,v_H}\left(V^{h'_l}_k\right)+\limsup_{l\to\infty}\sum_{k=1}^{K_{h'_l}}\alpha^{h'_l}_k\cdot F^{z,z'}\left(V^{h'_l}_k\right)
	\tag{defn. of $\sup$ and $h_l,h'_l$}\\
	&\geq\left(\frac{\eta^-_n(v_H-v_L)}{\eta^+_n(v_H-v_L)}-\frac{\eta^+_n(v_H-v_L)}{\eta^-_n(v_H-v_L)}\right)(v_H-v_L)
    %\left(\frac{\eta^-_n(v_H-v_L)}{\eta^+_n(v_H-v_L)}-1\right)(v_H-v_L)
	%-\left(\frac{\eta^+_n(v_H-v_L)}{\eta^-_n(v_H-v_L)}-1\right)(v_H-v_L)
	\tag{inequalities \eqref{T3-E17} and \eqref{T3-E19}}%\\
	%&\geq\left(\frac{\eta^-_n(v_H-v_L)}{\eta^+_n(v_H-v_L)}-\frac{\eta^+_n(v_H-v_L)}{\eta^-_n(v_H-v_L)}\right)(v_H-v_L)
	%\notag
\end{align}
Following similar reasoning, we find that
\begin{align}
	\eqref{T3-E21}\leq\left(\frac{\eta^+_n(v_H-v_L)}{\eta^-_n(v_H-v_L)}-\frac{\eta^-_n(v_H-v_L)}{\eta^+_n(v_H-v_L)}\right)(v_H-v_L)\label{T3-E22}
\end{align}
Combine inequalities \eqref{T3-E21} and \eqref{T3-E22}, and then simplify, to obtain
\begin{align*}
&\left|\ex[i,n]{v_i\mid z\leq-n\tilde{v}_{i,n}\leq v_H}-\ex[i,n]{v_i\mid z\leq-n\tilde{v}_{i,n}\leq z'}\right|\\
&\leq\left(\frac{\beta_n+e(v_H-v_L)}{\beta_n-e(v_H-v_L)}\right)^2-\left(\frac{\beta_n-e(v_H-v_L)}{\beta_n+e(v_H-v_L)}\right)^2
\end{align*}

	\subsection{Proof of Claim \ref{T3-Cl9}}

Let $z,z'$ be constants where $v_L\leq z<z'\leq v_H$.
We begin by bounding the following:
\begin{align}
	&\pr[i,n]{D=0\mid z\leq\tilde{p}_{i,n}\leq z'}\label{T3-E32}\\
	&=\frac{\pr[i,n]{z\leq\tilde{p}_{i,n}\leq z'\mid D=0}\cdot\pr[i,n]{D=0}}{\pr[i,n]{z\leq\tilde{p}_{i,n}\leq z'}}
	\tag{Bayes' rule}\\
	&=\frac{\pr[i,n]{z\leq\tilde{p}_{i,n}\leq z'\mid D=0}\pr{D=0}}{\pr[i,n]{z\leq\tilde{p}_{i,n}\leq z'\mid D=i}\pr{D=i}+\pr[i,n]{z\leq\tilde{p}_{i,n}\leq z'\mid D=0}\pr{D=0}}
	\tag{LIE}\\
	&=\frac{\pr[i,n]{z\leq -n\tilde{v}_{i,n}\leq z'\mid D=0}\cdot(1-\delta_n)}{\pr[i,n]{z\leq p\leq z'\mid D=i}\cdot(\delta_n/n)+\pr[i,n]{z\leq -n\tilde{v}_{i,n}\leq z'\mid D=0}\cdot(1-\delta_n)}
	\tag{defn. of $D$ and $\tilde{p}_{i,n}$}\\
	&=\frac{\pr[i,n]{z\leq -n\tilde{v}_{i,n}\leq z'}\cdot(1-\delta_n)}{\pr{z\leq p\leq z'}\cdot(\delta_n/n)+\pr[i,n]{z\leq -n\tilde{v}_{i,n}\leq z'}\cdot(1-\delta_n)}
	\tag{since $D\perp (p,\tilde{v}_{i,n})$}\\
	&=\frac{\ex[i,n]{\pr[i,n]{z\leq -v_{0,n}-\sum_{i\neq j}\hat{v}_{j,n}\leq z',\sum_{j\neq i}\hat{v}_{j,n}}}\cdot(1-\delta_n)}{\pr{z\leq p\leq z'}\cdot(\delta_n/n)+\ex[i,n]{\pr[i,n]{z\leq -v_{0,n}-\sum_{i\neq j}\hat{v}_{j,n}\leq z',\sum_{j\neq i}\hat{v}_{j,n}}}\cdot(1-\delta_n)}\tag{LIE and defn. of $\tilde{v}_{i,n}$}\\
	&=\frac{\ex[i,n]{G_n\left(-z-\sum_{j\neq i}\hat{v}_{j,n}\right)-G_n\left(-z'-\sum_{j\neq i}\hat{v}_{j,n}\right)}\cdot(1-\delta_n)}{\pr{z\leq p\leq z'}\cdot(\delta_n/n)+\ex[i,n]{G_n\left(-z-\sum_{j\neq i}\hat{v}_{j,n}\right)-G_n\left(-z'-\sum_{j\neq i}\hat{v}_{j,n}\right)}\cdot(1-\delta_n)}\tag{defn. of $G_n$}
\end{align}
The upper bound is
\begin{align}
	\eqref{T3-E32}&\leq\frac{
		\frac{1}{2}\ex[i,n]{e^{-\frac{\left|-z-\sum_{j\neq i}\hat{v}_{j,n}\right|}{\beta_n}}}\left(\frac{z'-z}{\beta_n}+e\left(\frac{z'-z}{\beta_n}\right)^2\right)
		\cdot(1-\delta_n)}{\pr{z\leq p\leq z'}\cdot(\delta_n/n)+
		\frac{1}{2}\ex[i,n]{e^{-\frac{\left|-z-\sum_{j\neq i}\hat{v}_{j,n}\right|}{\beta_n}}}\left(\frac{z'-z}{\beta_n}-e\left(\frac{z'-z}{\beta_n}\right)^2\right)
		\cdot(1-\delta_n)}\tag{Claim \ref{T3-Cl7}}\\
	&\leq\eta^+_n(z'-z)\tag{since $\pr{\cdot}\geq 0$ and defn. of $\eta^+_n$}\\
	&\leq\eta^+_n(v_H-v_L)\tag{since $\eta^+_n$ increasing}
\end{align}
The lower bound is
\begin{align}
	\eqref{T3-E32}
	&\geq\frac{
		\frac{1}{2}\ex[i,n]{e^{-\frac{\left|-z-\sum_{j\neq i}\hat{v}_{j,n}\right|}{\beta_n}}}\left(\frac{z'-z}{\beta_n}-e\left(\frac{z'-z}{\beta_n}\right)^2\right)
		\cdot(1-\delta_n)}{\pr{z\leq p\leq z'}\cdot(\delta_n/n)+
		\frac{1}{2}\ex[i,n]{e^{-\frac{\left|-z-\sum_{j\neq i}\hat{v}_{j,n}\right|}{\beta_n}}}\left(\frac{z'-z}{\beta_n}+e\left(\frac{z'-z}{\beta_n}\right)^2\right)
		\cdot(1-\delta_n)}\tag{Claim \ref{T3-Cl7}}\\
	&=\frac{
		\frac{1}{2}\ex[i,n]{e^{-\frac{\left|-z-\sum_{j\neq i}\hat{v}_{j,n}\right|}{\beta_n}}}\left(\frac{z'-z}{\beta_n}-e\left(\frac{z'-z}{\beta_n}\right)^2\right)
		\cdot(1-\delta_n)}{
		\frac{z'-z}{v_H-v_L}\cdot\frac{\delta_n}{n}+\frac{1}{2}\ex[i,n]{e^{-\frac{\left|-z-\sum_{j\neq i}\hat{v}_{j,n}\right|}{\beta_n}}}\left(\frac{z'-z}{\beta_n}+e\left(\frac{z'-z}{\beta_n}\right)^2\right)
		\cdot(1-\delta_n)}\tag{dist. of $p$}\\
	&=\frac{
		\beta_n-e(z'-z)}{
		\frac{2}{1-\delta_n}\ex[i,n]{e^{-\frac{\left|-z-\sum_{j\neq i}\hat{v}_{j,n}\right|}{\beta_n}}}
		\cdot\frac{\beta_n^2}{v_H-v_L}\cdot\frac{\delta_n}{n}
		+\beta_n+e(z'-z)
		}\notag\\
	&\geq\eta^-_n(z'-z)
	\tag{since $\frac{a}{w+b}\geq\frac{a}{b}-w\cdot\frac{a}{b^2}$}\\
	&\white{=}-\frac{2}{1-\delta_n}\ex[i,n]{e^{-\frac{\left|-z-\sum_{j\neq i}\hat{v}_{j,n}\right|}{\beta_n}}}
	\cdot\frac{\beta_n^2}{v_H-v_L}\cdot\frac{\delta_n}{n}\cdot\frac{
		\beta_n-e(z'-z)}{
		\left(\beta_n+e(z'-z)\right)^2
	}\notag\\
	&=\eta^-_n(z'-z)-O\left(\frac{\beta_n\delta_n}{n}\right)\notag\\
	&\geq=\eta^-_n(v_H-v_L)-O\left(\frac{\beta_n\delta_n}{n}\right)\tag{since $\eta^+_n$ decreasing}
\end{align}
For convenience, let
$
\zeta_{i,n}(z,z')=\pr[i,n]{D=0\mid z\leq\tilde{p}_{i,n}\leq z'}
$.
Observe that
\begin{align}
	&\ex[i,n]{v_i\mid z\leq\tilde{p}_{i,n}\leq z'}\label{T3-E34}\\
	&=\zeta_{i,n}(z,z')\ex[i,n]{v_i\mid z\leq-n\tilde{v}_{i,n}\leq z', D=0}+(1-\zeta_{i,n}(z,z'))\ex[i,n]{v_i\mid z\leq p\leq z', D=i}
	\tag{LIE and defn. of $\tilde{p}_{i,n}$}\\
	&=\zeta_{i,n}(z,z')\ex[i,n]{v_i\mid z\leq-n\tilde{v}_{i,n}\leq z'}+(1-\zeta_{i,n}(z,z'))\ex[i,n]{v_i}
	\tag{$(p,D_i)\perp (v_i,\tilde{v}_{i,n})$}
\end{align}
We want to bound
\begin{align}
	&\left|\ex[i,n]{v_i\mid z\leq\tilde{p}_{i,n}\leq z'}-\ex[i,n]{v_i\mid Q_{i,n}}\right|\label{T3-E33}\\
	&=\left|\ex[i,n]{v_i\mid z\leq\tilde{p}_{i,n}\leq z'}-\ex[i,n]{v_i\mid v_L\leq\tilde{p}_{i,n}\leq v_H}\right|\tag{defn. of $Q_{i,n}$}\\
	&=\big|\zeta_{i,n}(z,z')\ex[i,n]{v_i\mid z\leq-n\tilde{v}_{i,n}\leq z'}+(1-\zeta_{i,n}(z,z'))\ex[i,n]{v_i}\notag\\
	&\white{=}
	-\zeta_{i,n}(v_L,v_H)\ex[i,n]{v_i\mid v_L\leq-n\tilde{v}_{i,n}\leq v_H}-(1-\zeta_{i,n}(v_L,v_H))\ex[i,n]{v_i}
	\big|\tag{eq. \eqref{T3-E34}}\\
	&\leq\left|\zeta_{i,n}(v_L,v_H)\right|\cdot\left|\ex[i,n]{v_i\mid z\leq-n\tilde{v}_{i,n}\leq z'}-\ex[i,n]{v_i\mid v_L\leq-n\tilde{v}_{i,n}\leq v_H}\right|
	\notag\\
	&\white{=}+\left|\zeta_{i,n}(z,z')-\zeta_{i,n}(v_L,v_H)\right|\cdot\left|\ex[i,n]{v_i\mid z\leq-n\tilde{v}_{i,n}\leq z'}\right|
	\notag\\
	&\white{=}
	+\left|\zeta_{i,n}(v_L,v_H)-\zeta_{i,n}(z,z')\right|\cdot\left|\ex[i,n]{v_i}\right|
	\tag{properties of $|\cdot|$}\\
	&\leq\left|\zeta_{i,n}(v_L,v_H)\right|\cdot\psi_n+2\left|\zeta_{i,n}(z,z')-\zeta_{i,n}(v_L,v_H)\right|\cdot(v_H-v_L)\tag{Claim \ref{T3-Cl6}}\\
	&\leq\psi_n+2\left(\eta^+_n(v_H-v_L)-\eta^-_n(v_H-v_L)+O\left(\frac{\beta_n\delta_n}{n}\right)\right)\cdot(v_H-v_L)\tag{bounds on \eqref{T3-E32}}\\
	&\leq\psi_n+O\left(\frac{1}{\beta_n}+\frac{\beta_n\delta_n}{n}\right)\tag{since $\frac{w+1}{w-1}-\frac{w-1}{w+1}=O\left(1/w\right)$}
\end{align}
	
	\subsection{Proof of Claim \ref{T3-Cl8}}

Agent $i$'s report $\hat{v}_{i,n}$ influences her allocation and the part of her transfers. attributable to the VCG mechanism (if $D=0$) or the BDM mechanism (if $D=i$).

I claim that her payoffs from her report can be represented as purchasing alternative $x=1$ at price $\tilde{p}_{i,n}$.
This is clearly true if agent $i$ is dictator ($D=i$) and participates in the BDM mechanism.
Suppose instead that there is no dictator ($D=0$) and agent $i$ participates in the VCG mechanism.
There are two cases.
\begin{enumerate}
	\item If $\tilde{v}_{i,n}<0$, then agent $i$ chooses between (i) alternative $x=1$ and paying $-n\tilde{v}_{i,n}$ and (ii) alternative $x=0$ and paying nothing.
	\item If $\tilde{v}_{i,n}\geq 0$, then she chooses between (i) alternative $x=1$ and paying nothing and (ii) alternative $x=0$ and paying $n\tilde{v}_{i,n}$.
	This is strategically equivalent to choosing between (i) $x=1$ and paying $n\tilde{v}_{i,n}$ and (ii) $x=0$ and paying nothing.
\end{enumerate}

Next, we show that the agent's optimal report $\hat{v}_{i,n}$ is close to her expected value $\ex[i,n]{v_i}$.
Fix a constant $t>0$.
For the rest of this proof, we condition agent $i$ having a signal realization such that
\[
\left|\ex[i,n]{v_i\mid v_L\leq-n\tilde{v}_{i,n}\leq v_H}-\ex[i,n]{v_i\mid z\leq-n\tilde{v}_{i,n}\leq z'}\right|\leq t+\psi_n+O\left(\frac{1}{\beta_n}+\frac{\beta_n\delta_n}{n}\right)
\]
It follows from the triangle inequality, Claim \ref{T3-Cl5}, and Claim \ref{T3-Cl9} that this holds with high probability.
Let $z<\ex[i,n]{v_i}-2t'$ where
$
t'=t\psi_n+O\left(1/\beta_n+\beta_n\delta_n/n\right)
$.
Agent $i$'s expected payoff from reporting $\ex[i,n]{v_i}-2t'$ is
\begin{align}
	&\ex[i,n]{\textbf{1}\left(v_L\leq\tilde{p}_{i,n}\leq\ex[i,n]{v_i}-2t'\right)\cdot\left(v_i-\tilde{p}_{i,n}\right)}\notag\\
	&=\ex[i,n]{\textbf{1}\left(z<\tilde{p}_{i,n}\leq\ex[i,n]{v_i}-2t'\right)\cdot\left(v_i-\tilde{p}_{i,n}\right)}+\ex[i,n]{\textbf{1}\left(v_L\leq\tilde{p}_{i,n}\leq z\right)\cdot\left(v_i-\tilde{p}_{i,n}\right)}\notag
\end{align}
This is greater than her expected payoff from reporting $z$ if the following is positive:
\begin{align}
	&\ex[i,n]{\textbf{1}\left(z<\tilde{p}_{i,n}\leq\ex[i,n]{v_i}-2t'\right)\cdot\left(v_i-\tilde{p}_{i,n}\right)}\notag\\
	&=\pr[i,n]{z<\tilde{p}_{i,n}\leq\ex[i,n]{v_i}-2t'}\cdot\ex[i,n]{v_i-\tilde{p}_{i,n}\mid z<\tilde{p}_{i,n}\leq\ex[i,n]{v_i}-2t'}\tag{LIE}\\
	&\geq\pr[i,n]{z<\tilde{p}_{i,n}\leq\ex[i,n]{v_i}-2t'}\cdot\left(\ex[i,n]{v_i\mid z<\tilde{p}_{i,n}\leq\ex[i,n]{v_i}-2t'}-\ex[i,n]{v_i}+2t'\right)\notag\\
	&\geq\pr[i,n]{z<\tilde{p}_{i,n}\leq\ex[i,n]{v_i}-2t'}\cdot\left(\ex[i,n]{v_i}-t'-\ex[i,n]{v_i}+2t'\right)\label{T3-E35}\\
	&\geq\frac{\delta_n}{n}\cdot\frac{\ex[i,n]{v_i}-2t'-z}{v_H-v_L}\cdot t'\tag{defn. of $\tilde{p}_{i,n},D$ and since $p\sim\textsc{Uniform}[v_L,v_H]$}\\
	&>0\tag{defn. of $z$}
\end{align}
Line \eqref{T3-E35} follows from the definition of $t'$, provided that
\[
\ex[i,n]{v_i\mid z\leq\tilde{p}_{i,n}\leq\ex[i,n]{v_i}-2t'}=\ex[i,n]{v_i\mid z<\tilde{p}_{i,n}\leq\ex[i,n]{v_i}-2t'}
\]
This follows from the fact that, conditional on any value $D\in\{0,i\}$ and $\sum_{j\neq i}\hat{v}_{j,n}$, $\tilde{p}_{i,n}$ is continuously-distributed and therefore puts zero probability on any given point $z$.

Altogether, we have shown that the agent $i$'s optimal report $\hat{v}_{i,n}\geq\ex[i,n]{v_i}-2t'$.
To complete the proof, we use a similar argument to show that $\hat{v}_{i,n}\leq\ex[i,n]{v_i}+2t'$.

	\printbibliography[segment=1]
	
\end{document}